\newcommand{\Xcomment}[1]{{}}
\newtheorem{theorem}{Theorem}[section]
\newtheorem{lemma}[theorem]{Lemma}
\newtheorem{definition}[theorem]{Definition}
\newtheorem{claim}[theorem]{Claim}
\newtheorem{conjecture}[theorem]{Conjecture}
\theoremstyle{remark}
\newtheorem{example}[theorem]{Example}
\theoremstyle{plain}
\newcommand{\noaccents}[1]{#1}
\newcommand{\newagentvar}[3][\noaccents]{%
\expandafter\newcommand\expandafter{\csname #2\endcsname}{#1{#3}}%
\expandafter\newcommand\expandafter{\csname #2s\endcsname}{#1{\boldsymbol{#3}}}%
\expandafter\newcommand\expandafter{\csname #2smi\endcsname}[1][i]{#1{\boldsymbol{#3}}_{-##1}}%
\expandafter\newcommand\expandafter{\csname #2i\endcsname}[1][i]{#1{#3}_{##1}}%
\expandafter\newcommand\expandafter{\csname #2ith\endcsname}[1][i]{#1{#3}_{(##1)}}%
}
\newcommand{\newvecagentvar}[3][\noaccents]{%
\expandafter\newcommand\expandafter{\csname #2\endcsname}{#1{\boldsymbol{#3}}}%
\expandafter\newcommand\expandafter{\csname #2s\endcsname}{#1{\boldsymbol{#3}}}%
\expandafter\newcommand\expandafter{\csname #2smi\endcsname}[1][i]{#1{\boldsymbol{#3}}_{-##1}}%
\expandafter\newcommand\expandafter{\csname #2i\endcsname}[1][i]{#1{\boldsymbol{#3}}_{##1}}%
\expandafter\newcommand\expandafter{\csname #2ith\endcsname}[1][i]{#1{#3}_{(##1)}}%
}
\title{VCG Under False-name Attacks: a Bayesian Approach\footnote{A preliminary and partial version of this paper is to appear in Gafni, Y., Lavi, R. and Tennenholtz M. 2020. VCG Under Sybil (False-name) Attacks - a Bayesian Analysis. In Proceedings of the 34th National Conference on Artificial Intelligence (AAAI-20)}} 
\author{\Large{Yotam Gafni$^1$, Ron Lavi$^{1,2}$ and Moshe Tennenholtz$^1$} \\ 
$^1$ Technion - Israel Institute of Technology \\
$^2$ University of Bath, UK \\
\{yotam.gafni@campus, ronlavi@ie, moshet@ie\}.technion.ac.il
}
\date{}
 \pgfplotsset{compat=1.13}
\begin{document}
\onehalfspacing
\maketitle

\begin{abstract}
VCG is a classical mechanism that is often applied to combinatorial auction settings. However, while the standard single-item Vickrey auction is false-name-proof, a major failure of multi-item VCG is its vulnerability to false-name attacks. This occurs already in the natural bare minimum model in which there are two identical items and bidders are single-minded. Previous solutions to this challenge focused on developing alternative mechanisms that compromise social welfare. We re-visit the VCG vulnerability and consider the bidder behavior in Bayesian settings. In service of that we introduce a novel notion, termed the {\em granularity threshold}, that characterizes VCG Bayesian resilience to false-name attacks depending on the bidder type distribution, with emphasis on the granularity of items demand. For general distributions, we characterize the existence of truthful BNE for two bidders, and show that for three or more bidders, truthful BNE does not necessarily exist. We then give a computational method for verifying the existence of truthful BNE for the class of beta distributions, and provide empirical results. The results demonstrate the importance of an intuitive form of false-name attack we call ``the split attack'' to the truthful BNE analysis of the uniform distribution. 
\end{abstract}
\textbf{Keywords}: False-name Attacks, Combinatorial Auctions, VCG, Incomplete Information

\section{Introduction}
In recent years, the scale of web auctions has grown significantly ---  whether for ads, flight tickets or cloud computing resources ---  among many other commodities. Contrary to the settings that apply to government leases or items sold in an auction house, in such web auctions it is much easier to create false identities and submit multiple bids under false names. This expands the action space of an auction participant and may turn mechanisms that are provably strategy-proof, assuming only a single bid per agent, to not be strategy-proof, which is the case of the most classic combinatorial auction mechanism, VCG. VCG is truthful and efficient assuming each bidder submits one bid under her name \cite{agt}. However, assuming false-name attacks (a.k.a. sybil attacks) are possible, that is no longer the case. To demonstrate this, Figure~\ref{fig:BeneficialFalseName} adapts an example of \cite{yokoo2002}.
\begin{figure}
    \hspace{2cm} \begin{tabular}{c c|c|c|c|c}
           & \textbf{Bidder} & \textbf{True type} & \textbf{Bids} & \textbf{Wins?} & \textbf{Payment} \tabularnewline
         \hline
            \hspace{-4cm} \ldelim\{{2}{3mm}[Truth] & Bidder 1 & (2,0.4) & (2,0.4) &  &  \tabularnewline
         \hline
         & Bidder 2 & (2,0.5) & (2,0.5) & X & 0.8 \tabularnewline
    \hline
    \hline

         \hline
          \hspace{-4cm} \ldelim\{{3}{3mm}[Attack] & Bidder 1 & (2,0.4) & (2,0.4) & & \\
        \hline
         &  Bidder 2 & (2,0.5) & (1,0.5) & X & 0.3 \\
         \hline
         & &  & (1,0.5) & X & 0.3 \\
    \end{tabular}
    \caption{A beneficial false-name attack}
    \label{fig:BeneficialFalseName}
    \end{figure}
    
    \begin{figure}
\hspace{2cm}
    \begin{tabular}{c|c|c|c|c|c}
         & \textbf{Bidder} & \textbf{True type} & \textbf{Bids} & \textbf{Wins?} & \textbf{Payment}\\
         \hline
         \hspace{-4cm} \ldelim\{{2}{3mm}[Truth] & Bidder 1 & (1,0.4) & (1,0.4) & & \\
        \hline
         & Bidder 2 & (2,0.5) & (2,0.5) & X & 0.4 \\
    \hline
    \hline
         \hline
         \hspace{-4cm} \ldelim\{{3}{3mm}[Attack] & Bidder 1 & (2,0.4) & (2,0.4) & & \\
         \hline
         & Bidder 2 & (2,0.5) & (1,0.5) & X & 0.4 \\
         \hline
         & &  & (1,0.5) & X & 0.4 \\
    \end{tabular}
    \caption{A detrimental false-name attack}
    \label{fig:BayesianMotivation}
\end{figure}
    In this example, two bidders bid for two auctioned items. In the figure, bidder types are denoted by $(2,\theta)$ with $\theta = 0.4, 0.5$ respectively, which means that
    each bidder requires both items (has utility zero from obtaining only one of the items), and is willing to pay $\theta$ per item, so overall values the two items as $2\theta$. When bidder $2$ bids truthfully, it wins the two items and pays $0.8$. But when bidder $2$ attacks by creating two identities, each bidding $(1,0.5)$, it can be directly calculated that each of these bids wins one item and pays $0.3$, resulting in a total payment of $0.6$ for bidder $2$. By performing this false-name attack, bidder $2$ was able to reduce its payment.

A main effort to address this issue is by Yokoo et al.~who develop various mechanisms that are {\em false name attack proof}~\cite{yokoo2001,yokoo2003}. The main disadvantage of these mechanisms is their low social welfare, which intuitively can be explained by the fact that they incorporate the false-name attacks to be built into the mechanism. In \cite{Iwasaki} the main result shows that under reasonable conditions the worst case social welfare of any false-name proof  combinatorial auction with $m$ items is at most a fraction of $\frac{2}{m+1}$ of the optimal social welfare. 

As for VCG, in \cite{vetta2014} the authors show that under some assumptions on the bidders' valuations, and in a setting with complete information, if a  pure Nash equilibrium exists when false-name bids are considered, VCG still has a reasonable welfare guarantee. Naturally, Nash equilibria might not exist.
Nevertheless, the above may hint that VCG  may actually be a good false-name proof mechanism in situations in which a pure equilibrium exists. 

Notice that when the behavior of the other bidders is uncertain, in some scenarios a false-name attack may benefit the attacker, while on others the attack can harm her. 
So while truthful bidding is certainly no longer a dominant strategy, it is worth asking under what circumstances it might be a {
\em Bayesian Nash Equilibrium}.
That is, when the auction participants only have probabilistic information regarding other bidders' valuations, when is it the best strategy for a bidder to bid truthfully 
(in a single bid) in the VCG mechanism, assuming all others are truthful as well? Notice that social welfare is optimized in such an equilibrium. If a truthful equilibrium exists, it is quite natural to assume that it will be realized/selected, for at least two reasons. First, if it can be shown that truthful bidding is a Bayesian Nash equilibrium for a {\em class} of distributions, and thus selecting the truthful equilibrium does not require knowing the exact underlying distribution of types in this class; a (common) knowledge that the underlying distribution belongs to this class is sufficient. The truthful equilibrium is thus a natural Schelling point.
On the other hand, the other non-truthful equilibria are more likely to rely on the exact details of the distribution, and may be harder to come by.

In this work, we introduce ---  for the first time, to the best of our knowledge ---  a Bayesian equilibrium analysis of VCG under false-name bids. In service of that we use a bare-minimum model: an auction of two identical items and single-minded bidders. 
In this model, each bidder is interested in either a single item (in this case a second item does not add extra value) or in the pair of items (in this case a single item does not have any positive value). The probabilistic setting is given by a per-item valuation distribution, and a probability $q$ for a bidder to have a single-item demand. 

Hence, our model captures a step forward from the standard Vickrey auction setting to the general multi-item case. As we will see, the analysis of such a model already brings out intricate techniques and conclusions, which are essential to addressing the problem in its full generality. Specifically, while the above model is indeed a significant simplification, it allows us to make several contributions which we believe are meaningful:
\begin{itemize}
\item We show several impossibilities even for this simple model.
\item We believe that these impossibilities should be viewed as worst-case exceptions and that under natural restrictions on the assumed model the general concept of truthful bidding as an equilibrium in VCG auctions with false-name bids is useful. Our model allows us to demonstrate this point by showing several positive results. It thus advances the state of art and puts forth the long term goal of determining the exact possibility-impossibility border. 
\item Even this simple model gives rise to several high-level conceptual conclusions which we discuss below, regarding the effect of the number of bidders on truthful bidding, the structure of beneficial false-name attacks, etc.
\end{itemize}

\subsection{Overview of Main Results}

To demonstrate the promising nature of the Bayesian approach, we revisit the example given before. In Figure~\ref{fig:BayesianMotivation}, bidder 2 with type $(2,0.5)$ now faces a different adversary with the per-item value $0.4$, but with a single item demand rather than a double item demand. As in the first example, bidder 2 wins the two items both when bidding truthfully and attacking, but now pays $0.4$ more when attacking. 

This hints at the potential benefit of a Bayesian approach: The bidder might not submit false-name bids as her potential losses could be higher than her potential gains. Moreover, the only difference between the two examples is the number of items demanded by the adversary bidder. When the adversary demands two items, the false bidders are able to ``free-ride'' off each other to reduce their payments. On the other hand, when the adversary demands one item, the false bidders take ``double responsibility'' for the losing bid. In this paper we formalize using a notion of a ``granularity threshold'' the intuition that facing more granular competing bids, i.e., a larger probability $q$ of facing bidders that demand one item, plays an important part in deterring false-name attacks. 

Our main result states that for two bidders ($n=2$) and two items ($m=2$) there is a truthful Bayesian-Nash equilibrium (BNE) whenever the probability $q$ of a bidder to demand one item rather than two is at least $\frac{2}{3}$, for {\em any} per-item valuation distribution (Section~\ref{sec:global_granularity_n=2}). 
We then show that for any larger number of bidders $(n>2)$, such a global granularity threshold can no longer be derived and valuation distributions do matter (Section \ref{sec:impossibility_result}). Thus, in sharp contrast to the case of two bidders, when we have at least three bidders truthful BNE does not exist for {\em all} per-item value distributions.

Should we interpret this impossibility result as saying that truthful BNE only exists in very limited cases? We believe not, and to further study this issue we restrict our attention to a large class of per-item value distributions commonly termed beta distributions with support $[0,1]$. This family includes many natural distributions and is used widely 
in statistics and economic theory \cite{BetaDist,BetaDist2}. It includes all distributions $F(x)=x^\alpha$ for any integer $\alpha \geq 1$ (thus also the uniform distribution by taking $\alpha=1$), and with large $\alpha=\beta$ parameter values it resembles a Normal distribution. 
Interestingly, it admits useful connections to computer algebra techniques, which allows us to provably present good granularity thresholds, i.e. low $q$ values, in tested cases (Section \ref{sec:beta_distributions}).
While it remains an open question to determine granularity thresholds for all $\alpha,\beta$ parameters, one high-level conclusion from our analysis is that the granularity threshold is significantly smaller than 1 for several such parameters.
The disadvantage of the computer algebra approach is that it is able to give results for concrete polynomials, i.e., for some fixed parameter values (including the number of bidders). In Section~\ref{sec:split_attack} we show results for a general number of bidders $n$ for the uniform distribution. The analysis reveals some interesting properties of a specific form of attack which we term \emph{a split attack}.

Our analysis yields conceptual conclusions which we discuss in Section~\ref{sec:Discussion}. For example, it contradicts an initial intuition that may suggest that false-name attacks are beneficial when there are fewer bidders and in particular in the case of two bidders. This intuition follows the observation that the benefit of a false-name attack comes from reducing payments by free-riding on false identities. Such tailored free riding seems less plausible when there are many competing bidders. However, our empirical results in Section~\ref{sec:beta_distributions} clearly demonstrate that this intuition is false and that the granularity threshold increases with the number of bidders. Our theoretical results also contradict this intuition by showing a general possibility for $n=2$ and a general impossibility for $n \geq 3$. 


\section{Model and Preliminaries}
We study a multi-item auction with two identical items and $n$ single-minded bidders. The type of a bidder $i$ is denoted by
$\hat{\theta_i} = (g_i, \theta_i)$ where $g_i \in \{1,2\}$ is the number of items desired by the bidder, and $\theta_i$ is the per-item value of the bidder. We assume that the per-item values are normalized to be in $[0,1]$. The utility of a bidder is quasi-linear, i.e. if the bidder receives $j$ items and pays a price $p_i$, her utility is 
$$ 
\begin{cases}
-p_i & j < g_i \\
g_i \cdot \theta_i - p_i & otherwise. \\
\end{cases}
$$


In this paper, we focus on the analysis of the VCG mechanism in a setting where each bidder may submit multiple single-minded bids in an anonymous fashion so that VCG treats each bid as if it comes from a separate bidder. Formally, VCG is the mechanism that given bids $B = (\hat{\theta}_1, ..., \hat{\theta}_N)$ (where $N \geq n$) allocates the items to maximize the social welfare:

    $$ SW(B) = \max\limits_{\substack{I \subseteq \{1, ..., N\} \\ \sum_{i\in I}g_i \leq 2}} \sum_{i\in I} g_i\cdot \theta_i. $$

Let $W(B)$ be the index set $I$ of bidders in a maximal social welfare allocation, and let $L(B)$ be its complement. Every bid $b$ prompts the following payment:

$$
\begin{cases}
SW(B\setminus \hat{\theta}_b) - (SW(B) - g_b \cdot \theta_b) & b \in W(B) \\
0 & b \in L(B). \\
\end{cases}
$$

That is, the bidder pays the difference between the others' utilities in her absence and the others' utilities when it participates. Every real bidder $i \in \{1,...,n\}$ receives her value from the union of items won by her submitted bids, and pays the respective sum of prices.

Throughout our analysis, given the symmetric ex-ante situation between the bidders, we fix one bidder, bidder $n$, and analyze her utility. The bidder may declare her true type or submit any amount of false-name bids. For the analysis of the truthful Bayesian-Nash equilibrium (BNE) we assume all other bidders reveal their true types. We term the case where the bidder submits false-name bids --  an ``attack''. In our analysis, we refer to the other bidders as the \emph{``adversary''} bidders. We introduce the notation $\tilde{n} = n-1$, the number of adversary bidders. 
We refer to bidder $i$ with $g_i=1$ as a ``1-type'', and denote her value by $v_i$. Similarly, we refer to a bidder $i$ with $g_i=2$ as a ``2-type'', and denote her per-item value $w_i$. We sometimes use two separate indices, for 1-types and for 2-types. Formally, let $k = |\{1 \leq i \leq \tilde{n}, g_i = 1\}|$, the number of 1-type bids. We denote the 1-type adversary bids by $v_1,...,v_k$ and the 2-type values by $w_1,...,w_{\tilde{n}-k}$, both by decreasing order. If we refer to a bid value which doesn't exist, e.g. we refer to $v_1$ when there are no adversary 1-type bids, we assume by default its value is 0. 

We denote the true type of bidder $n$ as $\hat{\theta}$. 
Bidder $n$ could bid $(\hat{\theta'_1},...,\hat{\theta'_m})$ under $m$ identities.  
We denote bidder $n$'s utility (with own bids, true type, as parameters and the adversary bids as the variables), 
$$u_{\hat{\theta'_1},...,\hat{\theta'_m}}^{\hat{\theta}}(\hat{\theta_1}, \ldots, \hat{\theta}_{\tilde{n}}).$$

\subsection{A Bayesian Analysis} 




We now move to formalize the intuition of the examples given in the introduction. We do so using a standard Bayesian analysis approach, and assume that the true types of the bidders are independently and identically drawn from some distribution. Most generally, this distribution can be parameterized by $0\leq q \leq 1$, which is the  probability of a bidder $i$ to have $g_i=1$, and by two per-item value distributions $F_1, F_2$ with normalized support in $[0,1]$ which give the per-item value distribution of the bidder given her item demand $g_i$. In this paper we focus on a special case where $F_1=F_2\stackrel{def}{=}F$ (see further discussion of the general case in Section~\ref{sec:Discussion}). Without loss of generality we assume that $F(\theta) < 1$ for all $\theta<1$.
As is standard, we assume each bidder knows her own realized type and the distribution parameters $q,F$. In this setting, we examine whether truthful bidding is a Bayesian Nash equilibrium (BNE). This requires that for every realized true type in the support of the distribution, the expected utility of a truthful bidder is not less than her expected utility resulting from any attack, where the expectation is taken over the types of the other bidders (adversary types) and assuming all adversaries submit truthful bids.
%
Formally, consider first a bidder that knows the realized number of adversary 1-type bidders $k$. Given her true type $\hat{\theta}$, her bids $S$ (either the true type or some form of attack), the per-item distribution $F$, and the number of bidders $n$ (recall that $\tilde{n}\stackrel{def}{=}n-1$), her expected utility is:
\[
\label{eq:expectedUtilK}
\begin{split}
& E^{k,F,n}_{\hat{\theta},S} \stackrel{def}{=} E_{\substack{v_1,..., v_k \sim F \\ w_1,...,w_{\tilde{n}-k} \sim F}}[u^{\hat{\theta}}_{S}((1,v_1),...,(1,v_k),(2,w_1),...,(2,w_{\tilde{n}-k}))] 
\end{split}
\]
In words, the expectation over $k$ 1-type adversaries and $\tilde{n}-k$ 2-type adversaries with per-item values sampled from $F$. Further define, given the parameter $q$ (rather than some fixed $k$ number of 1-type bidders),

$$E^{q,F,n}_{\hat{\theta},S} \stackrel{def}{=} E_{k \sim Binomial(\tilde{n},q)} [E^{k,F,n}_{\hat{\theta},S} | k] = \sum_{k=0}^n \binom{\tilde{n}}{k} q^k (1-q)^{\tilde{n}-k} E^{k,F,n}_{\hat{\theta},S},$$
namely the number $k$ of 1-type adversary bidders is drawn from the binomial distribution over $\tilde{n}$ with parameter $q$ (since bidder types are i.i.d.).

\begin{definition}For a set of parameters $q,F,n$, we say that truthful bidding is a Bayesian Nash Equilibrium (BNE) if for any bidder's type $\hat{\theta}$ in the support of $F$ and attack $\hat{\theta}_1,\ldots,\hat{\theta}_m$ (where each attack bid is in the support of $F$ as well),
\[
\begin{split}
E^{q, F,n}_{\hat{\theta},truth}[u] \geq E^{q,F,n}_{\hat{\theta},(\hat{\theta}_1',\ldots,\hat{\theta}_m')}[u].
\end{split}
\]
\end{definition}
We say that an attack is ``beneficial'' for some given adversary bids if there exists a true type such that the attack increases a bidder's utility over her truthful bid. If we do not specify concrete adversary bids, an attack is ``beneficial'' if it increases the bidder's expected utility.

Our first result shows that it is possible to reduce the analysis to considering only attacks of the type $(1,x),(1,y)$. For this type of attacks, that constitute two 1-type bids, we use the notation $attack(x,y)$. It also shows that we can further restrict parameters if bidder $n$'s true type is a 1-type. In this case it is enough to consider $\theta = 1$:  

\begin{restatable}[]{theorem}{attackformat}
\label{thm:attack_format}
For a set of parameters $q,F,n$, truthful bidding is a Bayesian Nash Equilibrium (BNE) if and only if it holds that 
\[
\begin{split}
E^{q, F,n}_{\hat{\theta},truth}[u] \geq E^{q,F,n}_{\hat{\theta},attack(x,y)}[u].
\end{split}
\]


\noindent
for any $x,y,\hat{\theta}=(g,\theta)$ such that $x,y, \theta \in supp F$ and, if $g=1$, then $y \leq x \leq \theta = 1$.

\end{restatable}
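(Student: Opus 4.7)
The proof has a two-stage structure: I first restrict the relevant attacks to the form $attack(x,y)$, and then restrict attention for 1-type bidders to $\theta = 1$.

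\textbf{Stage 1: reducing to $attack(x,y)$.} Only two items are available, so in any VCG allocation at most two attacker bids can win. I would proceed by case analysis on the set $S$ of winning attacker bids. If $|S| = 0$, the attacker collects zero utility, which is weakly dominated by truthful. If $|S| = 1$, the outcome is indistinguishable from a single-bid deviation, which is weakly dominated by truthful because VCG with single bids is dominant-strategy truthful. If $|S| = 2$, both winning bids must be 1-type (a 2-type bid would claim both items by itself), yielding precisely the $attack(x,y)$ format. I also need to argue that additional losing ``filler'' bids do not help: such bids only weakly enlarge the bid set $B \setminus \{\text{winner}\}$ used in the VCG externality computation, so they only weakly increase the winners' payments and thus can never help the attacker. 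The restriction $y \leq x$ then follows by symmetry between the two attack bids.

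\textbf{Stage 2: reducing the 1-type case to $\theta = 1$.} For a 1-type true type $(1,\theta)$ and attack $(1,x),(1,y)$, the expected utility admits the linear-in-$\theta$ decomposition
\[
E^{q,F,n}_{(1,\theta), attack(x,y)}[u] = \theta\, P_a(x,y) - T_a(x,y),
\]
where $P_a$ is the probability that at least one attack bid wins and $T_a$ the expected sum of VCG payments; both are independent of $\theta$. By the envelope theorem for single-bid VCG, the truthful utility satisfies $E^{q,F,n}_{(1,\theta), truth}[u] = \int_0^\theta Q(t)\, dt$ where $Q(t)$ is the winning probability of the single bid $(1,t)$. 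The key ``mimic'' observation is that the choice $(x,y) = (\theta, 0)$ reproduces the truthful outcome because the value-$0$ bid always loses, so the supremum $\phi(\theta) := \sup_{(x,y) \in (\mathrm{supp}\, F)^2,\, y \leq x} [\theta P_a(x,y) - T_a(x,y)]$ satisfies $\phi(\theta) \geq \int_0^\theta Q(t)\, dt$, and the ``attack gap'' $G(\theta) := \phi(\theta) - \int_0^\theta Q(t)\, dt$ is non-negative. Using convexity of $\phi$ (as a supremum of affine functions of $\theta$) together with an envelope argument, I would show $\phi'(\theta) \geq Q(\theta)$ wherever the derivatives exist, giving $G'(\theta) \geq 0$; combined with $G \geq 0$, the hypothesis $G(1) \leq 0$ forces $G \equiv 0$ on $\mathrm{supp}\, F$, which is the required BNE condition at every 1-type true type.

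\textbf{Main obstacle.} The most delicate step is establishing the slope inequality $\phi'(\theta) \geq Q(\theta)$: the mimic observation gives only a pointwise inequality $\phi \geq u_{truth}$, not a slope inequality, and the maximizer in the definition of $\phi$ may shift discontinuously with $\theta$, requiring care with the envelope theorem. The argument also implicitly requires that $0 \in \mathrm{supp}\, F$ (or an approximation argument otherwise) so that the $(1,0)$ filler bid is admissible. Should this envelope route prove too delicate, a fallback is to exploit the theorem's hypothesis also at every $(2, \theta')$ and argue that any attack beneficial for some $(1, \theta)$ must also be beneficial for some $(2, \theta')$ or for $(1, 1)$, closing the reduction without the full envelope machinery.
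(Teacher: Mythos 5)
Your Stage~1 case split is on the \emph{realized} winning set $S$, which is not a property of the attack: the same submitted bid profile can have $|S|=0$, $1$, or $2$ depending on the adversary draw, so a realization-by-realization reduction to ``$(1,x),(1,y)$ format'' does not compose into a single fixed deviation that dominates the original attack. In particular, your ``filler bids only weakly increase payments'' step is not sufficient when the profile contains a $2$-type bid $(2,w)$ alongside two $1$-type bids $(1,a),(1,b)$: you cannot simply drop $(2,w)$, because it may be the winning bid in some other realizations. The paper's Lemma A.1 resolves exactly this by branching on the \emph{structure of the submitted values} rather than on the realization. After truncating to $\{w_1',v_1',v_2'\}$, it splits on $2w_1' \le v_1'+v_2'$. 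In that branch, $(2,w_1')$ can never be in any maximizing allocation because the attacker's own pair already dominates it, so it can be dropped \emph{pointwise} and $(1,v_1'),(1,v_2')$ dominates for every adversary realization. In the complementary branch $2w_1' > v_1'+v_2'$, the attacker's own bids conflict, so at most one attacker bid can ever win; each realization is then beaten by some single bid, and truthfulness of single-bid VCG makes the truthful bid dominate pointwise. Your sketch gestures at ``a 2-type bid would claim both items by itself'' but does not carry out this dichotomy, which is where the actual work is.

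Stage~2 is a genuinely different route. The paper's Lemma A.4 gives a direct pointwise comparison: if an attack $(x,y)$ with $x>\theta$ is beneficial at $(1,\theta)$, then the same attack is at least as beneficial at $(1,x)$, because raising the true value to $x$ gains exactly $x-\theta$ for the attacker whenever the attack wins, while the gain for truthful bidding (now also bidding $(1,x)$) is shown by case analysis to never exceed $x-\theta$. This yields $y\le x\le\theta'$, and the normalization to $\theta'=1$ is handled by the VCG homogeneity and distribution-truncation lemmas (A.2, A.3). Your envelope/convexity argument would be an alternative proof of the same reduction, but the slope inequality $\phi'(\theta)\ge Q(\theta)$ that you flag is indeed the crux, and it does not follow from the mimic bound alone: the envelope derivative at the attack maximizer $(x^*,y^*)$ equals the attack's win probability, which may well be \emph{smaller} than the truthful win probability $Q(\theta)$ (traded off against smaller expected payment). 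Your argument also needs $0\in\mathrm{supp}\,F$ for the mimic bid $(1,0)$ to be admissible, which is not assumed. The fallback you mention (``any beneficial attack at $(1,\theta)$ is also beneficial at some $(2,\theta')$ or at $(1,1)$'') is closer in spirit to what the paper actually does, but it still needs the concrete monotonicity comparison of Lemma A.4, not merely an appeal to the $2$-type hypothesis.
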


\noindent
The proof is given in Appendix A. 


\subsection{A Criterion for Bayesian Resilience}

As noted before, and demonstrated by the examples in Figures~\ref{fig:BeneficialFalseName}~and~\ref{fig:BayesianMotivation}, false-name attacks perform poorly when facing adversaries with 1-item demand.
In the extreme case, where adversaries are guaranteed to be 1-type, false-name attacks are not beneficial:

\begin{restatable}[]{lemma}{qanalysis}
\label{lem:q=1_analysis}
When all adversaries are 1-type, an attack is never beneficial, regardless of $F$ and $n$. Thus, $q=1$ always induces a truthful BNE (for every $F$ and $n$). 
\end{restatable}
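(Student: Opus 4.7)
The plan is to invoke Theorem~\ref{thm:attack_format} to reduce the space of attacks to those of the form $attack(x,y)$, i.e.\ two unit-demand bids $(1,x),(1,y)$ with $x,y\in[0,1]$, and then to prove the stronger pointwise statement: for every realization of the adversary values, truthful utility weakly dominates the utility of any such attack. Because $q=1$ forces all $\tilde n$ adversary bids to be 1-type, and the attack bids are 1-type by the reduction, the VCG outcome is determined by sorting the $\tilde n+2$ unit-demand bids and giving one item to each of the top two, with each winner paying the third-highest bid.

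I would then split on bidder $n$'s true type. First, if $\hat\theta=(1,1)$, which is w.l.o.g.\ by Theorem~\ref{thm:attack_format}, her truthful bid of $1$ is always among the top two, giving $u_T = 1 - v_2$ (with the convention $v_2=0$ when $\tilde n<2$). For the attack $(1,x),(1,y)$ with $y\le x\le 1$ I would subdivide on how many of $x,y$ land in the top two: (i) if both do, so $y\ge v_1$, her two false names win both items but as a 1-type she values only one, so $u_A = 1 - 2v_1$, and $2v_1 \ge v_1+v_2$ gives $u_T\ge u_A$; (ii) if only $x$ does, she wins one item and pays the third-highest bid $\max(y,v_2)\ge v_2$, so $u_A = 1-\max(y,v_2)\le u_T$; (iii) if neither does, $u_A=0\le u_T$. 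Next, if $\hat\theta=(2,\theta)$, truthful bidding gives $u_T=\max(2\theta-v_1-v_2,0)$. For the attack the same three subcases apply: (i) if both $x,y$ are in the top two, she wins both items (value $2\theta$) and pays $2v_1$, so $u_A=2\theta-2v_1$, which is $\le 2\theta-v_1-v_2$ since $v_2\le v_1$, and is $\le 0$ in the regime $2\theta<v_1+v_2\le 2v_1$ where $u_T=0$; (ii) if only one false name wins she gets exactly one item, which as a 2-type she values at $0$, so $u_A=-c\le 0\le u_T$ where $c$ is the third-highest bid; (iii) if neither wins, $u_A=0\le u_T$.

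Aggregating, $u_T\ge u_A$ pointwise in every subcase, so taking expectation over the adversary values sampled from $F^{\tilde n}$ yields $\paramExpK_{\hat\theta,truth}\ge \paramExpK_{\hat\theta,attack(x,y)}$ for every $x,y$, which by Theorem~\ref{thm:attack_format} implies the truthful BNE condition for $q=1$. The analysis is entirely case work rather than a single hard step; the only mild obstacle is careful bookkeeping of the edge cases $\tilde n\in\{0,1\}$ where some of $v_1,v_2$ default to $0$, and of the boundary orderings such as $x\ge v_1>y$ versus $v_1>x\ge v_2$, both of which collapse into the same payment expression $1-\max(y,v_2)$ and are absorbed into the general argument above.
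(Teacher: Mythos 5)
Your proof is correct and follows essentially the same route as the paper: invoke Theorem~\ref{thm:attack_format} to restrict to attacks $(1,x),(1,y)$ with $\theta=1$ for 1-types, then show the pointwise dominance $u_T \ge u_A$ for every realization of the (all 1-type) adversary values by casing on how many of $x,y$ enter the top two, and conclude by taking expectations. The paper organizes the 1-type case analysis as a single piecewise function $\tilde\Delta(v_1,v_2)$ over orderings of $x,y,v_1,v_2$ and shortcuts the 2-type case by noting the attack is nonpositive unless both bids win, but these are cosmetic differences from your argument.
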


\noindent
The proof of the lemma is given in Appendix B.

A possible goal in our context is to characterize the class of all settings (i.e., common priors) $q,F,n$ that admit a truthful BNE. The above lemma motivates us to try to consider ``large enough'' $q$ values, as captured by the following definition:

\begin{definition}
The ``granularity threshold'', 
$$q^*_{n,g,\theta,x,y,F} = min \left\{ \bar{q} \text{ s.t. } \forall q \in (\bar{q}, 1], E^{q, F,n}_{\hat{\theta},truth}[u] \geq E^{q,F,n}_{\hat{\theta},attack(x,y)}[u] \right\}.$$
In words, the minimal $\bar{q}$ such that for all $q > \bar{q}$ an attack $x,y$ is not beneficial for a true type $(g,\theta)$ given that there are overall $n$ bidders and distribution $F$.

When we omit some of the parameters in the subscript of $q^*$, the omitted parameters should appear in the right side with a for-all qualifier. Most importantly,

$$q^*_{n,F} = min \left\{ \bar{q} \text{ s.t. } \forall q \in (\bar{q}, 1], g,\theta,x,y,\ E^{q, F,n}_{\hat{\theta},truth}[u] \geq E^{q,F,n}_{\hat{\theta},attack(x,y)}[u] \right\}.$$








\end{definition}

The granularity threshold is a possible way to measure ``Bayesian resilience'', since as $q^*$ decreases, there are more settings in which truthful bidding is a BNE in VCG, in particular, all settings $q,F,n$ for which $q \geq q^*_{n,F}$. Roughly speaking, it lower bounds the measure of the class of common-prior distributions that guarantee truthful bidding, in the following sense. Fix some per-item value distribution F. Then, if the common-prior distribution over types, parameterized by $q,F$, has $q \geq q^*_{n,F}$ then truthful bidding is a BNE. Similarly, taking $q \geq q^*_{n}$, truthful bidding is a BNE for every
common-prior distribution over types parameterized by $q,F$ (for every F). Thus, a smaller value of $q^*$ means that a larger class of common-prior distributions yield truthful bidding as a BNE of the VCG auction with false-name bids.

\section{A Full Characterization of the Granularity Threshold for $n=2$}
\label{sec:global_granularity_n=2}
In this section, we characterize $q^*_{n=2}$, i.e., the granularity threshold assuming 2 bidders 
that can be of any true type, use any form of attack, and the distribution over bidder types can be any distribution $F$. To analyze this, recall that we assume that bidder~1 truthfully reveals her type which is drawn from the given distribution and under this assumption we examine which bids maximize the expected utility of bidder~2. We do so by considering two cases - one where we restrict bidder $2$ to be a 1-type ($q^*_{n=2,g=1})$, and one where we restrict bidder $2$ to be a 2-type ($q^*_{n=2,g=2})$. Overall we show:

\begin{theorem}
\label{thm:global_granularity_n=2}
$$q^*_{n=2} = \max_{g'\in \{1,2\}} q^*_{n=2,g=g'} = \max \{\frac{1}{2},\frac{2}{3}\} = \frac{2}{3}.$$
\end{theorem}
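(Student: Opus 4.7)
The first equality is by definition: with $g$ omitted from the subscript of $q^*$, the threshold quantifies universally over $g \in \{1,2\}$, so $q^*_{n=2} = \max(q^*_{n=2,g=1}, q^*_{n=2,g=2})$. It therefore suffices to establish $q^*_{n=2,g=1} = 1/2$ and $q^*_{n=2,g=2} = 2/3$.

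In each subcase, I first apply Theorem~\ref{thm:attack_format} to restrict attention to deviations of the form $attack(x,y)$ with $y \le x$ in $\mathrm{supp}(F)$, and $\theta = 1$ when $g=1$. Since $\tilde n = 1$, the number of 1-type adversaries is $\mathrm{Bernoulli}(q)$, so for any strategy $S$,
\[ E^{q,F,2}_{(g,\theta),S}[u] \;=\; q\,E^{k=1,F,2}_{(g,\theta),S} + (1-q)\,E^{k=0,F,2}_{(g,\theta),S}. \]
Setting $\Delta_0 := E^{k=0}_{attack(x,y)} - E^{k=0}_{truth}$ and $\Delta_1 := E^{k=1}_{truth} - E^{k=1}_{attack(x,y)}$, truthful bidding is a BNE at $q$ iff $(1-q)\,\Delta_0 \le q\,\Delta_1$ for every admissible $(\theta,x,y)$. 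Defining $R^*_{g'} := \sup_{F,\theta,x,y}\Delta_0/\Delta_1$ over inputs with $\Delta_1>0$, a monotone rearrangement gives $q^*_{n=2,g=g'} = R^*_{g'}/(1+R^*_{g'})$, so the task reduces to proving $R^*_1 = 1$ and $R^*_2 = 2$.

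To compute $\Delta_0$ and $\Delta_1$ I would write the VCG payments in closed form by casework on the ordering of $x, y$ against the adversary's per-item value ($v$ for a 1-type, $w$ for a 2-type) and then integrate against $F$. The driving intuition, visible already in the introductory examples, is: against a 2-type adversary, splitting a single 2-type bid into two 1-type bids lets the halves free-ride so each pays only $\max(0, 2w - \text{other half})$ instead of the full $2w$; against a 1-type adversary the split is penalized because either the losing half raises the winning half's payment to $y$, or (when both halves win) each pays the adversary's value $v$, effectively doubling the truthful payment. For $g=2$ the saving per 2-type adversary is up to $2w$ while the matching loss per 1-type adversary is at most $w$, giving the factor $2$; for $g=1$ the truthful payment against a 1-type is already zero, which balances gain and loss to yield ratio $1$.

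The upper bounds $R^*_{g'} \le R$ I would obtain by a pointwise inequality: after conditioning on the adversary's type and value, the attack-vs-truth utility gap against a 2-type with value $w$ is at most $R$ times the gap against a 1-type at the matched value, and since both gaps are integrated against the common $F$ this yields $\Delta_0 \le R\,\Delta_1$ uniformly in $F$. For tightness, I would construct a sequence of non-atomic distributions concentrating mass near a point $z$ together with auxiliary mass near the value needed for $x, y$ to lie in the support: for $g=2$, take $\theta = z$, $x = y = 2z$ with most mass near $z$, giving gain $\to 2z$ against a 2-type and loss $\to z$ against a 1-type; for $g=1$, take $\theta = 1$, $x = y = z$ with most mass near $z/2$, giving matching gain and loss both $\to z$. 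The main obstacle is the universal quantifier over $F$: the pointwise comparison must survive every ordering of $x$, $y$, and the adversary's bid in the attacked allocation, requiring a delicate multi-case split; moreover, the tightness construction must respect the standing assumption $F(\theta)<1$ for all $\theta<1$, so it cannot use a single atomic distribution but only a limiting sequence of non-atomic ones.
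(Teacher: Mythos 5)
Your approach is essentially the paper's: reduce to $attack(x,y)$ via Theorem~\ref{thm:attack_format}, decompose the expectation over the adversary's type (there is only one adversary since $\tilde n = 1$), and read off the threshold from a pointwise inequality at matched adversary values, which the paper writes as $u^{truth}_{(2,\theta_{adv})} - u^{attack(x,y)}_{(2,\theta_{adv})} + R\bigl(u^{truth}_{(1,\theta_{adv})} - u^{attack(x,y)}_{(1,\theta_{adv})}\bigr) \ge 0$ with $R=1$ for $g=1$ and $R=2$ for $g=2$, verified by casework on $\theta_{adv}$. Your $q^*=R^*/(1+R^*)$ reformulation is equivalent to the paper's ``monotone in $q$, so check at the threshold'' step (note it quietly relies on $\Delta_1\ge 0$, i.e.\ Lemma~\ref{lem:q=1_analysis}, to rule out $\Delta_1=0,\Delta_0>0$), and your proposal stops exactly where both proofs spend their effort --- the multi-case verification of that pointwise inequality --- but the plan you sketch is the right one.
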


The remainder of this section proves the theorem.

\begin{lemma}$q^*_{n=2,g=1}=\frac{1}{2}$.
\label{lem:one_type_q*}
\end{lemma}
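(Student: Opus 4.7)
My plan is to apply Theorem~\ref{thm:attack_format} so that the attacks in question are of the form $(1,x),(1,y)$ with $0 \le y \le x \le 1$, played by a true type $(1,1)$ against a single adversary. The first step is to write out the bidder's per-realization utility. When the adversary is a $1$-type with value $v$, truthful yields $1$ while the attack yields $1-y$ for $v \ge y$ and $1-2v$ for $v<y$; when the adversary is a $2$-type with per-item value $w$, truthful yields $\max(0,1-2w)$, whereas the attack yields $1$, $1+y-2w$, $1+x+y-4w$, or $0$ according to whether $2w$ lies in $[0,y)$, $[y,x)$, $[x,x+y)$, or $[x+y,2]$.

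Let $u_1(w)$ denote the Case~$1$ attack utility at an adversary value of $w$, $g(w)$ the Case~$2$ attack utility, and $t(w)=\max(0,1-2w)$ the Case~$2$ truthful utility. Writing $A = 1 - E_1$ and $B = T - E_2$ for the expected attack losses in the two cases (with $T=\int_0^{1/2}(1-2w)\,dF$), the quantity of interest factors as $qA + (1-q)B$, and $A \ge 0$ automatically. The crux of the argument is the pointwise inequality $g(w) - t(w) \le 1 - u_1(w)$ for every $w \in [0,1]$; integrating against $F$ yields $E_2 - T \le 1 - E_1$, i.e.\ $A+B \ge 0$.

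Given $A \ge 0$ and $A + B \ge 0$, the linear map $q \mapsto qA + (1-q)B$ is nonnegative on $[\tfrac{1}{2},1]$: if $B \ge 0$ it is nonnegative everywhere, and if $B<0$ its unique zero lies at $-B/(A-B) \le \tfrac{1}{2}$. Hence truthfulness is a BNE for every $q \ge \tfrac{1}{2}$, establishing $q^*_{n=2,g=1} \le \tfrac{1}{2}$. For the matching lower bound, I would take $F=\delta_{1/2}$ (Dirac mass at $1/2$) and the attack $(1,y)$ for any small $y>0$: a direct computation gives $A=y$ and $B=-y$, so the truth-minus-attack difference equals $y(2q-1)$, which is strictly negative for every $q<\tfrac{1}{2}$.

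The main obstacle is the pointwise step $g(w)-t(w) \le 1 - u_1(w)$. I would verify it by splitting on the two regimes $x+y \le 1$ and $x+y > 1$, and within each regime on the location of $w$ relative to the breakpoints $y/2,\, y,\, x/2,\, 1/2,\, (x+y)/2$. In every sub-interval the desired inequality reduces to an elementary linear bound --- for instance $2w \ge y$, $4w \ge 2x \ge x+y$, or $6w \ge 3 \ge 1+x+y$ --- all of which follow from $0 \le y \le x \le 1$. The bookkeeping is mechanical but somewhat tedious; pleasingly, the inequality is tight precisely at the Dirac extremizer, which is exactly what matches the lower bound.
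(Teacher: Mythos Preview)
Your upper-bound argument is essentially identical to the paper's: both reduce via Theorem~\ref{thm:attack_format} to $\theta=1$ and attacks $(1,x),(1,y)$, write the expected difference as $qA+(1-q)B$ with $A\ge 0$, and then establish the same pointwise inequality (your $g(w)-t(w)\le 1-u_1(w)$ is exactly the paper's statement that $u^{truth}_{(1,\theta_{adv})}-u^{attack}_{(1,\theta_{adv})}+u^{truth}_{(2,\theta_{adv})}-u^{attack}_{(2,\theta_{adv})}\ge 0$ for every $\theta_{adv}$) to conclude $A+B\ge 0$ and hence $qA+(1-q)B\ge 0$ for all $q\ge\tfrac12$. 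The case split you sketch is the one the paper carries out.

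You go further than the paper by supplying a lower-bound construction; the paper's proof only shows $q^*_{n=2,g=1}\le\tfrac12$. Your Dirac idea is morally right but has a technical wrinkle: the BNE definition requires both the true per-item value and the attack per-item values to lie in $\operatorname{supp} F$, and the paper normalizes so that $F(\theta)<1$ for $\theta<1$. With $F=\delta_{1/2}$ neither $\theta=1$ nor a small $y$ is in the support, so the deviation you describe is not admissible. The fix is routine---put mass $1-2\epsilon$ at $\tfrac12$ and mass $\epsilon$ at each of $1$ and $y$; then $A=y+O(\epsilon)$, $B=-y+O(\epsilon)$, and $qA+(1-q)B=y(2q-1)+O(\epsilon)<0$ for any fixed $q<\tfrac12$ once $\epsilon$ is small enough.
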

\begin{proof}
Put explicitly, the lemma condition requires that for every $q > \frac{1}{2}$, every true per-item value $\theta$ and attack parameters $x,y$, we have  
\begin{equation}
    \label{eq:BNEcondition_n=2}
    E^{q,F,n=2}_{\hat{\theta}=(1,\theta),truth}[u] - E^{q,F,n=2}_{\hat{\theta}=(1,\theta),attack(x,y)}[u] \geq 0.
    \end{equation}
Theorem~\ref{thm:attack_format} allows us to assume $0\leq y\leq x \leq \theta = 1$. For $n=2$ we have $\tilde{n}=1$ and so we have one adversary, either of type 1 or 2. Since some parameters are constant throughout the proof, we simplify our previous notations and write
$u^{truth}_{(g_{adv},\theta_{adv})} = u^{\hat{\theta}=(1,1)}_{\theta'_1=(1,1)}((g_{adv},\theta_{adv}))$ for the utility of bidder $2$ when bidding truthfully faced with a truthful adversary of type $(g_{adv},\theta_{adv})$. We write $u^{attack(x,y)}_{(g_{adv},\theta_{adv})} = u^{\hat{\theta}=(1,1)}_{\theta'_1=(1,x), \theta'_2=(1,y) }((g_{adv},\theta_{adv}))$ for the utility of bidder $2$ when bidding the attack $(1,x),(1,y)$ faced with a truthful adversary of type $(g_{adv},\theta_{adv})$. 
We have
\begin{equation}
\label{eq:expUtilDiff_n=2}
\begin{split}
    & E^{q,F,n=2}_{\hat{\theta}=(1,\theta),truth}[u] - E^{q,F,n=2}_{\hat{\theta}=(1,\theta),attack(x,y)}[u] = \\
    & q \underbrace{E_{\theta_{adv} \sim F}[u^{truth}_{(1,\theta_{adv})} - u^{attack(x,y)}_{(1,\theta_{adv})}]}_{E1} + (1-q)\underbrace{E_{\theta_{adv} \sim F}[u^{truth}_{(2,\theta_{adv})} - u^{attack(x,y)}_{(2,\theta_{adv})}]}_{E2}
    \end{split}
    \end{equation}

By Lemma.~\ref{lem:q=1_analysis} we know that E1 is non-negative. If E2 is non-negative as well, then for every $q$ value the expression is non-negative and we're done. Otherwise, E2 must be negative, in which case the expression of Eq.~\ref{eq:expUtilDiff_n=2} is monotone in $q$. Thus, it's enough to show the BNE condition 
for $q=\frac{1}{2}$. Its correctness for higher $q$ values then follows by monotonicity.
For $q=\frac{1}{2}$ we thus require
\[
\begin{split}
& \frac{1}{2}E_{\theta_{adv} \sim F}[u^{truth}_{(1,\theta_{adv})} - u^{attack(x,y)}_{(1,\theta_{adv})} + u^{truth}_{(2,\theta_{adv})} - u^{attack(x,y)}_{(2,\theta_{adv})}] \geq 0.
\end{split}
\]
We now show that the inner expression of the expectation is non-negative for any value of $\theta_{adv}$. 
The following explicit expressions hold for the utilities: 

\[
\begin{aligned}
& u^{attack(x,y)}_{(2,\theta_{adv})} = 
      \begin{cases}
1 & 0 \leq \theta_{adv} \leq \frac{y}{2} \\
1 - 2\theta_{adv} + y & \frac{y}{2} < \theta_{adv} \leq \frac{x}{2} \\
1 - 4\theta_{adv} + y + x & \frac{x}{2} < \theta_{adv} \leq \frac{x + y}{2}  \\
0 & otherwise
\end{cases} \\[6mm]
    & u^{truth}_{(2,\theta_{adv})} = 
     \begin{cases}
1 - 2\theta_{adv} & 0 \leq \theta_{adv} \leq \frac{1}{2} \\
0 & otherwise
\end{cases}
\end{aligned} \hspace{0.5cm} 
\begin{aligned}
& u^{attack(x,y)}_{(1,\theta_{adv})} = 
 \begin{cases}
    1 - 2\theta_{adv} & 0\leq \theta_{adv} \leq y \\
    1-y & y < \theta_{adv} \leq 1 \\
    \end{cases} \\[16mm]
& u^{truth}_{(1,\theta_{adv})} = 1.
\end{aligned} \hspace{1cm}
\]

\noindent
Combining the utility functions we have (for $0\leq \theta_{adv}\leq 1$)
\[ 
\begin{split}
 u^{truth}_{(1,\theta_{adv})} & - u^{attack(x,y)}_{(1,\theta_{adv})} + u^{truth}_{(2,\theta_{adv})} - u^{attack(x,y)}_{(2,\theta_{adv})}  = \\
& \begin{cases}
1 - u^{attack(x,y)}_{(2,\theta_{adv})}   & 0\leq \theta_{adv} \leq \min\{y,\frac{1}{2}\} \\
6\theta_{adv} -1 - y - x  & \frac{1}{2} < \theta_{adv} \leq y \\
0 & y < \theta_{adv} \leq \frac{x}{2}\\
y + 1-2\theta_{adv}  & \frac{x+y}{2} < \theta_{adv} \leq \frac{1}{2}\\
 2\theta_{adv} - x  & \max\{\frac{x}{2},y\} < \theta_{adv} \leq \min\{\frac{x+y}{2},\frac{1}{2}\}\\
4\theta_{adv} - 1 - x  & \max\{y,\frac{1}{2}\}<\theta_{adv} \leq \frac{x+y}{2} \\
y  & \max\{\frac{x+y}{2},\frac{1}{2}\} < \theta_{adv} . \\
\end{cases}
\end{split}
\]

The first case expression is non-negative since $u^{attack(x,y)}_{(2,\theta_{adv})} \leq 1$ for whatever range $\theta_{adv}$ is in. 
The second case expression is non-negative since for this case we have $\theta_{adv}>\frac{1}{2}$, and generally $y\leq x \leq 1$. The fourth case expression is non-negative since for this case we have $\theta_{adv} < \frac{1}{2}$ and generally $y\geq 0$. The fifth case expression is non-negative since for this case we have $\theta_{adv} > \frac{x}{2}$. The 6th case expression is non-negative since for this case we have $\theta_{adv} > \frac{1}{2}$ and generally $x \leq 1$. The 7th case expression is non-negative since generally $y\geq 0$.  
\end{proof}

\begin{restatable}[]{lemma}{twotypeq}
\label{two_type_q*}
$q^*_{n=2,g=2} = \frac{2}{3}$.
\end{restatable}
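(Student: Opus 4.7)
The plan is to apply Theorem~\ref{thm:attack_format} to restrict attention to attacks $attack(x,y)$ with $0\leq y\leq x\leq 1$ and then decompose the expected utility gap between truth and attack into 1-type and 2-type adversary contributions:
\[
E^{q,F,n=2}_{(2,\theta),truth}[u] - E^{q,F,n=2}_{(2,\theta),attack(x,y)}[u] \;=\; q D_1 + (1-q) D_2,
\]
where, setting $g(z) := u^{truth}_{(1,z)} - u^{attack(x,y)}_{(1,z)}$ and $h(z) := u^{truth}_{(2,z)} - u^{attack(x,y)}_{(2,z)}$, we have $D_1 = E_{v\sim F}[g(v)]$ and $D_2 = E_{w\sim F}[h(w)]$. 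Lemma~\ref{lem:q=1_analysis} gives $D_1\geq 0$, so if $D_2\geq 0$ the attack is never beneficial at any $q$; otherwise the gap is linear and increasing in $q$ with break-even value $\bar{q}(F,\theta,x,y) = -D_2/(D_1-D_2)$, and the inequality $\bar{q}\leq 2/3$ rearranges to $2D_1 + D_2 \geq 0$. Hence proving $q^*_{n=2,g=2}\leq 2/3$ reduces to establishing this single inequality for every admissible $(F,\theta,x,y)$.

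To establish it, I would first derive the explicit piecewise-linear formulas for $u^{truth}_{(g_{adv},\cdot)}$ and $u^{attack(x,y)}_{(g_{adv},\cdot)}$ for $g_{adv}\in\{1,2\}$ by standard VCG accounting: the 1-type adversary expressions parallel those in the proof of Lemma~\ref{lem:one_type_q*} with the truthful utility replaced by $\max(2\theta-v_{adv},0)$, and the 2-type adversary expressions recover the displayed formulas of Lemma~\ref{lem:one_type_q*} with $1$ replaced by $2\theta$. Since $D_1$ and $D_2$ are expectations against the same distribution $F$, a sufficient condition is the pointwise inequality
\[
2\,g(z) + h(z) \;\geq\; 0 \qquad \text{for all } z\in[0,1], \ \theta\in[0,1], \ 0\leq y\leq x\leq 1,
\]
which, integrated against any $F$, yields $2D_1+D_2 \geq 0$. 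This pointwise claim is the main obstacle: one partitions $[0,1]$ by the breakpoints $\{y/2,\, x/2,\, (x+y)/2,\, y,\, 2\theta\}$, observes that on each resulting interval the combined expression is linear in $z$, and verifies non-negativity by evaluating at the interval endpoints. Many sub-cases collapse when two breakpoints coincide (e.g.\ $x=y$), when a breakpoint falls outside $[0,1]$ (e.g.\ $2\theta>1$), or when $u^{truth}$ vanishes identically on a region, keeping the case tree manageable.

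For the matching lower bound $q^*_{n=2,g=2}\geq 2/3$, I would specialise to $\theta=1$ and the attack $x=y=1$; direct substitution gives $g(z)=z$ on $[0,1]$ and $h(z)=-2z$ on $[0,1/2]$, $h(z)=2z-2$ on $(1/2,1]$, so that $2g(z)+h(z)$ vanishes on $[0,1/2]$. Choose the two-point distribution $F_\epsilon$ with $P_\epsilon(1/2)=1-\epsilon$ and $P_\epsilon(1)=\epsilon$; for any $\epsilon>0$ it places $\theta=1$ and $x=y=1$ in its support and satisfies $F_\epsilon(\theta)<1$ for all $\theta<1$. A short computation then gives $D_1(F_\epsilon)=\tfrac{1}{2}+\tfrac{\epsilon}{2}$ and $D_2(F_\epsilon)=\epsilon-1$, whence $\bar{q}(F_\epsilon)=(1-\epsilon)/(\tfrac{3}{2}-\tfrac{\epsilon}{2}) \nearrow \tfrac{2}{3}$ as $\epsilon\to 0^+$. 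This rules out any threshold strictly below $2/3$, and combined with the upper bound completes the identification $q^*_{n=2,g=2}=2/3$.
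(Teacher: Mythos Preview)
Your proposal is correct and follows essentially the same route as the paper: both reduce the upper bound to the pointwise inequality $2g(z)+h(z)\geq 0$ (in the paper's notation, $2(u^{truth}_{(1,\theta_{adv})}-u^{attack(x,y)}_{(1,\theta_{adv})})+(u^{truth}_{(2,\theta_{adv})}-u^{attack(x,y)}_{(2,\theta_{adv})})\geq 0$), derive the explicit piecewise-linear utility formulas, and verify by case analysis over the breakpoint partition. Your lower-bound construction via the two-point family $F_\epsilon$ is more explicit than what the paper records---the paper's written proof addresses only the $\leq 2/3$ direction.
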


The proof is similar to the one given in Lemma~\ref{lem:one_type_q*} and is given in Appendix C. 

\section{ Global Granularity Thresholds for $n>2$: $q^*_n = 1$ }
\label{sec:impossibility_result}
Based on the result of $q^*_{n=2} = \frac{2}{3}$ for two bidders, we may now ask what are the $q^*$ values for different values of $n$. We first give an intuitive example of why attacks can work well with $n\geq 3$. 

\begin{example}
Assume there are at least $\tilde{n} \geq 2$ adversaries, and all have per-item value $0.6$. Consider a true type bidder $n$ of type $(1,1)$, and the attack $(1,1), (1,0.5)$. We analyze two cases:

\begin{itemize}
\item \textbf{At least one of the adversaries is of type 1.} In this case the utility from truthful bidding and the attack is the same. The winning bids are $(1,1), (1,0.6)$. For the payment calculation, if $(1,1)$ is omitted, $(2,0.6)$ would be the winning bid, resulting in a social welfare of $1.2$. Therefore, the payment is $0.6$, the same as for truthful bidding. 

\item \textbf{All adversaries are of type 2.} In this case the utility from truthful bidding is 0 (as the bid would not be in the winning set) and the utility from the attack is 0.1, as $(1,1),(1,0.5)$ would be the winning bids, and the payments are $0.7, 0.2$, respectively. 
\end{itemize}

We conclude that the attack dominates truthful bidding under these conditions. When there is only one adversary with per-item value $0.6$, it can be directly verified that truthful bidding is a best response: in the first case, a truthful bidder wins and pays zero. In the second case, a truthful bidder would lose but the payment in any attack that makes the bidder a winner is larger than 1, which is the bidder's value for winning.
\end{example}

This example does not constitute a valid counter-example construction to the truthful BNE condition, as we have a true type and bids that are not in the support of the per-item value distribution (i.e., we assume all adversary bidders have per-item value $0.6$). If we allow a very small probability of the per-item value being either $0.5$ or $1$, this could constitute a valid counter-example. For this construction we need a more intricate analysis and a construction of $F$ that depends on the values of $q,n$. This is established by the following theorem:

\begin{theorem}
For any $q < 1, n > 2$, there is an attack $x,y$ and distribution $F$ such that the attack is beneficial when a bidder's true type is $(1,1)$. Therefore, for any $n>2$, $q^*_n =1$.
\end{theorem}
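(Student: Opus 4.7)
The plan is to promote the preceding example into a full counter-example by exhibiting, for every $q<1$ and every $n>2$, a distribution $F$ and an attack whose expected utility strictly exceeds truthful bidding for a bidder with true type $(1,1)$. I will take $F=F_\varepsilon$ to be the three-atom distribution that places mass $1-2\varepsilon$ on the value $0.6$, mass $\varepsilon$ on $0.5$, and mass $\varepsilon$ on $1$, for a small $\varepsilon>0$ to be chosen. This $F_\varepsilon$ satisfies the assumption $F(\theta)<1$ for all $\theta<1$, and both of the attack-bid values $1$ and $0.5$ lie in its support, so the attack $(1,1),(1,0.5)$ and the true type $(1,1)$ are admissible.

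The main computation is to evaluate the expected utility difference in the degenerate limit $\varepsilon\to 0$, in which every adversary's per-item value is exactly $0.6$. Let $K\sim\mathrm{Binomial}(\tilde n,q)$ be the random number of adversary $1$-types. A direct VCG computation, leveraging $\tilde n\ge 2$ so that after removing the bid $(1,1)$ the opponents still realize welfare $1.2$ (via either two $(1,0.6)$ bids or one $(2,0.6)$ bid), gives the following cases. When $K=0$, probability $(1-q)^{\tilde n}$, truthful bidding loses to the $(2,0.6)$ winner and yields utility $0$, whereas the attack captures both items with total payment $0.7+0.2=0.9$ against true value $1$, for utility $0.1$. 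When $K\ge 1$, both strategies allocate the items to $(1,1)$ and to a $(1,0.6)$ bid, the attacker's $(1,0.5)$ bid is inactive, and the externality payment of $(1,1)$ equals $0.6$ in each case, so both strategies yield utility $0.4$. Hence at $\varepsilon=0$ the attack strictly beats truthfulness in expectation by exactly $(1-q)^{\tilde n}\cdot 0.1$.

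The second step is to verify that this conclusion survives when $\varepsilon$ is slightly positive. The event $A$ that every adversary has per-item value exactly $0.6$ has probability $(1-2\varepsilon)^{\tilde n}\ge 1-2\tilde n\varepsilon$, and conditional on $A$ the expected gain equals the quantity above. On $\bar A$, which has probability at most $2\tilde n\varepsilon$, each strategy's realized utility lies in a bounded interval (total attacker payment is at most the sum of bid values $1.5$ and value gained is at most $1$), so the contribution of $\bar A$ to the expected utility difference is $O(\tilde n\varepsilon)$. For $\varepsilon$ sufficiently small in terms of $q$ and $n$, the positive term $(1-2\varepsilon)^{\tilde n}(1-q)^{\tilde n}\cdot 0.1$ dominates the $O(\tilde n\varepsilon)$ error and the expected gain from the attack remains strictly positive. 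Since $q<1$ was arbitrary, this shows $q^*_n=1$ for every $n>2$.

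The main obstacle is the bookkeeping for ties: when an adversary's atomic value coincides with an attacker bid (both at $1$ or both at $0.5$), the VCG winner set is non-unique. These coincidence events are contained in $\bar A$ and contribute only $O(\varepsilon)$, so any tie-breaking convention is absorbed into the perturbation bound. A smaller point is verifying that the externality payment of the $(1,1)$ bid is exactly $0.6$ for all mixes $(k,\tilde n-k)$ with $k\ge 1$; this is where the hypothesis $\tilde n\ge 2$ is essential, since it guarantees an alternate allocation of welfare $1.2$ against any removal of $(1,1)$.
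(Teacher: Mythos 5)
Your proposal is correct and follows essentially the same approach as the paper: you use the identical distribution (atoms at $0.5$, $0.6$, $1$ with small $\varepsilon$ mass on the outer atoms) and the identical attack $(1,1),(1,0.5)$, and the core argument is the same "compute at the degenerate $\varepsilon=0$ limit, then absorb perturbation into an $O(\tilde n\varepsilon)$ error." The only organizational difference is that the paper's case decomposition ($C0$ through $C4$, tracking the $C3B$ event explicitly and supplying a concrete formula for $\varepsilon(q,n)$) is finer than your single split on the event $A$ that all adversary values equal $0.6$; your coarser grouping, together with the worst-case utility-difference bound, is equally sufficient for the existence argument.
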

\label{thm:impossibility_n>=3}
\begin{proof}

We examine
the attack $(1,1),(1,\frac{1}{2})$ (so, $x=1, y=\frac{1}{2}$). 
We show how to choose $\epsilon(q,n)$ later, and define $F_{\epsilon(q, n)} = \begin{cases}
0.5 & w.p.~~\epsilon \\
0.6 & w.p.~~1 - 2\epsilon \\
1 & w.p.~~\epsilon \\
\end{cases}$

\noindent
Now fix some $q<1, n>2$. Consider the following series of mutually exclusive events (so, the later events assume that previous ones are already negated). Notice that their union covers all cases:

\begin{trivlist}

\item \textbf{C0: There is some adversary with per-item value of 0.5: } 
In this case bidding truthfully might yield higher expected utility than the above attack. A very rough bound shows that the added utility of bidding truthfully over the attack is at most $3$: the max utility is $1$ and minimal is $0$, and the min payment is $0$ and max payment is $2$ - so any two strategies have a utility difference of at most $3$.

$Pr[C0] = 1 - (1-\epsilon)^{\tilde{n}}$ (the complement of the event where all per-item values chosen i.i.d. from $F_{\epsilon(q,n)}$ are either 0.6 or 1). 

\item \textbf{C1: There exists at least one adversary of type 2 with value 1: } the utility for both the attack and truthful bidding is 0. The $(1,0.5)$ bid of the attack can not be included in the winning set. The $(1,1)$ bid (whether of the attack or the truthful bid) is charged a payment of 1 (and generates a value of 1) if it is included in the winning set by tie-breaking, and yields 0 value and payment if it is not. 

\item \textbf{C2: There exist at least two adversaries of type 1: } Since we're not under event $C0$, the two 1-type adversary bids have per-item values of at least $0.6$. In this case the attacker's bid $(1,\frac{1}{2})$ is not in the winning set. Moreover if the $(1,1)$ attack bid is in the winning set, then the price it pays is at least $0.6$, as there must be a $(1,0.6)$ adversary bid left out of the winning set. We conclude that both allocation and payments stay the same if we omit the $(1,0.5)$ attack bid, which is the same as truthful bidding, hence the utility for attack and truthful bidding is the same. 

\item \textbf{C3: There exists exactly one adversary of type 1: } Note that all adversaries of type 2 have a value of 0.6 (otherwise, we fall into the cases of $C0$ or $C1$). If the type 1 adversary has value 0.6, then both the attack and truthful bidding utilities are 0.4. We denote this event $C3A$. If the type 1 adversary has value 1, then the attack utility is 0.5 while truthful utility is 0.8. We denote this event $C3B$. 

$Pr[C3B] = \overbrace{\binom{\tilde{n}}{1} (1-q)^{\tilde{n}-1}q}^{Pr1} \overbrace{(1-2\epsilon)^{\tilde{n}-1} \epsilon}^{Pr2}$. $Pr1$ is the probability that exactly one adversary is of type $1$, and $Pr2$ is the probability that all type $2$ bidders have per-item valuation of $0.6$, and the one type $1$ bidder has per-item valuation of $1$.

\item \textbf{C4: There are no adversaries of type 1: } we are left with only type 2 adversaries that have value 0.6. In this case, the truthful utility is 0 and the attack utility is 0.1. 

$Pr[C4] = (1-q)^{\tilde{n}}(1-2\epsilon)^{\tilde{n}}$, the probability to choose only type $2$ bidders, all with per-item valuation of $0.6$. 

\end{trivlist}

We now choose $$\epsilon(q,n) = \min \{ \frac{1-q}{3000q\tilde{n}}, \frac{1}{1000}, \frac{1 - (\frac{1}{5})^{\frac{1}{\tilde{n}}}}{2}, 1 - (1 -  \frac{(1-q)^{\tilde{n}}}{300})^{\frac{1}{\tilde{n}}} \}.$$ All these expressions are strictly positive for $n > 2, q< 1$. 
For the difference in expected utilities between bidding the truth and the attack, we have:

\[
\begin{split}
    & E^{q,F_{\epsilon(q,n)},n}_{(1,1),Truth} - E^{q,F_{\epsilon(q,n)},n}_{(1,1),attack(1,\frac{1}{2})} \leq 3Pr[C0] + 0.3 Pr[C3B]  - 0.1 Pr[C4] = \\
    & 3(1-(1-\epsilon)^{\tilde{n}}) + 0.3\tilde{n}(1-2\epsilon)^{\tilde{n}-1}(1-q)^{\tilde{n}-1}q\epsilon - 0.1 (1-q)^{\tilde{n}}(1-2\epsilon)^{\tilde{n}} \stackrel{\epsilon \leq \frac{1-q}{3000q\tilde{n}}}{\leq} \\
    & 3(1-(1-\epsilon)^{\tilde{n}}) + 0.1(1-2\epsilon)^{\tilde{n}-1}(1-q)^{\tilde{n}}\frac{1}{1000} - 0.1 (1-q)^{\tilde{n}}(1-2\epsilon)^{\tilde{n}} \stackrel{\epsilon \leq \frac{1}{1000}}{\leq} \\
    & 3(1-(1-\epsilon)^{\tilde{n}}) + 0.1(1-2\epsilon)^{\tilde{n}}(1-q)^{\tilde{n}}\frac{1}{998} - 0.1 (1-q)^{\tilde{n}}(1-2\epsilon)^{\tilde{n}} < \\
    & 3(1-(1-\epsilon)^{\tilde{n}}) - 0.05 (1-q)^{\tilde{n}}(1-2\epsilon)^{\tilde{n}} \stackrel{\epsilon \leq \frac{1 - (\frac{1}{5})^{\frac{1}{\tilde{n}}}}{2}}{\leq} \\
    & 3(1-(1-\epsilon)^{\tilde{n}}) - 0.01(1-q)^{\tilde{n}} \stackrel{\epsilon \leq 1 - (1 -  \frac{(1-q)^{\tilde{n}}}{300})^{\frac{1}{\tilde{n}}}}{\leq} 0 
    \end{split}
    \]
    
    Notice that one of the inequalities is strict. We conclude that the attack is beneficial and truthful bidding is not a BNE. 

\end{proof}



\section{Beta Distributions Granularity Thresholds: a Computer Algebra Approach}
\label{sec:beta_distributions}


Given the result of the previous section, we know that it is impossible to guarantee a truthful BNE for arbitrary distributions when $q < 1$ and $n \geq 3$. In this section we instead consider Beta distributions, parameterized by $\alpha, \beta$ we assume are integers. This family of distributions includes many natural distributions, such as the uniform distribution and other distributions with CDF $F(x)=x^a$ for any integer $a \geq 1$. It can additionally be used to polynomially approximate arbitrary distributions. Formally, a probability density function in this family has the form $f_{\alpha, \beta}(\theta_i) = (\alpha+\beta-1)! \frac{\theta_i^{\alpha - 1}(1-\theta_i)^{\beta - 1}}{(\alpha - 1)!(\beta - 1)!}$ and we denote by $F_{\alpha, \beta}(\theta_i)$ the respective cumulative distribution function.

Our aim is to find the corresponding granularity thresholds  $q^*_{n,F_{\alpha, \beta}}$.     
We address this challenge as follows. We show that the difference between the expected utility of a 1-type truthful bidder (with $\theta = 1$ w.l.o.g.) and the expected utility of an attacker with $attack(x,y)$ is a polynomial $P(x,y,q)$. Similarly, we show that the difference between the expected utility of a 2-type truthful bidder with some per-item value $\theta$ and the expected utility of an attacker with $attack(x,y)$ is a polynomial $Q(\theta,x, y,q)$.
The minimal $q$ for which the polynomials are non-negative for the entire domain $[0,1]\times[0,1]\times[q,1]$, is in fact $q^*_{n,F_{\alpha, \beta}}$. 
This reduces the problem into a computer algebra problem, given parameters $n,\alpha,\beta$ which can be solved using state-of-the-art techniques. The conceptual message that we wish to convey with this result is that the impossibility of the previous section becomes a possibility when one restricts attention to a natural family of distributions.


\subsection{A reduction to a polynomials positivity decision problem} 

The goal of this subsection is to show that for a given beta distribution $F$ and number of bidders $n$ we can reduce the problem of finding $q^*$ to the problem of showing that certain polynomials are positive for any variable assignment. We first simplify the analysis by showing that the expression for bidder $n$'s utility depends only on the top adversary bids. This means that instead of sampling all $\tilde{n}$ adversaries from $F$, it is enough to sample the top adversary bids from the appropriate order statistic of $F$. Recall that we denote by $v_1, v_2$ the highest two 1-type adversary per-item values and by $w_1$ the highest 2-type per-item value.

\begin{claim}
Bidder $n$'s utility is determined by her true type, her own bids and adversary bids $v_1,v_2,w_1$. 
\end{claim}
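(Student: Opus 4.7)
The plan is to leverage the extreme simplicity of the allocation structure in this two-item, single-minded setting. Because there are exactly two identical items, any feasible allocation has one of only two shapes: either both items go to a single 2-type bidder, or one item each goes to two distinct 1-type bidders. So the socially optimal allocation over any set of bids depends solely on (a) the single highest 2-type per-item value present and (b) the two highest 1-type per-item values present. All bids ranked below these in their respective classes are irrelevant for computing $\SW$.

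Next I would invoke Theorem~\ref{thm:attack_format} to restrict attention to bidder~$n$ either submitting a single truthful bid or a pair of 1-type bids $attack(x,y)$. In either case bidder~$n$ contributes at most two bids to the joint bid profile. I would then enumerate, in the union of adversary bids and bidder~$n$'s bids, which bids can possibly influence (i) the winning allocation and (ii) any VCG payment charged to bidder~$n$. For (i) the first paragraph's observation says only the top 2-type bid overall and the top two 1-type bids overall matter, and among the adversary contributions these are drawn from $\{v_1,v_2,w_1\}$. For (ii), the payment of a winning bid of bidder~$n$ is the difference between $\SW$ on all bids except that bid and $\SW$ on all bids minus bidder~$n$'s own value for that bid; both $\SW$ terms are again determined by the top-$2$ 1-type bids and the top 2-type bid of the bids being considered. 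Removing one of bidder~$n$'s own bids never promotes $v_3,v_4,\ldots$ or $w_2,w_3,\ldots$ into contention, since bidder~$n$'s removed bid is replaced in the ranking by her own remaining bid (if any) together with $v_1,v_2,w_1$, which already dominate the rest.

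Concretely I would organize the argument as a short case analysis over (true type of bidder~$n$) $\times$ (truthful vs.\ $attack(x,y)$), writing down in each case the winning set and the payment formulas explicitly in terms of $\theta$ (or $x,y$) and $v_1,v_2,w_1$, and noting that lower-ranked adversary values $v_3,\ldots,v_k$ and $w_2,\ldots,w_{\tilde n-k}$ never appear. Since these expressions fully determine bidder~$n$'s utility under VCG (value obtained minus payments), the claim follows.

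There is no real obstacle here; the only place to be slightly careful is verifying that when bidder~$n$ submits two 1-type bids and one of them is the winning bid being priced, removing it from the profile does not cause $v_3$ or $w_2$ to enter the VCG alternative, which is immediate since $v_1,v_2,w_1$ remain in the profile and dominate.
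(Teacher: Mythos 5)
Your proposal is correct and follows essentially the same approach as the paper: both are case analyses over which of bidder~$n$'s bids enter $W(B)$, with the key observation that because at most two items are available, only the top two 1-type values and the top 2-type value can ever appear in $\SW(B)$ or in $\SW(B\setminus\hat{\theta}_b)$, and removing one of bidder~$n$'s own bids leaves the adversary profile intact so $v_3,\ldots$ and $w_2,\ldots$ remain dominated. The one minor divergence is that you invoke Theorem~\ref{thm:attack_format} to cap bidder~$n$ at two 1-type bids, whereas the paper's proof handles arbitrary own bids directly (hence its payment formula contains $v_3',v_4'$); your restriction is harmless in context but slightly narrows the statement relative to the paper's version, and in fact your structural argument (only the top bids of each class are ever ranking-relevant, and removing one of bidder~$n$'s own bids cannot promote lower-ranked adversary bids) already works without invoking Theorem~\ref{thm:attack_format} at all.
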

\begin{proof}
Bidder $n$'s utility is determined by whether it has bids in the winner set, and if it does, by the payment it is required to pay.

\begin{itemize}
\item A 2-type bid of bidder $n$ enters $W(B)$ iff $w_1' > M$, with $M = \max \{ v_1 + v_2, v_1' + v_2', v_1 + v_1', w_1 \}$. The payment is then $M$. 

\item Exactly one 1-type bid of bidder $n$ enters $W(B)$ iff $v_1' > v_2$ and $v_1' + v_1 > \max \{ w_1, w_1'\}$. The payment is then $\max \{v_2, v_2', \max\{w_1, w_1'\} - v_1\}$

\item Two 1-type bids of bidder $n$ enters $W(B)$ iff $v_2' > v_1$ and $v_1' + v_2' > \max \{ w_1, w_1'\}$. The payment is then $\max \{w_1, w_1', v_3' + v_4', v_3' + v_1, v_1 + v_2\}$ 

\end{itemize}

\end{proof}

We are thus justified in defining  $$\tilde{u}_{S}^{\hat{\theta}}(w_1,v_1,v_2) = u^{\hat{\theta}}_{S}(\hat{\theta_1}, \ldots, \hat{\theta}_{\tilde{n}}).$$

We now wish to develop more explicit expressions for $E_{\hat{\theta},S}^{k, F_{\alpha, \beta}, n}$ (as defined in Eq.~\eqref{eq:expectedUtilK}). We do so for $k\geq 2, \tilde{n}-k\geq 1$, but adjusted (and simpler) expressions can be developed for the cases of $k \in \{ 0,1,\tilde{n} \}$. We note that the probability of sampling $k$ i.i.d. samples from $F_{\alpha, \beta}$ resulting in top 1-type values $v_1,v_2$, is the same as pre-choosing the top samples $v_1,v_2$, and having the rest of the samples being of lower value (with probability $F_{\alpha,\beta}(v_2)^{k-2}$). When $k\geq 2$, there are $k(k-1)$ ways of choosing which of the $k$ samples are the top samples. A similar analysis holds for the probability of sampling a top 2-type value $w_1$. We have


\begin{equation}
\begin{split}
\label{eq:expected-utility-k}
        & E_{\hat{\theta},S}^{k, F_{\alpha, \beta}, n} = \\
        & k(k-1)\int_{v_1=0}^{1} f_{\alpha,\beta}(v_1)\int_{v_2=0}^{v_1} f_{\alpha,\beta}(v_2) F_{\alpha, \beta}(v_2)^{(k-2)} (\tilde{n}-k)\int_{w_1 = 0}^{1} f_{\alpha, \beta}(w_1) F_{\alpha, \beta}^{\tilde{n}-k-1}(w_1) \tilde{u}^{\hat{\theta}}_{S}(w_1,v_1,v_2) dw_1 dv_2 dv_1
        \end{split}
\end{equation}

The utility function $\tilde{u}$ has the general structure (the details can be found in Appendix D, and depend on $\hat{\theta}, S$) where there is some finite number of segments partitioning the integration bounds of $0 \leq v_2 \leq v_1 \leq 1, 0\leq w_1 \leq 1$. Each of these segments is of the form $\underbar{T}_1(x,y,\theta) \leq v_1 \leq \bar{T}_1(x,y,\theta), \underbar{T}_2(v_1,x,y,\theta) \leq v_2 \leq \bar{T}_2(v_1,x,y,\theta), \underbar{T}_3(v_1,v_2,x,y,\theta) \leq w_1 \leq \bar{T}_3(v_1,v_2,x,y,\theta)$, where all $T$ functions are linear in their variables. For each of these segments, the utility function $\tilde{u}$ is linear in its variables. Notice that $F_{\alpha, \beta}(x), f_{\alpha, \beta}(x)$ are polynomial in their variables as well. Overall, 
there is some finite $t$ so we can write (omitting the variables that the T functions receive), 
\[
\begin{split}
 & k(k-1)\int\limits_{v_1=0}^{1} f_{\alpha,\beta}(v_1)\int\limits_{v_2=0}^{v_1} f_{\alpha,\beta}(v_2) F_{\alpha, \beta}(v_2)^{(k-2)} (\tilde{n}-k)\int\limits_{w_1 = 0}^{1} f_{\alpha, \beta}(w_1) F_{\alpha, \beta}^{\tilde{n}-k-1}(w_1) \tilde{u}^{\hat{\theta}}_{S}(w_1,v_1,v_2) dw_1 dv_2 dv_1 = \\
 & \sum_{i=1}^t k(k-1)\int\limits_{v_1=\underbar{T}_1^i}^{\bar{T}_1^i} f_{\alpha,\beta}(v_1)\int\limits_{v_2=\underbar{T}_2^i}^{\bar{T}_2^i} f_{\alpha,\beta}(v_2) F_{\alpha, \beta}(v_2)^{(k-2)} (\tilde{n}-k)\int\limits_{w_1 = \underbar{T}_3^i}^{\bar{T}_3^i} f_{\alpha, \beta}(w_1) F_{\alpha, \beta}^{\tilde{n}-k-1}(w_1) \tilde{u}^{\hat{\theta}}_{S}(w_1,v_1,v_2) dw_1 dv_2 dv_1
\end{split}
\]

Now since the class of polynomials is closed under integration and composition, we have that $E_{\hat{\theta},S}^{k, F_{\alpha, \beta}, n}$ is polynomial in $x,y,\theta$. For the case of a 1-type true type, Theorem~\ref{thm:attack_format} shows we can assume w.l.o.g. $\theta = 1$. 
We can then write 
\[
\begin{split}
& P_{\alpha, \beta}^{n}(x,y,q) \stackrel{def}{=} E^{q, F_{\alpha,\beta}, n}_{(1,1),truth}[\tilde{u}]\!{-}\! E^{q, F_{\alpha,\beta}, n}_{(1,1),attack(x,y)}[\tilde{u}] = \\
& \sum_{k=0}^{\tilde{n}} {\binom{\tilde{n}}{k}} q^k (1-q)^{\tilde{n}-k}(E_{(1,1),truth}^{k, F_{\alpha, \beta}, n}[\tilde{u}] - E_{(1,1),attack(x,y)}^{k, F_{\alpha, \beta}, n}[\tilde{u}]),
\end{split}
\]
\[
\begin{split}
& Q_{\alpha, \beta}^{n}(\theta,x,y,q) \stackrel{def}{=} E^{q, F_{\alpha,\beta}, n}_{(2,\theta),truth}[u]\!{-}\!E^{q, F_{\alpha,\beta}, n}_{(2,\theta),attack(x,y)}[u] = \\
& \sum_{k=0}^{\tilde{n}} {\binom{\tilde{n}}{k}} q^k (1-q)^{\tilde{n}-k}(E_{(2,\theta),truth}^{k, F_{\alpha, \beta}, n}[\tilde{u}] - E_{(2,\theta),attack(x,y)}^{k, F_{\alpha, \beta}, n}[\tilde{u}]), \\
\end{split}
\]

and these as well are polynomials in their variables, as they have a representation as a sum of multiplications of polynomials in their variables. If for any valid variable assignment these polynomials are positive, then the truthful BNE condition holds for any true type, and vice versa. Thus,

\begin{theorem}
\label{thm:BetaPolynomialReduction}
For parameters $q,n$ and a beta distribution $F_{\alpha, \beta}$, truthful bidding is a Bayesian Nash Equilibrium (BNE) if and only if it holds that the polynomials 
\[
\begin{split}
& P_{\alpha, \beta}^{n}(x,y,q) \geq 0 \\
& Q_{\alpha, \beta}^{n}(\theta, x, y,q) \geq 0 \\
\end{split}
\]

for any variable assignment. 
\end{theorem}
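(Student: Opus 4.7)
The plan is to combine the reduction from Theorem~\ref{thm:attack_format} with the integral representation already developed in Eq.~\eqref{eq:expected-utility-k}, and then verify that the resulting utility differences are genuinely polynomial in all their arguments. By Theorem~\ref{thm:attack_format}, truthful bidding is a BNE if and only if, for every true type $\hat{\theta}=(g,\theta)$ with $\theta\in\mathrm{supp}\,F_{\alpha,\beta}=[0,1]$ and every attack $(x,y)$ with $0\le y\le x\le 1$, we have $E^{q,F_{\alpha,\beta},n}_{\hat\theta,\mathrm{truth}}[u]\ge E^{q,F_{\alpha,\beta},n}_{\hat\theta,\mathrm{attack}(x,y)}[u]$, and moreover when $g=1$ one may set $\theta=1$ without loss of generality. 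The two resulting utility differences are exactly, by construction, the functions $P_{\alpha,\beta}^n(x,y,q)$ (for $g=1$) and $Q_{\alpha,\beta}^n(\theta,x,y,q)$ (for $g=2$) defined immediately above the theorem statement. Hence the BNE condition is equivalent to the non-negativity of $P$ and $Q$ over the admissible domain.

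Next I would establish that $P$ and $Q$ are indeed polynomials in $(x,y,q)$ and $(\theta,x,y,q)$ respectively, since the theorem asks for polynomial positivity. I would invoke the decomposition already carried out in the excerpt: for each fixed $k$, the expectation $E^{k,F_{\alpha,\beta},n}_{\hat\theta,S}$ is a finite sum $\sum_{i=1}^{t}$ of triple integrals over a piecewise region $[\underbar T_1^i,\bar T_1^i]\times[\underbar T_2^i,\bar T_2^i]\times[\underbar T_3^i,\bar T_3^i]$, where the bounds $\underbar T_j^i,\bar T_j^i$ are linear in $x,y,\theta$ and the integrand $f_{\alpha,\beta}(v_1)\,f_{\alpha,\beta}(v_2)F_{\alpha,\beta}(v_2)^{k-2}f_{\alpha,\beta}(w_1)F_{\alpha,\beta}(w_1)^{\tilde n-k-1}\tilde u^{\hat\theta}_S(w_1,v_1,v_2)$ is polynomial in $(w_1,v_1,v_2,x,y,\theta)$. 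Since antidifferentiation of a polynomial yields a polynomial and the substitution of polynomial (in fact linear) bounds preserves polynomiality, each summand contributes a polynomial in $(x,y,\theta)$; summing over $i$ and over $k$ against the binomial weights $\binom{\tilde n}{k}q^k(1-q)^{\tilde n-k}$ yields a polynomial in $(x,y,\theta,q)$. The boundary cases $k\in\{0,1,\tilde n\}$ flagged in the excerpt are handled by the analogous, simpler integral expressions (replacing the triple integral by a double or single one when one of the sample groups is empty), which are still polynomial.

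Finally, I would close the iff by matching quantifiers: the BNE definition quantifies over all $\hat\theta$ in the support and all admissible attacks, which by Theorem~\ref{thm:attack_format} is exactly the parameter box in which we require $P\ge 0$ and $Q\ge 0$. Positivity of $P$ handles all 1-type true types and positivity of $Q$ handles all 2-type true types, so the two polynomial inequalities are jointly equivalent to the BNE condition. The main obstacle I anticipate is the bookkeeping needed in the second paragraph: one has to check that the piecewise partition of the $(w_1,v_1,v_2)$-region induced by $\tilde u$ can be written with bounds that are linear in $(x,y,\theta)$ (and not, e.g., involving $\min$ or $\max$ that would break polynomiality), and that the tie-breaking conventions on the boundaries do not affect the integrals. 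These details are routine given the explicit case analysis of $\tilde u$ deferred to Appendix~D, but they are the only non-trivial point; everything else is an application of Theorem~\ref{thm:attack_format} together with the polynomial form of $f_{\alpha,\beta}$ and $F_{\alpha,\beta}$ for integer $\alpha,\beta$.
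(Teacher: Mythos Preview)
Your proposal is correct and follows essentially the same approach as the paper: invoke Theorem~\ref{thm:attack_format} to reduce to attacks of the form $(1,x),(1,y)$ with $\theta=1$ when $g=1$, then argue that each $E^{k,F_{\alpha,\beta},n}_{\hat\theta,S}$ is polynomial via the piecewise-linear structure of $\tilde u$ together with the polynomiality of $f_{\alpha,\beta},F_{\alpha,\beta}$ and closure under integration and composition, and finally sum against the binomial weights to get polynomials in $q$. The paper's argument is exactly this chain (Section~\ref{sec:beta_distributions} up to the theorem statement), with the explicit piecewise form of $\tilde u$ deferred to Appendix~D just as you anticipate.
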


\subsection{Solving polynomial positivity using computer algebra} 
\label{subsec:computer_algebra_methods}

The question of whether truthful bidding is a Bayesian Nash equilibrium for a given $q$ value is now reduced to whether there are no assignments (a true type and attack bids) where the polynomial is negative in the domain.
For this purpose, any computer algebra method that is able to prove polynomial positivity suffices. The method we found most useful for our setting is Partial Cylindrical Algebraic Decomposition (Partial CAD) \cite{CADBook}. Given a guess of the threshold value $q^*\_guess$ for a given $F_{\alpha,\beta}$ distribution and $n$ value, Maple's Regular Chains package \cite{RegularChains} has the function \emph{PartialCylindricalAlgebraicDecomposition} that given the proper constraints finds representative points in the domain $[0,1]^2\times[q^*\_guess,1]$ for the positivity decision problem. 
It then suffices to check each of the representing points to be positive for the corresponding polynomial in order to conclusively verify that it does not attain negative values in the entire domain. If we also know that for any value $q<q^*\_guess$ there are assignments of the other variables such that the polynomial is negative, then we establish $q^* = q^*\_guess$. 

One way to arrive at the guess value $q^*\_guess$ is by bisecting the $[0,1]$ interval. However, this is computationally costly since every bisecting step needs to execute the Partial CAD procedure for a domain of three variables.

It is also possible to fix some variable assignment (truth value and attack), find the minimal $q$ for which this polynomial in $q$ (with fixed true value and attack) is non-negative, and use this as $q^*\_guess$. By doing so we already establish that for any $q < q^*\_guess$ there are assignments where the polynomial is negative, namely the assignment we fixed. As described above, all that is left is to use the Partial Cylindrical Algebraic Decomposition procedure to verify it. 
We find that for any $\alpha, \beta, n$ parameters we checked, the ``split attack'', where the true type is $(2,1)$ and the attack is $(1,1), (1,1)$, gives a proper $q^*\_guess$ value. 
We show a result for the split attack in the next section. 

Our findings 
are given in Figure~\ref{fig:different_alpha_n}. We emphasize that these are true (rather than estimated) $q^*$ values, as the methods described constitute a full proof for the specific values tested. For computational reasons, we restrict the tests to $\beta = 1$. This enables use of Lemma~\ref{lem:beta=1_assumptions}, hence fixing some variables values without loss of generalization, and thus reducing the algebraic complexity. As can be seen, the granularity thresholds in all cases we tested are bounded away from 1, in particular, lower than $0.5$.

\begin{figure}[!h]
\begin{tikzpicture}
\begin{axis}[
    xlabel={$\alpha$},
    ylabel={$n$},
    zlabel={q*},
    zlabel style={rotate=-90},
    xmin=1, xmax=5,
    ymin=3, ymax=9,
    zmin=0.375, zmax=0.525,
    symbolic x coords={1,2,3,4,5}, 
    xticklabels={1,2,3,4,5},
    xtick={1,2,3,4,5},
    symbolic y coords={3,4,5,6,7,8,9}, 
    yticklabels={3,4,5,6,7,8,9},
    ytick={3,4,5,6,7,8,9},
    ztick={0.375,0.4,0.425,0.45,0.475,0.5,0.525},
    grid style=dashed,
]
\addplot3[
    surf,
    domain=1:5,y domain=3:9,
] 
coordinates {
(1,3,0.5) (1,4,0.5) (1,5,0.5) (1,6,0.5) (1,7,0.5) (1,8,0.5) (1,9,0.5)
 
(2,3,0.4525) (2,4,0.4679) (2,5,0.4767) (2,6,0.4818) (2,7,0.4851) (2,8,0.4874) (2,9,0.4890)

(3,3,0.4228) (3,4,0.4527) (3,5,0.4669) (3,6,0.4746) (3,7,0.4794) (3,8,0.4826) (3,9,0.4850)

(4,3,0.4044) (4,4,0.4444) (4,5,0.4617) (4,6,0.4708) (4,7,0.4763) (4,8,0.4801) (4,9,0.4828)

(5,3,0.3927) (5,4,0.4394) (5,5,0.4585) (5,6,0.4684) (5,7,0.4745) (5,8,0.4786) (5,9,0.4815)
};

    \addplot3[domain=4:5, y domain=1:10, mark=x, color=black, only marks]
    coordinates {
    (1,3,0.5) (1,4,0.5) (1,5,0.5) (1,6,0.5) (1,7,0.5) (1,8,0.5) (1,9,0.5) (2,3,0.4525) (2,4,0.4679) (2,5,0.4767) (3,3,0.4228) (3,4,0.4527) (4,3,0.4044) (5,3,0.3927)};
\end{axis}
\end{tikzpicture}
\caption{q* values for beta distributions with $\beta = 1$}
\label{fig:different_alpha_n}
\end{figure}

Note that the fully verified cases are those marked with $x$ symbol. The cases not marked with $x$ give the $q^*$ values only for the split attack, without verifying that they hold for all possible attacks. For the fully verified cases, we observe that the granularity threshold is typically decreasing (improving) as a function of $\alpha$ and increasing as a function of $n$. As mentioned before, interestingly, in all fully verified cases, the attack most persistent in respect to higher values of $q$ (``best attack'') was the split attack. We conjecture that this phenomenon is universal for the family of $(\alpha, \beta)$ distributions (though it is not necessarily true for other distributions, see Section~\ref{sec:Discussion}). We therefore extend the figures to include $q^*$ values that were evaluated only for the split attack.

The Mathematica and Maple files to attain and verify the figures' exact values independently can be found at https://github.com/yotam-gafni/vcg\_bayesian\_fnp.

\section{The Uniform Distribution and Split Attacks}
\label{sec:split_attack}

Fig.~\ref{fig:different_alpha_n} may suggest the conjecture that $q^* \leq 0.5$ whenever $\beta=1$. A special case is the uniform distribution for which $\alpha=\beta=1$. In this section we examine this special case. The conjecture is related to another empirical observation which may be interesting in its own right, regarding the split attack: {\em all} the $q^*$ values given in the figures of the last section are in fact derived from the split attack which proves itself in all verified cases to be the one that yields the appropriate $q^*\_guess$ value (see the discussion in the beginning of Section~\ref{subsec:computer_algebra_methods}). This raises a second conjecture, that the split attack obtains the maximal expected utility among all attacks, for a 2-type, and when the distribution is uniform.

Recall that in the split attack that we have previously considered, a 2-type attacker with value 1 splits her bid into two 1-type bids $x=y=1$. Here we consider a more general definition where a 2-type attacker with value $\theta$ splits her bid into two 1-type bids $x=y=\theta$.

\begin{theorem}
\label{thm:uniform_half}
For the uniform distribution, and any number of bidders $n\geq 3$,
\begin{enumerate}
    \item $q^*_{n,g=1,F=UNI([0,1])}\leq \frac{1}{2}$, i.e., $q^*$ for a 1-type and adversaries drawn from the uniform distribution is at most $\frac{1}{2}$.
    \item $q^*_{n,g=2,\theta,x=\theta,y=\theta,F=UNI([0,1])}\leq\frac{1}{2}$, which is tight as $q^*_{n,g=2,\theta=1,x=1,y=1,F=UNI([0,1])}=\frac{1}{2}$. I.e., $q^*$ for a 2-type performing a split attack, and adversaries drawn from the uniform distribution, is at most $\frac{1}{2}$.
\end{enumerate}
\end{theorem}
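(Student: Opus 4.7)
The plan is to express the expected utility difference as a polynomial in $q$ via the binomial decomposition from Section~\ref{sec:beta_distributions}, and then verify non-negativity on $[1/2,1]$ by direct computation exploiting the uniform distribution's polynomial densities. By the Claim in that section, only the top adversary bids $v_1, v_2$ (the two highest 1-type per-item values) and $w_1$ (the highest 2-type per-item value) enter the utility function; conditional on $k$ 1-type and $\tilde n - k$ 2-type adversaries, their joint density on $0\leq v_2\leq v_1\leq 1$, $0\leq w_1\leq 1$ is $k(k-1) v_2^{k-2} (\tilde n - k) w_1^{\tilde n - k - 1}$. Writing $D_k$ for the conditional expected diff, $\mathrm{diff}(q) = \sum_{k=0}^{\tilde n} \binom{\tilde n}{k} q^k (1-q)^{\tilde n - k} D_k$ is a polynomial in $q$.

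\textbf{Start with tightness at $\theta=1$ in Part 2, which is the cleanest case.} When $\theta=1$, both the truthful bid $(2,1)$ and the split attack $(1,1),(1,1)$ always win both items (since adversary per-item values are bounded by $1$), and the utility diff collapses to the payment diff
\[
2\max(v_1,\, 2w_1-1) - \max(2w_1,\, v_1+v_2),
\]
piecewise-linear in $(v_1, v_2, w_1)$ with three regions: diff $= v_1 - v_2 \geq 0$ when $v_1 + v_2 \geq 2w_1$; diff $= 2(v_1 - w_1)$ when $2w_1 \geq v_1+v_2$ and $v_1 \geq 2w_1-1$; and diff $= 2w_1 - 2 \leq 0$ when $v_1 < 2w_1 - 1$. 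Integrating against the polynomial order-statistic densities yields $D_k$ explicitly as a polynomial in $\tilde n$, and tightness amounts to the algebraic identity $\sum_k \binom{\tilde n}{k} D_k = 0$ at $q = 1/2$, which I would prove by direct manipulation (for $n=3$, a hand computation gives $D_0 = -\tfrac12$, $D_1 = \tfrac{1}{12}$, $D_2 = \tfrac13$, so $D_0 + 2D_1 + D_2 = 0$). Strict negativity for $q < 1/2$ follows because the negative region (Case 2, all-2-type adversaries) gains weight as $q$ decreases.

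\textbf{Parts 1 and 2 in general.} Part 1 is handled identically, except that the truthful bid $(1,1)$ now wins one item and the attack $(1,x),(1,y)$ with $y \leq x \leq 1$ may have up to two winners, adding cases to the piecewise-linear diff indexed by $(x,y)$ (by Theorem~\ref{thm:attack_format} the true per-item value can be fixed to $1$). Part 2 for general $\theta$ similarly requires enumerating which of the split bids $(1,\theta),(1,\theta)$ are in the winning set. In both cases, non-negativity at $q = 1/2$ reduces to an inequality $\sum_k \binom{\tilde n}{k} D_k \geq 0$. Extending this to all $q \in (1/2,1]$ is then naturally handled by a stochastic-dominance argument: show that $D_k$ is non-decreasing in $k$ (intuitively, more 1-type adversaries weaken an attack's free-riding benefit, as hinted by Lemma~\ref{lem:q=1_analysis} which gives $D_{\tilde n} \geq 0$), so that $\mathrm{diff}(q) = \mathbf E_{K \sim \mathrm{Bin}(\tilde n, q)}[D_K]$ is non-decreasing in $q$ by first-order stochastic dominance, hence $\mathrm{diff}(q) \geq \mathrm{diff}(1/2) \geq 0$ for $q \geq 1/2$.

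\textbf{Main obstacle.} The hardest step is the monotonicity of $D_k$ in $k$, which is needed to propagate the $q = 1/2$ conclusion to all $q \in (1/2, 1]$; establishing it for general $n$ likely requires a pairing argument that couples a ``1-type swap'' against a ``2-type swap'' of one adversary and shows the diff weakly increases. The remaining difficulty is bookkeeping the case-by-case integration for general $n$, but the uniform distribution's polynomial order statistics and the piecewise-linear utility diffs keep each $D_k$ a tractable polynomial in $\tilde n$, reducing the core of the argument to combinatorial identities rather than analytic estimates.
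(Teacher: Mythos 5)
Your Part 2 outline matches the paper's actual strategy: the paper proves that for the split attack the conditional differences $D_k$ are monotone non-decreasing in $k$ (Lemma~\ref{lem:split_is_monotone_k}), invokes first-order stochastic dominance of $\mathrm{Bin}(\tilde n, q)$ (Lemma~\ref{lem:monotone_condition}), and then establishes the exact vanishing $\sum_k \binom{\tilde n}{k} D_k = 0$ at $q=1/2$ (Lemma~\ref{lem:uniform_split_half}). Your $n=3$ hand-check ($D_0 = -1/2$, $D_1 = 1/12$, $D_2 = 1/3$, summing with weights $1,2,1$ to $0$) agrees with the paper's formula for $F(\tilde n,k)$. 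You do understate the difficulty of this identity for general $n$: the paper does not manage it ``by direct manipulation'' but rather derives and solves a holonomic recurrence (Equation~\eqref{sum_recurrence}) using computer algebra, so this step is substantially heavier than your sketch suggests, though not a conceptual gap.

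The serious gap is in Part 1. You propose to prove that $D_k$ is non-decreasing in $k$ for the 1-type attacker with \emph{arbitrary} attack parameters $(x,y)$, and then push from $q = 1/2$ to all $q > 1/2$ by stochastic dominance. The paper explicitly flags this monotonicity as open (Section~\ref{sec:structural_properties}, ``Is truthfulness monotone in $q$?''), stating ``We do not know if this property generally holds, even for $n=3$,'' and Example~\ref{ex:truth_q_monotonicity_attempt} shows that the natural pairing argument you gesture at --- flipping one adversary from 2-type to 1-type and showing the difference weakly increases --- is false pointwise: a 1-type attacker with $attack(0.1,0.1)$ facing a single adversary $(1,0.15)$ loses $0.15$ relative to truth, but facing $(2,0.15)$ loses $0.7$. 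So the coupling step you call the ``main obstacle'' is not just hard bookkeeping; it may simply be false in the pointwise form, and the paper never establishes the expected-value version for general attacks. The paper's actual Appendix F argument for Part 1 is structurally different and avoids monotonicity entirely: it proves (Claim~\ref{clm:uniform_expected_positive_general_k}) that $D_k \geq 0$ for \emph{every} $k\geq 1$ (so all of those terms can simply be discarded), and separately (Claim~\ref{clm:uniform_expected_positive_edge_case}) that the particular linear combination $D_0 + \tilde n D_1 + D_{\tilde n} \geq 0$; then, using $q\geq 1-q$ so that $q^k(1-q)^{\tilde n - k}\geq (1-q)^{\tilde n}$ for each of $k\in\{0,1,\tilde n\}$, it lower-bounds the full sum by $(1-q)^{\tilde n}\bigl(D_0 + \tilde n D_1 + D_{\tilde n}\bigr)\geq 0$. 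This truncation-plus-weight-bound trick replaces your stochastic-dominance step and is the key idea you are missing for Part 1.
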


\noindent
The proof is given in Appendices~\ref{app:UniformDistProof} and~\ref{app:uniform_half_1type}. We further believe that the threshold of $\frac{1}{2}$ holds generally for the uniform distribution:

\begin{conjecture}
\label{conj:uniform_half}
~
\begin{enumerate}
    \item For a player with type $(g=2,\theta)$, assuming that $F$ is the uniform distribution, the maximal expected utility among all attacks. 
    Formally,
    
    $$ E_{\hat{\theta}=(2,\theta), attack(\theta, \theta)}^{q,n} \geq E_{\hat{\theta}=(2,\theta), attack(\frac{x+y}{2},\frac{x+y}{2})}^{q,n} \geq E_{\hat{\theta}=(2,\theta), attack(x,y)}^{q,n}.$$
    
    \item For the uniform distribution, and any $n\geq 3$, $q^*_{n,F=UNI([0,1])}=\frac{1}{2}$.
\end{enumerate}
\end{conjecture}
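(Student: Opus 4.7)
The plan has two pieces. First, Part~2 follows essentially from Part~1 together with the already-proved Theorem~\ref{thm:uniform_half}; the real work is in Part~1.

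\textbf{Reducing Part~2 to Part~1.} Fix $q\geq\tfrac{1}{2}$. For 1-type true types the BNE inequality holds by Theorem~\ref{thm:uniform_half}(1). For a 2-type true type $(2,\theta)$ and an arbitrary attack $(x,y)$, Part~1 gives that the split attack $(\theta,\theta)$ dominates $(x,y)$ in expected utility; the truthful BNE inequality against $(\theta,\theta)$ is exactly Theorem~\ref{thm:uniform_half}(2). By Theorem~\ref{thm:attack_format} these two cases cover all deviations, so $q^*_{n,F=UNI([0,1])}\leq \tfrac{1}{2}$. The matching lower bound is the tight statement $q^*_{n,g=2,\theta=1,x=1,y=1,F=UNI([0,1])}=\tfrac{1}{2}$ in Theorem~\ref{thm:uniform_half}(2).

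\textbf{Attacking Part~1.} Write $P(x,y):= E^{q,F=UNI,n}_{(2,\theta),\,attack(x,y)}[u]$ (suppressing $\theta,q,n$). I decompose Part~1 into two separate claims:
\begin{enumerate}
\item[(i)] (Symmetrization.) $P\bigl(\tfrac{x+y}{2},\tfrac{x+y}{2}\bigr)\geq P(x,y)$ for all $x,y\in[0,1]$.
\item[(ii)] (Optimal split value.) $P(\theta,\theta)\geq P(s,s)$ for all $s\in[0,1]$.
\end{enumerate}

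For (i), change variables $s=x+y$, $t=x-y$, and set $\widehat P(s,t)=P\bigl(\tfrac{s+t}{2},\tfrac{s-t}{2}\bigr)$. Relabelling the two fictitious identities leaves $P$ unchanged, so $\widehat P(s,\cdot)$ is even in $t$. By Eq.~\eqref{eq:expected-utility-k} and Theorem~\ref{thm:BetaPolynomialReduction}, $P$ is piecewise polynomial in $(x,y)$ under the uniform distribution. I would then argue concavity of $\widehat P(s,\cdot)$ in $t$, which together with evenness yields the maximum at $t=0$. For (ii), let $R(s):=P(s,s)$. With $x=y=s$ the pair of split bids can be analyzed jointly: both win iff $2s$ beats $\max\{v_1+v_2,\,v_1+s,\,w_1\}$ (up to tie-breaking), and the total payment is that same maximum. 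Differentiating $R$ via the explicit integrals (conditioning on which of $v_1,v_2,w_1$ is the binding competitor) and using the uniform density, I would show $R$ is concave with $R'(\theta)=0$. Intuitively, each fictitious 1-type sub-bid faces a Vickrey-like incentive to bid its per-item value $\theta$, and the common win event aligns the two sub-incentives so the unique optimum is the truthful split.

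\textbf{Main obstacle.} The function $\tilde u^{(2,\theta)}_{attack(x,y)}$ is piecewise linear with many regions whose boundaries depend on $x,y$; consequently $\widehat P$ is only piecewise polynomial in $t$, and concavity is not immediate because differentiating under the integral sign produces boundary terms when the regions themselves move with $t$. The hardest step will be handling these boundary contributions uniformly over $k$ (the number of 1-type adversaries, averaged out in $q$). One route is a per-region computation showing $\partial_t^2 \widehat P\leq 0$ in the interior of each region and that the boundary contributions across adjacent regions cancel. A cleaner alternative is a coupling argument: for each realization $(v_1,v_2,w_1)$, compare the attacker's utility under $(\tfrac{s+t}{2},\tfrac{s-t}{2})$ with its reflection $(\tfrac{s-t}{2},\tfrac{s+t}{2})$, average the two using symmetry of $P$, and exploit concavity of a maximum of linear payments in VCG to dominate the $t=0$ utility pointwise before integrating.
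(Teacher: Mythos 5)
The statement you are trying to prove is a \emph{conjecture} in the paper; the authors do not prove it. They note exactly your reduction---Part~1 implies Part~2 via Theorem~\ref{thm:uniform_half}---and then offer only computational verification of Part~2 up to $n=9$. So there is no paper proof to compare against; what matters is whether your proposal actually closes the gap. It does not, and the ``cleaner alternative'' you suggest is in fact false.

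Your two sub-claims (i) and (ii) are both left as goals rather than proven: you write ``I would then argue concavity'' and ``I would show $R$ is concave,'' and the final paragraph concedes that handling the moving region boundaries is unresolved. That by itself makes this a plan, not a proof. More importantly, the coupling route you offer as a shortcut cannot work. Since the two attack identities are anonymous, $P(x,y)=P(y,x)$ identically, so averaging $(\tfrac{s+t}{2},\tfrac{s-t}{2})$ with $(\tfrac{s-t}{2},\tfrac{s+t}{2})$ just returns $P(x,y)$ and buys nothing; what you would actually need is the pointwise domination $\tilde u_{(2,\theta),attack(\frac{x+y}{2},\frac{x+y}{2})}(w_1,v_1,v_2)\ge\tilde u_{(2,\theta),attack(x,y)}(w_1,v_1,v_2)$, and that is false. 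Take a single 1-type adversary with $y<\tfrac{x+y}{2}<v_1<x$ and negligible 2-types: under $attack(x,y)$, the $x$ bid wins one item and pays $y$, so utility is $-y$; under the split $attack(a,a)$ with $a=\tfrac{x+y}{2}<v_1$, one $a$ bid wins and pays $a$, so utility is $-a=-\tfrac{x+y}{2}<-y$. The split is strictly worse in this realization, so any pointwise argument fails. Symmetrization, if it is to work, must exploit cancellation across realizations, which is precisely the hard part your sketch does not address.

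Your heuristic for sub-claim (ii)---that each fictitious sub-bid faces a ``Vickrey-like incentive to bid $\theta$''---also does not directly apply: the attacker has zero marginal value for a single item, so winning one item at any positive price is a loss, and the two sub-bids' win events are coupled (value $2\theta$ accrues only if both win). This is not the standard single-item Vickrey incentive, and the first-order condition $R'(\theta)=0$ would need to be established from the explicit integrals rather than from intuition. In short: the reduction of Part~2 to Part~1 is correct and matches the paper's own remark, but Part~1 remains open, the concavity claims are unproven, and the proposed pointwise coupling shortcut is demonstrably incorrect.
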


\noindent
The first item of this conjecture implies the second item (using Theorem~\ref{thm:uniform_half}). We verified the second item of this conjecture for the missing case of $g=2$ and non-split-attacks up to $n=9$ using the methods of section~\ref{sec:beta_distributions}, as shown in Figure~\ref{fig:different_alpha_n}. 

While we conjecture that the split attack yields the highest $q^*$ when $F$ is the uniform distribution, for other distributions this conjecture may not be correct, as the following counter-example demonstrates.\footnote{It is also incorrect that for all distributions the split attack is the best attack for a 2-type. We omit the counter-example for brevity of the exposition and are happy to share it upon request.}

\begin{example}
\label{ex:split_attack_dominated}
A distribution where an attack has higher $q^*$ value than the split attack. Specifically, we show that
$q^*_{n=3,g=1,\theta=1,x=1,y=0.2,F} > q^*_{n=3,g=2,\theta=1,x=1,y=1,F}$
for the distribution $$F = \begin{cases} 1 & \text{w.p. }0.5 \\
0.1 & \text{w.p. }0.5 \end{cases}$$

\noindent
(Note that one of the attack values is not in the distribution support. This is for the simplicity of the calculation and can be fixed by adding this value to the support, with a very small probability, similar to the proof of Theorem~\ref{thm:impossibility_n>=3}).

\vspace{5mm}

\noindent
{\bf Calculation of $q^*_{n=3,g=2,\theta=1,x=1,y=1,F}$:}
For truthful bidding, we have
$$E^{q,F,n=3}_{\hat{\theta}=(2,1),truth} = q^2 (\frac{1}{2} 0.9 + \frac{1}{4} 1.8) + 2q(1-q) (\frac{1}{4} \cdot 1 + \frac{1}{4} \cdot 1.8) + (1-q)^2 (\frac{1}{4}\cdot 1.8) $$

For the split attack, we have
$$ E^{q,F,n=3}_{\hat{\theta}=(2,1),attack(1,1)} = q^2(\frac{1}{4} \cdot 1.8) + 2q(1-q)(\frac{1}{4} \cdot 1.8) + (1-q)^2(\frac{1}{4} \cdot 2)$$

For truthful bidding to be a BNE we need: $$E^{q,F,n=3}_{\hat{\theta}=(2,1),truth} - E^{q,F,n=3}_{\hat{\theta}=(2,1),attack(1,1)} = q^2(\frac{1}{2}\cdot 0.9) + 2q(1-q)(\frac{1}{4}\cdot 1) - (1-q)^2(\frac{1}{4} \cdot 0.2) \geq 0$$ 
which holds if and only if $q \geq \frac{1}{2}(6 - \sqrt{34}) \approx 0.0845241$. 

\vspace{5mm}

\noindent
{\bf Calculation of $q^*_{n=3,g=1,\theta=1,x=1,y=0.2,F}$:}
For truthful bidding, we have
$$E^{q,F,n=3}_{\hat{\theta}=(1,1),truth} = q^2(\frac{3}{4} 0.9) + 2q(1-q) (\frac{1}{4} \cdot 1 + \frac{1}{4} \cdot 0.9) + (1-q)^2 (\frac{1}{4} \cdot 0.8)$$

For the attack $(1,1),(1,0.2)$, we have
$$ E^{q,F,n=3}_{\hat{\theta}=(1,1),attack(1,0.2)} = q^2(\frac{3}{4} \cdot 0.8) + 2q(1-q) (\frac{1}{2} \cdot 0.8) + (1-q)^2 (\frac{1}{4} \cdot 1)$$

For truthful bidding to be a BNE we need:
$$E^{q,F,n=3}_{\hat{\theta}=(1,1),truth} - E^{q,F,n=3}_{\hat{\theta}=(1,1),attack(1,0.2)} = q^2(\frac{3}{4} 0.1) + 2q(1-q)(\frac{1}{4} \cdot 0.3) - (1-q)^2 (\frac{1}{4} \cdot 0.2) \geq 0 $$

which holds if and only if $q \geq \frac{1}{5} (5 - \sqrt{15}) \approx 0.225403$. 

\end{example}

\section{Discussion}
\label{sec:Discussion}

In this work, we investigate the most simple model fit for a Bayesian analysis of VCG under Sybil attacks. We show both possibility and impossibility results.
Our results are given in terms of one specific property of the underlying distribution of types -- the probability $q$ that a player demands one item. If $q=1$ and all players demand one item, truthful bidding in VCG is known to be a dominant strategy even when false-name attacks are allowed. Intuitively, as $q$ gets closer to $1$ we would expect truthful bidding to be more robust to false-name attacks. We study the ``granularity threshold'', namely the minimal $q^*$ (given a specific setting) such that for every $q>q^*$, truthful bidding is a BNE in this setting. 

We obtain both possibility and impossibility results. Our main negative results shows that $q^*_n=1$ for any number of players $n \geq 3$. I.e., for any $q<1$ there exists a per-item value distribution such that truthful bidding in the VCG auction is not a BNE. The simplicity of our model makes this impossibility result more robust. Our main positive result shows that $q^*_{n=2}=\frac{2}{3}$. This means that for two players, if the probability of a 1-type is at least $\frac{2}{3}$ then truthful bidding is a BNE for any per-item value distribution.

Our impossibility result is a worst-case result and we believe that for many natural distributions $q^*_n$ is significantly smaller than 1 even for $n>2$. To demonstrate this issue we study the class of beta distributions, and show that $q^*$ is indeed significantly smaller than 1 for several such distributions using a computer algebra approach. In particular, we conjecture that $q^*_n \leq \frac{1}{2}$ for any $n$ and any beta distribution with $\beta=1$ (and any $\alpha$). One such example is the uniform distribution and for this case we give some supporting evidence in Section~\ref{sec:split_attack}. We believe that this is a challenging mathematical question that we leave open for future research. The discussion in Section~\ref{sec:split_attack} points attention to a specific kind of attack that we term the split attack, and it will be additionally interesting to further understand the conjecture in that section and the properties of this attack and its connection to $q^*$. 

Beyond the uniform distribution, it is interesting to determine whether a formula can be derived for the asymptotic behavior of $q^*$ as a function of $\alpha, \beta$ for beta distributions, or for specific $n$ values. Can we find larger classes of distributions that admit a nice analysis? Can we exploit approximations of other general distributions by Beta (or more generally polynomial) distributions to yield precise bounds for their respective $q^*$ values?  

While our model is certainly a special case of the general model of combinatorial auctions, its simplicity is misleading since the mathematical challenges that it poses are far from trivial, and the conceptual conclusions that it yields are far from obvious (in fact they sometime contradict basic intuition). We next discuss some of these conclusions, and the complexities and mathematical obstacles that hide underneath the misleadingly simple definition of model, which prevents a simple decomposition of the problem. We divide this discussion to two parts. We first discuss some intuitively desirable structural properties that are either incorrect or hard to prove. We then discuss some consequences of two possible generalizations of our model.

\subsection{Some desirable structural properties of our model}
\label{sec:structural_properties}
\begin{trivlist}

\item {\bf Is truthfulness monotone in $q$? } I.e, if an attack is beneficial for some $q$, does this imply that it must be beneficial for any $q' < q$. If this is the case, $q^*$ has a clear cut meaning: For any $q > q^*$, truthful bidding is a BNE, and for any $q< q^*$, it is not. This property is intuitive because, as we observed, generally an attack is more helpful while facing 2-type adversaries. We do not know if this property generally holds, even for $n=3$. We do know it is true for the split attack (a proof is given in Lemma~\ref{lem:split_is_monotone_k} as part of the proof of Theorem~\ref{thm:uniform_half}). One proof technique for general attack formats could be to show that, for any adversary profile, ``flipping'' one of the adversaries from a 2-type to a 1-type only helps truthful bidding. However quite surprisingly this is not true:

\theoremstyle{remark}
\newtheorem{example2}[theorem]{Example}
\begin{example2}
\label{ex:truth_q_monotonicity_attempt}
It's not always better for an attacker to face 2-types. Consider a type $(1,1)$ attacker with $attack(0.1,0.1)$, and adversarial bid $(1,0.15)$. The attacker is worse-off than bidding truthfully by $0.15$. But if the adversarial bid is $(2,0.15)$, then the attacker is worse-off by $0.7$. 
\end{example2}

\item {\bf Is truthfulness monotone in $g$? } I.e., if an attack $x,y$ is beneficial for some 1-type bidder in some setting $q,F,\theta$, then it is beneficial for a 2-type bidder in the same setting. Formally, $q^*_{n,g=2,\theta,x,y,F} \geq q^*_{n,g=1,\theta,x,y,F}$. This is intuitive since 1-type attackers bid for more items than they actually require, which seems risky. Moreover, we find empirically this inequality holds in the tested settings. Yet this conjecture is false, as can be extracted from Example~\ref{ex:split_attack_dominated}, where $q^*_{n=3,g=1,\theta=1,x=1,y=0.2,F} > q^*_{n=3,g=2,\theta=1,x=1,y=0.2,F}$ (the calculation of the R.H.S.~expression is not included in the example but can be derived directly). 

\item {\bf Is $q^*_{F(alpha,beta)}$ monotonically decreasing in $\alpha$, even just for $\beta=1$?} An empirical observation out of Figure~\ref{fig:different_alpha_n} is that for $\beta = 1$, $q^*_{F_{\alpha, \beta = 1}}$ is monotone in $\alpha$. We do not know if this is generally correct.

\item {\bf Is $q^*$ monotone in $n$?} As discussed in the introduction, both for $q^*_n$ (Theorem~\ref{thm:impossibility_n>=3}) and for $q^*_{n,F}$ for some given distributions $F$ (Figure~\ref{fig:different_alpha_n}), it seems $q^*$ is monotonically \emph{increasing} in $n$, which we believe is surprising.

\end{trivlist}

\subsection{The necessity of some of our assumptions}

\begin{trivlist}

\item {\bf General type distributions. }
Our model assumes $F \stackrel{def} = F_1=F_2$, i.e., the same per-item value distribution for different item demand types. When $F_1 \neq F_2$ our main positive result for $n=2$ does not generally hold as the following example demonstrates.

\begin{example}
\label{ex:F1F2}
For $n=2$ and for all $q<1$ there are distributions $F_1(q), F_2(q)$ such that truthful bidding is not a Bayesian-Nash Equilibrium. Consider the following distributions:

$$F_1 = \begin{cases}
   0.5 & w.p. \epsilon(q) \\
   \epsilon(q) & w.p. 1 - \epsilon(q) \\
\end{cases}, F_2 = 0.2 w.p. 1.$$

\noindent
I.e., a 1-type bidder has per-item value $\epsilon(q)$ with high probability, and a 2-type bidder always has a per-item value of $0.2$. Now fix some $q< 1$, choose $\epsilon(q) = \min \{0.1\frac{1-q}{q},1 \}$, and consider bidder $2$ with true type $(2,0.2)$. If she bids an attack $(1,0.5),(1,0.5)$:

\begin{itemize}

\item If faced with a $(2,0.2)$ adversary, attacker pays nothing, instead of paying $0.4$. 

\item If faced with a $(1,\epsilon)$ adversary, attacker pays $2\epsilon$ instead of $\epsilon$. 

\item If faced with a $(1,0.5)$ adversary, attacker gets the item (worth $0.4$ by her valuation) and pays $1$, rather than not winning in case of truthful bid. 

\end{itemize}

\noindent
Her expected utility is therefore: 
\[
\begin{split}
    E^{q,F_1,F_2,n=2}_{(2,0.2),attack(0.5,0.5)}[u] & - E^{q,F_1,F_2,n=2}_{(2,0.2),truth}[u] =
    (1-q)0.4 - q(1-\epsilon)\epsilon - q\epsilon 0.5 = \\
    & (1-q)0.4 - (1-q)0.1(1-\epsilon) - 0.05(1-q) \geq 0.25(1-q) > 0.
    \end{split}
    \]

\end{example}

\noindent
This counter-example has the property that the expected value of a 1-type per-item valuation must decrease with $q$ for the attack to succeed. An interesting future research question is whether Theorem~\ref{thm:global_granularity_n=2} still holds in some capacity when the per-item values are not identical but rather bound together by a looser correlation or stochastic dominance criterion.

\item {\bf More than two items.}
Given the computer algebra methods for the beta distribution it is possible to discuss a larger number of items. In the two item case we were able to prove Lemma~\ref{lem:attack_format} that shows a bidder has two alternatives - to submit a truthful bid or submit a pair $(1,x),(1,y)$ with $0\leq y \leq x \leq 1$. This results in a tri-variate polynomial with q,x,y. With more items (e.g., 3) one might consider other scenarios: The bidder submits (3,x), or $(2,x),(1,y)$, etc - but still a small finite space that depends on the number of items. For each case one derives a polynomial as done in section 5, and using the computer algebra solvers can come up with analytic results. Though computationally more demanding, in theory this could extend to any number of items. Notice that in the two item case, there is a single number q that measures ‘granularity’ by quantifying the probability of a 1-item demand bidder to appear. Once we analyze 3 items or more, one needs choose a finer way of defining granularity - for example, we could say that having the probability vector (0.4,0.5,0.1) over the amounts (1,2,3) of item demand is more granular than (0.4,0.1,0.5), even though the “q” value as defined for both is 0.4.

\end{trivlist}

\section{Acknowledgements}
We wish to thank Doron Zeilberger and Christoph Koutschan for their kind advise regarding symbolic identity proving. We also wish to thank Noam Neer for his advise regarding polynomial positivity proof techniques.

Yotam Gafni and Moshe Tennenholtz were supported by the European Research Council (ERC) under the European Union’s Horizon 2020 research and innovation programme (Grant No. 740435).  

Ron Lavi was partially supported by the ISF-NSFC joint research program (grant No. 2560/17).

\appendix

\section{Technical Assumptions for Analysis}

The following section constitutes a proof for Theorem~\ref{thm:attack_format} in Section 2:
\attackformat*

Lemma~\ref{lem:attack_format} shows that we can restrict attention to attacks of the form $(1,x),(1,y)$. Lemmas~\ref{lem:vcg_homogenous},\ref{lem:distribution_assumption} show that when considering $q^*_n$, we can w.l.o.g. assume either $\theta = 1$ or $x=1$, as for any distribution $F$ and parameters $x,\theta$ such that an attack is beneficial, we can construct a distribution $F'$ with either $\theta = 1$ or $x=1$ and an attack that is beneficial. Lastly, we show that for 1-type bidders it is w.l.o.g. to assume $x\leq \theta$, and so combined with the previous conclusion we may assume $\theta = 1$ when analyzing $q^*_n$ for 1-types.  

\begin{lemma}
\label{lem:attack_format}
For any attack of bidder $n$, $S = \hat{\theta_1'},...,\hat{\theta_m'}$,
one of the following must be true:
\begin{itemize}
\item There exists an attack $(1,x),(1,y)$ such that for any adversary bids, bidder $n$'s expected utility from $(1,x),(1,y)$ is not lower than her utility from S, or,
\item For any adversary bids, $n$'s utility from the truthful bid is not lower than her utility from S.
\end{itemize}
\end{lemma}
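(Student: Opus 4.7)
The plan is to exploit the fact that with only two items available, the VCG allocation assigns at most two items to bidder $n$, so at most one 2-type bid or at most two 1-type bids from $S$ can ever be in the winning set. My goal is to reduce any multi-bid attack $S$ to either truthful bidding or to a canonical two-1-type-bid attack $(1,x),(1,y)$ in a pointwise (per-adversary-profile) fashion.

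First, I would establish a ``losing-bid removal'' principle: for any fixed adversary profile, removing a losing bid of $n$ from $S$ keeps the winning allocation of $n$ the same while only (weakly) decreasing the VCG payments on $n$'s winning bids. The reason is that for any winning bid $i$, the counterfactual welfare $SW(B\setminus i)$ uses the other bids in $B$; dropping a losing bid of $n$ can only (weakly) decrease $SW(B\setminus i)$ (fewer bids to work with when $i$ is absent), hence decreases the payment $SW(B\setminus i)-(SW(B)-g_i\theta_i)$. This lets me argue that only the bids of $S$ that ever win matter.

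Next, I would eliminate 2-type bids from $S$ by case analysis on $n$'s true type:
\begin{itemize}
\item If $n$'s true type is $(2,\theta)$, a 2-type attack bid $(2,z)$ participates in VCG essentially as a single-bid agent competing for the package; by standard single-agent VCG truthfulness (for the 2-item package good), bidding the truthful $(2,\theta)$ weakly dominates any such $(2,z)$ on every adversary profile. More generally, any winning scenario of $S$ in which a 2-type bid of $n$ wins both items yields utility at most that of truthful bidding.
\item If $n$'s true type is $(1,\theta)$, a 2-type attack bid winning both items gives $n$ value only $\theta$ (she is single-item minded) while paying for the package; a direct comparison shows truthful bidding always weakly dominates this situation (truthful is individually rational, whereas the 2-type attack overpays for the unneeded second item).
\end{itemize}
In both sub-cases I would conclude either that truth dominates outright (giving option~(b)), or that we may assume $S$ contains only 1-type bids without increasing $n$'s utility.

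Finally, given $S$ is now a list of 1-type bids, I would define $x\geq y$ as the two largest per-item values appearing in $S$ (with $y=0$ if only one bid exists), and argue that the attack $(1,x),(1,y)$ pointwise weakly dominates $S$. For any adversary profile, at most two of $S$'s bids win, and if any do win they must be among the top two per-item values in $S$; by the losing-bid removal principle, deleting the other 1-type bids in $S$ keeps the allocation of $n$ the same and only decreases payments. Thus the resulting attack $(1,x),(1,y)$ (option~(a)) pointwise weakly dominates $S$. The main obstacle I anticipate is the 2-type elimination step: the transformation $(2,z)\leftrightarrow(1,z),(1,z)$ is not always utility-improving on every adversary profile, so I must carefully argue that in the ``bad'' scenarios for the transformation, truthful bidding dominates $S$ directly, ensuring that one of the two options (a) or (b) always holds uniformly over adversary profiles.
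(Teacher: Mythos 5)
The overall strategy — prune losing bids, then reduce to at most one 2-type and two 1-type bids — matches the paper, but the crux of the argument, eliminating the remaining 2-type bid, is where your proposal has a genuine gap that your own final paragraph correctly flags but does not resolve. Your plan is to case-split on bidder $n$'s \emph{true type}, invoking single-bid VCG truthfulness. This is the wrong axis of case analysis: whether the 2-type bid $(2,z)$ or the pair of 1-type bids can win in a given profile is governed by the \emph{internal comparison of $n$'s own bids}, not by her true type, and you cannot pointwise decide ``drop $(2,z)$ here'' and ``drop the 1-type bids there'' because the reduced strategy (either a fixed $(1,x),(1,y)$ or truth) must be chosen once, uniformly over adversary profiles.

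The paper's proof resolves this by splitting on the relation $2w_1' \lessgtr v_1' + v_2'$ among $n$'s own bid values. If $2w_1' \le v_1'+v_2'$, then $(2,w_1')$ is a losing bid against \emph{every} adversary profile (the auctioneer always prefers $n$'s own 1-type pair), so it may be deleted uniformly, landing in option (a). If $2w_1' > v_1'+v_2'$, then the two 1-type bids can \emph{never} both win (the auctioneer would prefer $n$'s 2-type bid), so $v_2'$ is always a losing bid and may be deleted; what remains is $\{(2,w_1'),(1,v_1')\}$, of which at most one can win in any profile, and in each profile the relevant single bid is weakly dominated by truth via single-bid VCG truthfulness — giving option (b) uniformly. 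This internal case split is exactly what is missing from your proposal; without it, the true-type split does not deliver the required uniform dominance, and your ``more generally, any winning scenario\ldots'' observation is only a per-profile statement that cannot be aggregated into one of the two options. Your losing-bid removal principle and the final step (once $S$ has only 1-type bids, keep the top two) are both correct and are also used in the paper.
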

\begin{proof}
If bidder $n$ submits more than one 2-type bid, all but the top one never enters the winning set. The same holds for all but the top two 1-type bids submitted. For bidder $n$, adding more losing bids (that end up in $L(B)$ regardless of the adversary bids) can only increase its payments. We conclude that a bidder that wishes to maximize utility never submits more than one 2-type bid and two 1-type bids. We thus consider that bidder $n$ prefers to bid $w_1', v_1', v_2'$, some of which may be zero bids or equivalently omitted, over bidding $S$.
\begin{itemize}
\item If $2w_1' \leq v_1' + v_2'$, regardless of the adversary bids, $(2,w_1')$ is never in the winning set. As noted before, in this case including this bid in $S$ could only increase the price for bidder $n$. This is true for any adversary bids and so also in expected utility, and falls into the first case of the lemma. 
\item Otherwise, for any adversary bids, it's impossible for two of bidder $n$'s bids to enter $W(B)$, since only two items can be allocated, and thus only the bids $v_1', v_2'$ are candidates to enter $W(B)$ together. But since $2w_1' > v_1' + v_2'$, and the auctioneer maximizes over social welfare, this can not occur. So bidder $n$ would not bid $v_2'$, as it must be a losing bid. 

Now, given any adversary bid setup $B$, assume bidder $n$ bids $S = \{w_1,v_1\}$. Then either $w_1 \in W(B)$, or $v_1 \in W(B)$, or none of them are in $W(B)$. In the first case, the bidder is better off bidding just $w_1$, as it will still get in the winning set, but without the losing bid $v_1$. In the second case, the bidder is better off bidding just $v_1$. And in the last case the bidder has 0 utility. Since VCG is truthful in dominant strategies for single bids (i.e., bidding truthfully is better than bidding any single bid), and for each case we observe that bidder $n$ is better off bidding some single bid over bidding $\{w_1,v_1\}$, we conclude that for each of the cases the bidder is better off bidding truthfully, and so summing all together, its expected utility bidding truthfully is at least as good as its expected utility bidding $S=\{w_1,v_1\}$. 

\end{itemize}
\end{proof}







\begin{lemma}
\label{lem:vcg_homogenous}
The VCG outcome is homogeneous under multiplication of bids and values, i.e., if we multiply all bids and values by a constant $\alpha$, the winning set remains the same, and payments and utilities are multiplied by $\alpha$ as well.
\end{lemma}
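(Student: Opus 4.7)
The proof will be a direct verification from the definitions of the VCG allocation and payment rules. The key observation is that every quantity involved (social welfare, payment, utility) is a linear function of the reported bid values, so scaling all bids by $\alpha > 0$ scales each of these quantities by exactly $\alpha$, while leaving any comparison (argmax or threshold) unchanged.

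First I would handle the allocation. Recall that the allocation rule picks $W(B)$ to maximize $\sum_{i \in I} g_i \cdot \theta_i$ subject to the feasibility constraint $\sum_{i \in I} g_i \leq 2$. If we replace every $\theta_i$ by $\alpha \theta_i$, the objective becomes $\alpha \sum_{i \in I} g_i \cdot \theta_i$, and the feasibility constraint is unaffected (it depends only on the $g_i$'s). Since $\alpha > 0$, scaling the objective by $\alpha$ preserves the set of maximizers, so $W(\alpha B) = W(B)$ and $SW(\alpha B) = \alpha \cdot SW(B)$.

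Next I would handle the payments. For a winning bid $b \in W(B)$, the VCG payment is $SW(B \setminus \hat{\theta}_b) - (SW(B) - g_b \cdot \theta_b)$. By the previous paragraph each of the social-welfare terms scales by $\alpha$ under the bid rescaling, and $g_b \cdot (\alpha \theta_b) = \alpha (g_b \cdot \theta_b)$, so the payment scales by $\alpha$. Losing bids pay zero both before and after scaling.

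Finally, bidder $i$'s utility is $g_i \cdot \theta_i - p_i$ when she wins a number of items equal to her demand and $-p_i$ otherwise; the set of items she wins depends only on $W(B)$, which is unchanged, so the case split is preserved, and each term ($g_i \cdot \theta_i$ and $p_i$) scales by $\alpha$. No part of this argument is subtle; the main thing to be careful about is that tie-breaking in the argmax must be consistent (for example, lexicographic or by identity index), so that when the objective is merely rescaled by a positive constant the same winner set is selected. I would state this convention once at the start and then the three steps above follow immediately.
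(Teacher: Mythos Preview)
Your proposal is correct and follows essentially the same approach as the paper: observe that the winning set is invariant under positive scaling of the objective, then conclude that payments and utilities scale by $\alpha$. Your version is in fact more detailed than the paper's (which just states these facts in a couple of sentences), and your explicit remark about consistent tie-breaking is a useful clarification the paper omits.
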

\begin{proof}
Consider adversary bids $(g_1,\theta_1),\ldots, (g_{\tilde{n}},\theta_{\tilde{n}})$, and a bidder of true type $(g,\theta)$ that bids $(g_1',\theta_1'),\ldots, (g_m',\theta_m')$. We show that $$u^{(g,\alpha\theta)}_{(g_1',\alpha\theta_1'),\ldots, (g_m',\alpha\theta_m')}((g_1, \alpha \theta_1), ..., (g_{\tilde{n}}, \alpha \theta_{\tilde{n}})) = \alpha u^{(g,\theta)}_{(g_1',\theta_1'),\ldots, (g_m',\theta_m')}((g_1, \theta_1), ..., (g_{\tilde{n}},  \theta_{\tilde{n}}))$$. First, we notice that the winning set in both cases have the same indices. The price is also determined by the same non-winning set indices. Both the value and the price are multiplied by $\alpha$, and so the overall utility is also multiplied by $\alpha$. Notice that this can easily be extended to apply in the case of expected utilities.
\end{proof}

We use Lemma~\ref{lem:vcg_homogenous} to assume that if we have a bound $M < 1$ for the bid and true type values for both attacker and her adversaries, then we can reset $M=1$ (``normalize''). `

\begin{lemma}
\label{lem:distribution_assumption}
If there exists such a distribution F that an attack $(1,x),(1,y)$ is beneficial for some true type $\theta$, there must exist a distribution $\bar{F}$ with the same attack and true type $\theta$ such that $supp \bar{F} \subseteq [0,\max(x,\theta)]$. 
\end{lemma}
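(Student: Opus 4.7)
The plan is to construct $\bar F$ as $F$ conditioned on the event that values lie in $[0,M]$, where $M = \max(x,\theta)$, and to show that the attack remains beneficial under $\bar F$. By the symmetry of the attack $(1,x),(1,y)$ we may assume $y \leq x$, so every attacker bid --- truthful or under the attack --- has per-item value at most $M$. The case $M = 1$ is immediate (then already $\supp F \subseteq [0,M]$), so assume $M < 1$.

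First I would decompose $E_F[u_{truth} - u_{attack}]$ via the event $E = \{\theta_i \leq M \text{ for every adversary } i\}$. On $E$, the adversary values are i.i.d.\ from the conditional distribution $\bar F$, so the conditional expectation equals $E_{\bar F}[u_{truth} - u_{attack}]$. If one shows that the conditional expectation over $E^c$ is non-negative, then
\[
E_F[u_{truth}-u_{attack}] \;\geq\; \Pr_F[E]\cdot E_{\bar F}[u_{truth}-u_{attack}],
\]
and the assumption $E_F[u_{truth} - u_{attack}] < 0$ (attack beneficial under $F$) together with $\Pr_F[E] > 0$ yields $E_{\bar F}[u_{truth}-u_{attack}] < 0$, establishing the conclusion since $\supp \bar F \subseteq [0,M]$ by construction.

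The main step is the pointwise claim: for every adversary profile $B$ containing at least one per-item value strictly above $M$, $u^{\hat\theta}_{truth}(B) \geq u^{\hat\theta}_{attack(x,y)}(B)$. The intuition is that any adversary bid exceeding $M$ strictly dominates every attacker bid, so it secures a slot in the VCG allocation regardless of the attacker's strategy, and the comparison collapses to how the attacker fights for the remaining slot. I would prove this by splitting on the attacker's declared type: for $g=2$, the attack can never win both items against a dominating adversary bid (either a 2-type with value $>M$ beats every 2-item bundle from the attacker, or a 1-type with value $>M$ occupies one of the two 1-type slots), so the attack yields utility at most zero, weakly worse than truthful $(2,\theta)$. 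For $g=1$, the truthful bid $(1,\theta)$ and the attack bid $(1,x)$ compete for the second 1-type slot, and under the attack the losing identity $(1,y)$ enters the VCG externality calculation for the winning bid, weakly raising its payment by $\max(0,y-v^{(2)})$ where $v^{(2)}$ is the relevant second-highest adversary value.

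The hardest bookkeeping will be in the $g=1$ branch, where one has to enumerate which of $(1,\theta)$, $(1,x)$, $(1,y)$ land in the VCG winner set as a function of the relative order of the adversary per-item values, and verify that the payment increase induced by the attacker's own losing identity always dominates any change in who wins. Tie-breaking between equal-valued bids can be handled either by assuming ties break against the attacker (the worst case for the inequality) or by the standard perturbation-and-limit argument. The degenerate possibility $\Pr_F[E] = 0$ is self-consistent: it would force all adversary values to exceed $M$ almost surely, making $E_F[u_{truth} - u_{attack}] \geq 0$ by the pointwise claim itself, contradicting the hypothesis that the attack is beneficial under $F$.
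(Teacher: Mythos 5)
Your overall structure matches the paper's: condition $F$ on $[0,M]$ with $M=\max(x,\theta)$, split the expectation on the event that all adversary values lie in $[0,M]$, and prove a pointwise claim on the complement that truthful bidding weakly dominates the attack. Your formulation of the decomposition step is in fact a bit cleaner than the paper's chain of inequalities, since you isolate $E_F \geq \Pr_F[E]\cdot E_{\bar F}$ and then invoke the hypothesis $E_F[u_{truth}-u_{attack}]<0$ explicitly. Your treatment of the degenerate $\Pr_F[E]=0$ case is also a careful touch the paper elides.

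The divergence is in how the pointwise claim is proven, and this is where your $g=1$ branch stalls. The paper's observation is shorter and bypasses your proposed bookkeeping entirely: whenever some adversary per-item value exceeds $x$, the bid $(1,y)$ (with $y\le x$) can \emph{never} be in the winning set --- either the top 1-type adversary value $v_1 > x \ge y$ crowds it out, or the top 2-type adversary value $w_1 > x$ makes $2w_1 > x+y$ and no pair including $(1,y)$ is welfare-maximal. A never-winning bid can only raise the attacker's VCG payments, so dropping $(1,y)$ weakly improves; that leaves a single-bid deviation $(1,x)$, and VCG's dominant-strategy truthfulness immediately gives that bidding $(g,\theta)$ truthfully is weakly better still, \emph{for both} $g=1$ and $g=2$ with no case split. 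Your $g=2$ argument (attacker cannot secure both slots against a dominating bid) is correct and essentially equivalent. But your $g=1$ sketch enumerates winner-set configurations and asserts a payment-increase formula $\max(0,y-v^{(2)})$ that is not accurate for the two-item VCG externality (the payment of a winning $(1,x)$ is $SW(B\setminus(1,x)) - (SW(B)-x)$, and the presence of $(1,y)$ can change this through its effect on the optimal allocation of the \emph{remaining} items, not just through a second-price comparison); you acknowledge the bookkeeping is unfinished. So the plan would work with more effort, but the ``drop the always-losing bid, then appeal to DSIC'' shortcut is the move you're missing, and it is what makes the paper's version a three-line argument.
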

\begin{proof}
If $v_1 > x$, then $y$ never enters the winning set, and so bidding just $(1,x)$ is better, and by VCG truthfulness for single bids, bidding truthfully is even better and the attack cannot be beneficial. If $\tilde{w_2} > x \geq y$ (and following the first part of the proof, we can also assume $v_1 \leq x$), then the attack utility is 0 and since truthfulness in VCG is individually rational, the attack is not beneficial. This holds whether the true type has $g = 1$ or $g=2$. 

Define the pdf $\bar{f}(\psi) = \begin{cases}
\frac{f(\psi)}{F(\max(x, \theta))} & 0\leq \psi \leq \max(x, \theta) \\
0 & \psi > \max(x,\theta)
\end{cases}$
and its corresponding cdf $\bar{F}$. Let $A$ be the event that an adversary with per-item value greater than $\max \{x,\theta\}$ exists. Then

\[
\begin{split}
& E^{q,\bar{F},n}_{\hat{\theta},attack(x,y)}[u] - E^{q,\bar{F},n}_{\hat{\theta},truth}[u] =\\
&(E^{q,\bar{F},n}_{\hat{\theta},attack(x,y)}[u | A] - E^{q,\bar{F},n}_{\hat{\theta},truth}[u | A])\underbrace{Pr_{\bar{F}}[A]}_{=0} + (E^{q,\bar{F},n}_{\hat{\theta},attack(x,y)}[u | \bar{A}] - E^{q,\bar{F},n}_{\hat{\theta},truth}[u | \bar{A}])Pr_{\bar{F}}[\bar{A}] = \\
& E^{q,\bar{F},n}_{\hat{\theta},attack(x,y)}[u | \bar{A}] - E^{q,\bar{F},n}_{\hat{\theta},truth}[u | \bar{A}] = \\
& E^{q,F,n}_{\hat{\theta},attack(x,y)}[u | \bar{A}] - E^{q,F,n}_{\hat{\theta},truth}[u | \bar{A}] \geq \\
& (E^{q,F,n}_{\hat{\theta},attack(x,y)}[u | \bar{A}] - E^{q,F,n}_{\hat{\theta},truth}[u | \bar{A}])Pr_F[\bar{A}] + (\underbrace{E^{q,F,n}_{\hat{\theta},attack(x,y)}[u | A] - E^{q,F,n}_{\hat{\theta},truth}[u | A]}_{\leq 0})Pr_F[A] = \\
& E^{q,F,n}_{\hat{\theta},attack(x,y)}[u] - E^{q,F,n}_{\hat{\theta},truth}[u]
\end{split}
\]
\end{proof}

\begin{lemma}
\label{one_type_assumption_s1}
(1-type). If there exists a true type $(1, \theta)$ that has a beneficial attack, then there also exists a true type $(1, \theta')$ that has a beneficial attack $x,y$ and $y \leq x \leq \theta'$.
\end{lemma}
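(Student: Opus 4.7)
The plan is a case split on whether $x \leq \theta$ in the given beneficial attack. The requirement $y \leq x$ comes for free because $attack(x,y)$ is an unordered pair of 1-type bids; we simply relabel. If $x \leq \theta$ already, we take $\theta' = \theta$ and are done. The heart of the lemma is the case $x > \theta$, where I would take $\theta' := x$ and argue that the \emph{same} attack $(1,x),(1,y)$ remains beneficial when the true type is raised to $(1,x)$, which satisfies $y \leq x = \theta'$.

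To establish this monotonicity, define $\Delta(\theta') := E^{q,F,n}_{(1,\theta'), attack(x,y)}[u] - E^{q,F,n}_{(1,\theta'), truth}[u]$. The attack side is linear in $\theta'$ with slope $\Pr[\text{attack wins at least one item}]$, because for a 1-type bidder the value obtained equals $\theta'$ whenever at least one submitted bid wins (and $0$ otherwise), while neither the attack bids nor the induced VCG payments depend on $\theta'$. The truthful side is convex in $\theta'$: by VCG's single-bid truthfulness it equals $\sup_b E[\theta' \cdot \mathbf{1}\{b \text{ wins}\} - p(b)]$, a pointwise supremum of linear-in-$\theta'$ functions; by the envelope theorem its derivative is $\Pr[(1,\theta') \text{ wins}]$, which is non-decreasing in $\theta'$. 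Bounding a convex increment by the right-endpoint derivative,
\[
E^{q,F,n}_{(1,x),truth}[u] - E^{q,F,n}_{(1,\theta),truth}[u] \;\leq\; (x - \theta)\,\Pr[(1,x) \text{ wins}],
\]
and therefore
\[
\Delta(x) - \Delta(\theta) \;\geq\; (x-\theta)\bigl(\Pr[\text{attack wins}] - \Pr[(1,x) \text{ wins}]\bigr).
\]

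The main obstacle is the pointwise inequality $\Pr[\text{attack wins}] \geq \Pr[(1,x) \text{ wins}]$. I would prove it by showing that, for every realization of adversary bids, if the single bid $(1,x)$ lies in the VCG winning set $W_1$, then at least one of the attack bids $(1,x),(1,y)$ lies in the attack winning set $W_2$. Intuitively, adding the weaker copy $(1,y)$ with $y \leq x$ cannot strictly displace $(1,x)$, since any candidate allocation that swaps $(1,x)$ for $(1,y)$ has weakly smaller welfare. Formally, $W_1$ remains feasible in the attack instance, so $\SW(W_2) \geq \SW(W_1)$; if $W_2$ contained no bid of bidder $n$, then $\SW(W_2)$ would be bounded by the best adversary-only welfare, which is at most $\SW(W_1)$, forcing equality, which may be broken in favor of an allocation containing $(1,x)$ (consistent with the standard tie-breaking implicit throughout the paper). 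Combining this pointwise fact with the displayed inequality yields $\Delta(x) \geq \Delta(\theta) > 0$, so the attack $(1,x),(1,y)$ is beneficial for the true type $(1,x)$, completing the reduction.
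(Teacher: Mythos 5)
Your proof is correct and arrives at the same conclusion, but by a genuinely different route than the paper. The paper's proof is pointwise: it fixes the realization of adversary bids, writes out the explicit VCG utilities, and verifies by a case analysis (on $\theta_{adv}$ vs.\ the thresholds $v_2, w_1$, etc.) that $\tilde{u}_{(1,x),truth}-\tilde{u}_{(1,\theta),truth}\le x-\theta = \tilde{u}_{(1,x),attack}-\tilde{u}_{(1,\theta),attack}$ realization by realization. You instead work at the level of expected utilities, invoking the standard envelope/convexity machinery: single-bid incentive compatibility of VCG makes $E^{truth}(\theta')$ a pointwise supremum of affine functions, hence convex with subgradient $\Pr[(1,\theta')\text{ wins}]$, while the attack expected utility is exactly affine with slope $\Pr[\text{attack wins}\ge 1 \text{ item}]$ (correctly using that a 1-type's value is capped at one item even if both fake bids win). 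The reduction then hinges, in both your proof and the paper's, on the same qualitative fact --- that appending $(1,y)$ with $y\le x$ cannot turn a winning $(1,x)$ into a non-winning bidder --- which the paper asserts in one line (``adding the $(1,y)$ may only add winning events'') and you justify by comparing $\SW(W_1)$, $\SW(W_2)$, and the best adversary-only welfare, with an explicit acknowledgement of the tie-breaking convention. Your route buys a cleaner, more conceptual argument that avoids unpacking the piecewise VCG utility formulas; the paper's route is more elementary and proves the stronger pointwise statement, from which the expectation version follows trivially. Both are sound.
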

\begin{proof}
Let a bidder with true type $(1, \theta)$ attack with $x,y$ such that $\theta<x$. We show that the same attack is beneficial for the true type $(1,x)$, for any adversary bids $\hat{\theta_1}, ..., \hat{\theta_{\tilde{n}}}$. That is, for some $(w_1,v_1,v_2)$
\[
\begin{split}
 \tilde{u}_{(1,x), attack(x,y)} - \tilde{u}_{(1,x), truth} \geq  \tilde{u}_{(1,\theta), attack(x,y)} - \tilde{u}_{(1,\theta), truth},
\end{split}
\]

or equivalently 
\begin{equation}
\label{eq:1_type_condition_theta<=x}
\begin{split}
 \tilde{u}_{(1,x), attack(x,y)} - \tilde{u}_{(1,\theta), attack(x,y)} \geq  \tilde{u}_{(1,x), truth} - \tilde{u}_{(1,\theta), truth}.
\end{split}
\end{equation}

First, notice that the L.H.S. of Equation~\ref{eq:1_type_condition_theta<=x} is non-negative, as increasing the value of the item without changing the attack only has the effect of increasing the value in case of winning, without affecting payments.

Now, notice that if the R.H.S. of Equation~\ref{eq:1_type_condition_theta<=x} is strictly positive, then the bid $(1,x)$ is in the winning set when bidding it truthfully. Thus it is also in the winning set when attacking (adding the $(1,y)$ may only add winning events), and the L.H.S. equals $x - \theta$ (the payments remain the same as the attack is the same, but the value the bidder extracts changes by $x - \theta$). We now show by case analysis that the R.H.S. is always smaller than this amount:
\[
\begin{split}
& \tilde{u}_{(1,x), truth} - \tilde{u}_{(1,\theta), truth} = \\
& \begin{cases}
x- \theta & v_2 \leq \theta, w_1 \leq \frac{v_1+v_2}{2} \\
x - v_2 & \theta < v_2 \leq x, w_1 \leq \frac{v_1 + v_2}{2} \\
x - \theta & v_2 \leq \theta, \frac{v_1 + v_2}{2} \leq w_1 \leq \frac{\theta+v_1}{2} \\
x - 2w_1 + v_1 & v_2 \leq \theta, \frac{\theta+v_1}{2} < w_1 \leq \frac{x+v_1}{2} \\
x - 2w_1 + v_1 & \theta < v_2 \leq x, \frac{v_1 + v_2}{2} < w_1 \leq \frac{x+v_1}{2} \\
0 & Otherwise. 
\end{cases}
\end{split}
\]

In any of the cases it is not too hard to show that given the case conditions, the case expression is less than $x - \theta$. 
Notice that this can be extended to apply to the case of expected utilities.
\end{proof}

\section{$q=1$ induces Bayesian Resilience}
The following section constitutes a proof for Lemma~\ref{lem:q=1_analysis} in Section 2.

\qanalysis*

\begin{proof}
By Theorem~\ref{thm:attack_format} we can always assume a 1-type valuation $\theta$ is 1. Then define 
\[
\begin{split}
& \Delta E \stackrel{def}{=} E^{q=1,F,n}_{(g,1),truth}[u] - E^{q=1,F,n}_{(g,1),attack(x,y)}[u] = \\ 
& E^{k=\tilde{n},F,n}_{(g,1),truth}[u] - E^{k=\tilde{n},F,n}_{(g,1),attack(x,y)}[u] =  E_{v_1,v_2}(\tilde{\Delta}(v_1, v_2)),
\end{split}
\] where
$$\tilde{\Delta}(v_1, v_2) \,{=}\, \tilde{u}_{(g,1), truth}(w_1=0,v_1,v_2) - \tilde{u}_{(g,1), attack(x,y)}(w_1=0, v_1, v_2).$$ 
It thus suffices to show $\forall v_1, v_2, \tilde{\Delta}(v_1,v_2) \geq 0$ to attain $\Delta E \geq 0$. For a 1-type bidder ($g=1$), we have
\[
\begin{split}
    \tilde{\Delta}(v_1,v_2) = \begin{cases}
    1 - v_2 \geq 0 & x < v_2 \\
    0 & x \geq v_2 \geq y \\
    y - v_2 > 0 & x, v_1 \geq y > v_2 \\
    2v_1 - v_2 \geq 0 & x\geq y > v_1. \\
    \end{cases}
\end{split}
\]
For a 2-type bidder, 
we can restrict attention to the cases where the two attack bids are the winning set; otherwise the attacker has non-positive utility, while the truthful strategy in VCG has non-negative utility as VCG is individually rational. We thus only consider $y\geq v_1$. However, in this case, for any $v_1, v_2$, we have
$$\tilde{u}_{(2,\theta), truth} = (2\theta - v_1 - v_2) \mathbbm{1}_{\theta\geq v_1} \geq (2\theta - 2v_1)\mathbbm{1}_{\theta\geq v_1} \geq 2\theta - 2v_1 = 
\tilde{u}_{(2,\theta), attack(1,y)}.$$

Note that it is actually somewhat redundant to prove truthful bidding for the 2-type attacker, Since given $q=1$ it is not in the support of the distribution to ever have a 2-type attacker. 
\end{proof}

\section{Proof of Lemma~\ref{two_type_q*}}
The following section constitutes a proof for Lemma~\ref{two_type_q*} in Section 3.

\twotypeq*

\begin{proof}

In this proof, we use the simplified notations 
$u^{truth}_{(g_{adv},\theta_{adv})} = u^{\hat{\theta}=(2,\theta)}_{\theta'_1=(1,1)}((g_{adv},\theta_{adv}))$ for the utility of bidder $2$ when bidding truthfully faced with a truthful adversary of type $(g_{adv},\theta_{adv})$. We write $u^{attack(x,y)}_{(g_{adv},\theta_{adv})} = u^{\hat{\theta}=(2,\theta)}_{\theta'_1=(1,x), \theta'_2=(1,y) }((g_{adv},\theta_{adv}))$ for the utility of bidder $2$ when bidding the attack $(1,x),(1,y)$ faced with a truthful adversary of type $(g_{adv},\theta_{adv})$. Notice that the difference from the notations used in the proof of Lemma~\ref{lem:one_type_q*} is that now bidder $2$ is a 2-type with per-item value of $\theta$. 

A similar argument to Lemma~\ref{lem:one_type_q*} yields that it suffices to show that for any parameters $\theta, \theta_{adv}, x, y$, $$u^{truth}_{(2,\theta_{adv})} - u^{attack(x,y)}_{(2,\theta_{adv})} + 2(u^{truth}_{(1,\theta_{adv})} - u^{attack(x,y)}_{(1,\theta_{adv})})$$

We first derive the explicit expression for the utilities:
\[
\begin{aligned}
& u^{attack(x,y)}_{(2,\theta_{adv})} = 
      \begin{cases}
2\theta & 0 \leq \theta_{adv} \leq \frac{y}{2} \\
2\theta - 2\theta_{adv} + y & \frac{y}{2} < \theta_{adv} \leq \frac{x}{2} \\
2\theta - 4\theta_{adv} + y + x & \frac{x}{2} < \theta_{adv} \leq \frac{x + y}{2}  \\
0 & otherwise,
\end{cases} \\[6mm]
    & u^{truth}_{(2,\theta_{adv})} = 
     \begin{cases}
2\theta - 2\theta_{adv} & 0 \leq \theta_{adv} \leq \theta \\
0 & otherwise,
\end{cases}
\end{aligned} \hspace{0.5cm} 
\begin{aligned}
& u^{attack(x,y)}_{(1,\theta_{adv})} = 
 \begin{cases}
    2\theta - 2\theta_{adv} & 0\leq \theta_{adv} \leq y \\
    -y & y < \theta_{adv} \leq 1, \\
    \end{cases} \\[16mm]
& u^{truth}_{(1,\theta_{adv})} = \begin{cases}
2\theta - \theta_{adv} & 0 \leq \theta_{adv} \leq 2\theta \\
0 & otherwise. \\
\end{cases}.
\end{aligned} \hspace{1cm}
\]

Combining the utility functions we have (for $0\leq \theta_{adv}\leq 1$)
\[ 
\begin{split}
 & u^{truth}_{(2,\theta_{adv})} - u^{attack(x,y)}_{(2,\theta_{adv})} + 2(u^{truth}_{(1,\theta_{adv})} - u^{attack(x,y)}_{(1,\theta_{adv})})  = \\
& \begin{cases}
6\theta - 4\theta_{adv} + 2y - u^{attack(x,y)}_{(2,\theta_{adv})}  & y\leq \theta_{adv} \leq \theta \\
4\theta - 2\theta_{adv} + 2y - u^{attack(x,y)}_{(2,\theta_{adv})}  & \max\{y,\theta\} \leq \theta_{adv} \leq 2\theta \\
y - u^{attack(x,y)}_{(2,\theta_{adv})} & \max\{y,2\theta\} \leq  \theta_{adv} \\
2\theta - u^{attack(x,y)}_{(2,\theta_{adv})}  & \theta_{adv} \leq \min\{y, \theta\}\\
 2\theta_{adv} - u^{attack(x,y)}_{(2,\theta_{adv})}  & \theta \leq \theta_{adv} \leq y\\
4\theta_{adv} - 4\theta - u^{attack(x,y)}_{(2,\theta_{adv})}  & 2\theta \leq \theta_{adv} \leq y . \\
\end{cases}
\end{split}
\]

In essence, we enumerate over all 6 possible positions of $\theta_{adv}$  relative to $\{y, \theta, 2\theta \}$. 

We analyze each case to be non-negative, using the explicit utility expressions. The first case expression is non-negative since $u^{attack(x,y)}_{(2,\theta_{adv})} \leq 2\theta$ for whatever range $\theta_{adv}$ is in, and also $\theta \geq \theta_{adv}, y\geq 0$. 
The second case expression is non-negative since for this case we have $\theta_{adv}\geq y \geq \frac{y}{2}$, so  $u^{attack(x,y)}_{(2,\theta_{adv})} \leq 2\theta - 2\theta_{adv} + y$, and so the expression is at least $2\theta + y \geq 0$. The third case expression is non-negative since for this case we also have $u^{attack(x,y)}_{(2,\theta_{adv})} \leq 2\theta - 2\theta_{adv} + y$, and so the expression is at least $2\theta_{adv} - 2\theta$, but $\theta_{adv} \geq 2\theta \geq \theta$. The fourth case expression is non-negative since it always holds that $u^{attack(x,y)}_{(2,\theta_{adv})} \leq 2\theta$. The fifth case expression is non-negative since for this case, $u^{attack(x,y)}_{(2,\theta_{adv})} \leq 2\theta \leq 2\theta_{adv}$. The sixth case expression is non-negative since for this case,  $u^{attack(x,y)}_{(2,\theta_{adv})} \leq 2\theta \leq 4\theta \leq 4\theta_{adv} - 4\theta$.
\end{proof}

\section{Full Description of $\tilde{u}_{\hat{\theta},S}(w_1, v_1, v_2)$}
\label{app:utilities_full_form}
The following section gives the full form of $\tilde{u}_{\hat{\theta},S}(w_1, v_1, v_2)$ that is used in the reduction to a polynomial positivity decision problem in Section~\ref{sec:beta_distributions}. Recall that we claim that $\tilde{u}_{\hat{\theta},S}$ is generally of the form of polynomials in $w_1,v_1,v_2,x,y,\theta$ multiplied by an indicator function with bounds which are also polynomial in these parameters. This can be verified directly from the explicit form given below.

By Theorem~\ref{thm:attack_format} we can assume for 1-type bidders that $\theta = 1$. 
We have

\[
\begin{aligned}
&
\tilde{u}_{(1,1),truth}(w_1,v_1, v_2) = \\
    & \begin{cases}
    1 - v_2 & 0\leq w_1 \leq \frac{v_1 + v_2}{2} \\
    1 - 2w_1 + v_1 & \frac{v_1 + v_2}{2} < w_1 \leq \frac{v_1 + 1}{2} \\
    0 & otherwise.
    \end{cases} \\[2.5cm]
\end{aligned} \hspace{0.5cm} 
\begin{aligned}
& \tilde{u}_{(1,1),attack(x,y)}(w_1, v_1, v_2) = \\
    & \begin{cases}
    1 - 2v_1 & v_1 \leq y, 0 \leq w_1 \leq \frac{y + v_1}{2} \\
    1 - 2w_1 + y - v_1 & 0 \leq v_2 \leq v_1 \leq y, \frac{y + v_1}{2} < w_1 \leq \frac{x + v_1}{2} \\
    1 - 4w_1 + y + 1 & 0 \leq v_2 \leq v_1 \leq y, \frac{x + v_1}{2} < w_1 \leq \frac{x + y}{2} \\
    1-y & v_2 < y < v_1 \leq 1, 0 \leq w_1 \leq  \frac{y + v_1}{2} \\
    1-v_2 & y < v_2 \leq x, v_1 \leq 1, 0 \leq w_1 \leq \frac{v_1 + v_2}{2} \\
    1-2w_1 + v_1 & y < v_2 \leq x,v_1 \leq 1, \frac{v_1 + v_2}{2} < w_1 \leq \frac{v_1 + x}{2} \\
    0 & otherwise.
    \end{cases} 
\end{aligned} \hspace{1cm}
\]
    
For 2-type bidders, we have
\[
\begin{aligned}
   & \tilde{u}_{(2,\theta),truth}(w_1, v_1, v_2) = \\
   & \begin{cases}
    2\theta - v_1 - v_2 & v_1 + v_2 \leq 2\theta, w_1 \leq \frac{v_1 + v_2}{2} \\
    2\theta - 2w_1 & v_1 + v_2 \leq 2\theta, \frac{v_1 + v_2}{2} < w_1 \leq \theta \\
    0 & otherwise.
    \end{cases} \\[2.5cm]
\end{aligned} \hspace{0.5cm} 
\begin{aligned}
    & \tilde{u}_{(2,\theta),attack(x,y)}(w_1, v_1, v_2) = \\
    & \begin{cases}
    2\theta - 2v_1 & v_1 \leq y, w_1 \leq \frac{y + v_1}{2} \\
    2\theta - 2w_1 + y - v_1 & v_1 \leq y, \frac{y + v_1}{2} < w_1 \leq \frac{x + v_1}{2} \\
    2\theta - 4w_1 + y + x & v_1 \leq y, \frac{x + v_1}{2} < w_1 \leq \frac{x + y}{2} \\
    -y & v_2 \leq y < v_1,  w_1 \leq  \frac{y + v_1}{2} \\
    -v_2 & y < v_2 \leq x , w_1 \leq \frac{v_1 + v_2}{2} \\
    -2w_1 + v_1 & y < v_2 \leq x, \frac{v_1 + v_2}{2} < w_1 \leq \frac{v_1 + x}{2} \\
    0 & otherwise.
    \end{cases}
\end{aligned} \hspace{1cm}
\]
    
\section{Proof that for the uniform distribution 1-type, $q^*\leq \frac{1}{2}$}
\label{app:uniform_half_1type}

This appendix constitutes the proof for the first part of Theorem~\ref{thm:uniform_half}. By Theorem~\ref{thm:attack_format}, it suffices to show that for any $q > \frac{1}{2}, n\geq 2$, and attack parameters $x,y$ (we omit $F=UNI([0,1])$ from the notations as it is the scope of discussion for this entire appendix),

$$E^{q,n}_{\hat{\theta}=(1,1),truth} - E^{q,n}_{\hat{\theta}=(1,1),attack(x,y)} =  \sum_{k=0}^{\tilde{n}} \binom{\tilde{n}}{k} q^k (1-q)^{\tilde{n}-k} (E^{k,n}_{\hat{\theta}=(1,1),truth} - E^{k,n}_{\hat{\theta}=(1,1),attack(x,y)}) \geq 0$$

We prove the claims
\begin{claim}
\label{clm:uniform_expected_positive_general_k}
For any $k\geq 1$, and parameters $n,x,y$, $E^{k,n}_{\hat{\theta}=(1,1),truth} - E^{k,n}_{\hat{\theta}=(1,1),attack(x,y)} \geq 0$
\end{claim}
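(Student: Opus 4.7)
The approach is to split the analysis by the position of the top one-type adversary value $v_1$ relative to the attack bids $x\geq y$, and to argue that the only potentially harmful region is $v_1<y$ (``free-ride''), whose probability mass under the uniform order statistics is small enough that it is dominated by the strictly positive contributions elsewhere.

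\medskip

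\noindent\textbf{Pointwise domination outside the free-ride region.} I first claim that $\tilde{u}_{(1,1),\text{truth}}(w_1,v_1,v_2)\geq \tilde{u}_{(1,1),\text{attack}(x,y)}(w_1,v_1,v_2)$ pointwise whenever $v_1\geq y$. Indeed, if $v_1\geq x$, then $v_1$ dominates both attack bids and the attack submission contains at most one bid in the top two one-types (namely $(1,x)$, and only when $v_2\leq x$); the attack therefore degenerates to a single one-type bid of value $\leq 1$, which is weakly dominated by the truthful bid $(1,1)$ by the truthfulness of VCG for single bids. The same argument applies when $y\leq v_1<x$, since then the top two one-type bids are $(1,x)$ and $(1,v_1)$ and again only one attack bid wins. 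Hence only the region $R:=\{v_2\leq v_1<y\}$ can contribute negatively.

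\medskip

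\noindent\textbf{Bounding the free-ride region.} On $R$, the utility expressions from Appendix~D simplify: the attack uses only the three pieces corresponding to $v_1\leq y$. The plan is to integrate $w_1$ out first against its density $f(w_1)=(\tilde n-k)w_1^{\tilde n-k-1}$ (the case $\tilde n-k=0$ reduces to Lemma~\ref{lem:q=1_analysis}), producing an explicit polynomial in $(v_1,v_2,x,y)$, and then integrate against the joint density $f(v_1,v_2)=k(k-1)v_2^{k-2}$ over $\{0\leq v_2\leq v_1<y\}$ (with the convention $v_2=0$ when $k=1$). The negative contribution this produces is of order $y^{k+1}$ (the factor $y^k$ arising from the probability that all $k$ adversary one-types lie below $y$, and an additional linear factor from the piecewise widths). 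This will be absorbed into the strictly positive contribution from the ``good'' mid-range $y\leq v_1\leq x$, where truth keeps its full first-piece utility $1-v_2$ while the attack, still acting as a single bid of value $x<1$, pays strictly more; that region contributes an expected gain of the same polynomial order in $y$ (and in fact in $x-y$).

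\medskip

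\noindent\textbf{Main obstacle.} The principal difficulty is the bookkeeping for the $w_1$-integration: the truth cut-offs $\tfrac{v_1+v_2}{2},\tfrac{v_1+1}{2}$ and the attack cut-offs $\tfrac{y+v_1}{2},\tfrac{x+v_1}{2},\tfrac{x+y}{2}$ interleave in several orders depending on how $x,y,v_1,v_2$ are arranged. I would handle this by integrating $w_1$ out first, reducing everything to a bivariate comparison in $(v_1,v_2)$ on a constant number of subregions, and then verifying the resulting univariate polynomial in $y$ (after the $v_1,v_2$ integration) is non-negative termwise. The edge case $k=\tilde n$ is Lemma~\ref{lem:q=1_analysis}, and $k=1$ is a minor variant with $v_2\equiv 0$.
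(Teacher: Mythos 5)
Your identification of the dangerous region is essentially the same as the paper's (the only place pointwise domination can fail is $v_1 < y$, with $w_1$ in a certain intermediate band), but your proposed remedy is a genuinely different strategy from the one the paper uses, and as written it leaves a real gap.

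The paper's key observation — which your sketch misses — is that the free-ride region is \emph{self-balancing}: no offsetting against a disjoint ``good'' region is needed. Concretely, the paper shows that on the problematic region the utility gap is bounded below by $2v_1 - y$ (which is negative only when $v_1<y/2$ but positive when $y/2<v_1\le y$), and then computes
\[
\int_{v_1=0}^{y} v_1^{k-1}\Bigl(\int_{w_1=\frac{y+v_1}{2}}^{\frac{x+y}{2}} w_1^{\tilde n -k-1}\,dw_1\Bigr)(2v_1-y)\,dv_1
\]
and shows termwise (using $\tfrac{2}{k+1}-\tfrac{1}{k}+\tfrac{1}{\tilde n - i}-\tfrac{2}{\tilde n - i +1}\ge 0$ for $k\le\tilde n -i$) that this is non-negative on its own. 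Your plan instead proposes to absorb the negative mass from $\{v_1<y\}$ into a positive contribution from the separate region $\{y\le v_1\le x\}$, justified only by a heuristic claim that both contributions are ``of order $y^{k+1}$.'' That claim is not verified and does not obviously hold: the mass of $\{y\le v_1\le x\}$ is $\sim x^k-y^k$, which does not scale like $y^{k+1}$ in general, and as $y\to x$ the positive region shrinks while the negative region does not vanish, so the matching of orders you assert is not established. You have correctly identified the reduction (integrate $w_1$ out against $(\tilde n -k)w_1^{\tilde n -k-1}$, handle $k=\tilde n$ via Lemma~\ref{lem:q=1_analysis}, treat $k=1$ with $v_2\equiv 0$), but the actual inequality is left to an unperformed ``termwise check.'' To close the gap you should either carry out the offsetting calculation you propose (and verify the orders really match), or — simpler and what the paper does — bound the gap pointwise by $2v_1-y$ on the free-ride region and show that this quantity alone integrates to something non-negative, noting that $\int_0^y v_1^{k-1}(2v_1-y)\,dv_1=\frac{k-1}{k(k+1)}y^{k+1}\ge 0$ for $k\ge 1$, with the extra $w_1$-dependent weight handled by the explicit binomial-sum comparison.
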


\begin{claim} 
\label{clm:uniform_expected_positive_edge_case}
$$\sum_{k\in \{0,1,\tilde{n}\}} \binom{\tilde{n}}{k} (E^{k,n}_{\hat{\theta}=(1,1),truth} - E^{k,n}_{\hat{\theta}=(1,1),attack(x,y)})\geq 0$$
\end{claim}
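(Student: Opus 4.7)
The plan is to reduce the claim to an explicit polynomial inequality and verify it by direct computation. For each $k \in \{0, 1, \tilde n\}$, I would compute the difference $D_k := E^{k,n}_{\hat{\theta}=(1,1),truth} - E^{k,n}_{\hat{\theta}=(1,1),attack(x,y)}$ in closed form using the piecewise-linear utility descriptions from Appendix~\ref{app:utilities_full_form} together with the corresponding order-statistic densities under the uniform distribution, and then show that
\[ D_0 + \tilde n D_1 + D_{\tilde n} \;\ge\; 0 \]
on $\{0 \le y \le x \le 1\}$ for every $\tilde n \ge 2$. Note that $\binom{\tilde n}{0} = \binom{\tilde n}{\tilde n} = 1$ and $\binom{\tilde n}{1} = \tilde n$, giving exactly these weights.

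The three cases are handled separately but along the same template. For $k=0$, all $\tilde n$ adversaries are 2-types; only the top value $w_1$ is relevant, with density $\tilde n w_1^{\tilde n-1}$. Substituting $v_1=v_2=0$ into the tables of Appendix~\ref{app:utilities_full_form} collapses $\tilde u_{truth}$ to a clipped linear $(1-2w_1)_+$ and $\tilde u_{attack(x,y)}$ to a four-piece step-linear function on $[0, y/2], (y/2, x/2], (x/2, (x+y)/2]$; the resulting $D_0$ is a linear combination of $(1/2)^{\tilde n + c}$, $(y/2)^{\tilde n + c}$, $(x/2)^{\tilde n + c}$, and $((x+y)/2)^{\tilde n + c}$ for $c \in \{0, 1\}$, and it can be negative. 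For $k=\tilde n$, only the top two 1-types $v_2 \le v_1$ matter, with joint density $\tilde n(\tilde n-1) v_2^{\tilde n-2}$; setting $w_1=0$ yields piecewise-linear utilities over the simplex partitioned by $v_2=y,v_1=y,v_1=x$, and Lemma~\ref{lem:q=1_analysis} already guarantees $D_{\tilde n}\ge 0$. For $k=1$, the densities are $f(v_1)=1$ and $(\tilde n-1) w_1^{\tilde n-2}$; the double integral over $[0,1]^2$ against the piecewise utilities, broken up by $v_1\le y$, $y<v_1\le x$, $v_1>x$ and the corresponding $w_1$-thresholds, is tedious but mechanical and produces $D_1$ as another polynomial in the same boundary-power terms.

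The combination step reduces to verifying non-negativity of the polynomial obtained by summing. My first attempt is to group the result by the common boundary-power terms $(1/2)^{\tilde n+c}, (y/2)^{\tilde n+c}, (x/2)^{\tilde n+c}, ((x+y)/2)^{\tilde n+c}$ that appear from each integration, and check that the coefficient of each such power, once the $\tilde n$ amplification on the $D_1$ contribution is taken into account, is non-negative on the triangle $0 \le y \le x \le 1$. A secondary strategy, if the first yields stubborn sign residues, is a coupling-style comparison: view $\binom{\tilde n}{0} D_0 + \binom{\tilde n}{1} D_1$ as a single expectation over a joint process in which exactly one of $\tilde n$ adversary slots is selectively a 1-type (the other $\tilde n-1$ being 2-types), and try to exhibit a slice-by-slice pointwise sign control before invoking $D_{\tilde n} \ge 0$ to close the remaining gap.

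The main obstacle is precisely this final non-negativity verification. The $D_0$ contribution is the sole obstruction to pushing the threshold below $\tfrac12$, so the inequality is essentially tight at some configuration on the boundary of the region and the cancellation must be exact and uniform in the exponent $\tilde n$. Moreover, the densities differ in the power of $w_1$ they carry --- $w_1^{\tilde n-1}$ for $k=0$ but $w_1^{\tilde n-2}$ for $k=1$ --- which breaks a naive slice-by-slice sign argument and forces careful bookkeeping of the integrated contributions across all $\tilde n \ge 2$.
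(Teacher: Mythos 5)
Your setup is correct and matches the paper's: you compute the three explicit differences $D_0, D_1, D_{\tilde n}$ under the uniform order-statistic densities, use that $\binom{\tilde n}{1}=\tilde n$, and rely on Lemma~\ref{lem:q=1_analysis} for $D_{\tilde n}\ge 0$. But the proposal stalls exactly where the proof has to do real work: the final positivity verification. Neither of your two closing strategies is workable as stated, and your diagnostic intuition about the problem is off in a way that would push you toward the wrong kind of argument.

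First, the tightness intuition is wrong for this sub-claim. You write that ``the inequality is essentially tight at some configuration on the boundary,'' but it is not: after clearing denominators, the combined expression contains a constant term on the order of $3\cdot 2^{\tilde n+1}$ (coming from the $k=0$ truthful payoff over the bulk of the $w_1$-range), whereas the terms depending on $x,y$ grow only polynomially in $\tilde n$. The tightness of $q^*=\tfrac12$ lives elsewhere (in the full $q$-weighted sum and the split attack), not in this equal-weight edge-case sum; here there is a large, $\tilde n$-dependent slack, and the proof exploits it rather than fighting for delicate cancellation. Second, because of the mixed exponents you would see after integrating out $w_1$ and $v_1$ (including binomial expansions of $(x+1)^{\tilde n}$, $(y+1)^{\tilde n+1}$, and $(x+y)^{\tilde n}$, not just the ``boundary-power'' family $(\cdot/2)^{\tilde n+c}$ you list), there is no clean coefficient-by-coefficient sign structure to exploit, so your primary grouping strategy would not close the gap; and the coupling argument you suggest for $D_0 + \tilde n D_1$ is defeated by the very observation you make about the mismatched $w_1$-powers. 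What actually finishes the proof in the paper is: multiply by the factor $(x+y)(\tilde n+1)2^{\tilde n}$ to get a clean polynomial; split it into a piece bounded below by $2^{\tilde n-1}-7$ (via the substitution $1-x=2d/\tilde n$ and the bound $(1-d/\tilde n)^{\tilde n}\le e^{-d}$) and a piece bounded above by $\tilde n^2 y$; combine to get a lower bound $2^{\tilde n-1}-7-2e^2\tilde n^2$, which is non-negative for all $\tilde n\ge 13$; and verify $\tilde n<13$ directly with Partial CAD. That exponential-versus-polynomial argument, plus a finite computer-algebra check, is the missing idea in your proposal.
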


Together, these claims establish that indeed
\[
\begin{split}
    & \sum_{k=0}^{\tilde{n}} \binom{\tilde{n}}{k} q^k (1-q)^{\tilde{n}-k} (E^{k,n}_{\hat{\theta}=(1,1),truth} - E^{k,n}_{\hat{\theta}=(1,1),attack(x,y)}) \stackrel{Claim~\ref{clm:uniform_expected_positive_general_k}}{\geq} \\
& \sum_{k\in \{0,1,\tilde{n}\}} \binom{\tilde{n}}{k} q^k (1-q)^{\tilde{n}-k} (E^{k,n}_{\hat{\theta}=(1,1),truth} - E^{k,n}_{\hat{\theta}=(1,1),attack(x,y)}) \stackrel{Claim~\ref{clm:uniform_expected_positive_general_k} \text{ and } q\geq 1-q}{\geq}  \\
& (1-q)^{\tilde{n}} \sum_{k\in \{0,1,\tilde{n}\}} \binom{\tilde{n}}{k} (E^{k,n}_{\hat{\theta}=(1,1),truth} - E^{k,n}_{\hat{\theta}=(1,1),attack(x,y)})
\stackrel{Claim~\ref{clm:uniform_expected_positive_edge_case}}{\geq} 0 , 
\end{split}
\]

We now move to prove the claims. 

\begin{proof} \textbf{(Claim.~\ref{clm:uniform_expected_positive_general_k})}
 Consider some $n\geq 2$ and $k\geq 1$. The utility expressions as given in Appendix D for 1-types satisfy $\tilde{u}_{(1,1),truth}(w_1, v_1, v_2) \geq \tilde{u}_{(1,1),attack(x,y)}(w_1, v_1, v_2)$, except perhaps the domain $v_1\leq y, \frac{y+v_1}{2} \leq w_1 \leq \frac{x+y}{2}$. That is since if $w_1 \leq \frac{v_1 + v_2}{2}$, then $\tilde{u}_{(1,1),truth}(w_1, v_1, v_2) = 1 - v_2 \geq \tilde{u}_{(1,1),attack(x,y)}(w_1, v_1, v_2)$ in any of the possible domains. If $\frac{1 + v_1}{2} \leq w_1$, we have $\tilde{u}_{truth}(w_1, v_1, v_2) = \tilde{u}_{attack(x,y)}(w_1, v_1, v_2) = 0$. If $\frac{v_1 + v_2}{2} \leq w_1 \leq \frac{v_1 + 1}{2}$, then $\tilde{u}_{truth}(w_1, v_1, v_2) = 1 - 2w_1 + v_1$. For $w_1 \leq \frac{y + v_1}{2}$, this expression is at least $1-y \geq 0$ (4th and 7th cases of the attack expression). The condition on $w_1$ can not coincide with the 1st and 5th case of the attack expression, and $\tilde{u}_{truth}(w_1, v_1, v_2) = \tilde{u}_{attack(x,y)}(w_1, v_1, v_2)$ for the 6th case of the attack expression. For the remaining 2nd and 3rd cases of the attack expression, we have $\tilde{u}_{truth}(w_1, v_1, v_2) - \tilde{u}_{attack(x,y)}(w_1, v_1, v_2) \geq (1 - 2w_1 + v_1) - (1 - 2w_1 + y - v_1) = 2v_1 - y$. As this is the only possible negative utility expression, to prove the claim over the expected utilities, it is enough to prove it over the conditional expected utilities over this domain. 
Thus,
\[
\begin{split}
    & E^{k,n}_{\hat{\theta}=(1,1),truth} - E^{k,n}_{\hat{\theta}=(1,1),attack(x,y)} \geq \\
    & 2^{\tilde{n}-k} \int_{v_1=0}^{y} v_1^{k-1} \int_{w_1=\frac{y+v_1}{2}}^{\frac{x+y}{2}} w_1^{\tilde{n}-k-1}(2v_1 - y) dw_1 dv_1 = \\
    & 2^{\tilde{n}-k}\int_{v_1=0}^{y} v_1^{k-1}(2v_1 - y)((\frac{x+y}{2})^{\tilde{n}-k}-(\frac{y+v_1}{2})^{\tilde{n}-k})dv_1 = \\
    & (\frac{2}{k+1} - \frac{1}{k})y^{k+1}(x+y)^{\tilde{n}-k} + \sum_{i=0}^{\tilde{n}-k} \binom{\tilde{n}-k}{i}(\frac{1}{\tilde{n}-i} - \frac{2}{\tilde{n}-i+1})y^{\tilde{n}+1} \geq \\
    & \sum_{i=0}^{\tilde{n}-k} \binom{\tilde{n}-k}{i}(\frac{2}{k+1} - \frac{1}{k} + \frac{1}{\tilde{n}-i} - \frac{2}{\tilde{n}-i+1}) y^{\tilde{n}+1}
\end{split}
\]

For each summand, it holds that
\[
\begin{split}
    & \frac{2}{k+1} - \frac{1}{k} + \frac{1}{\tilde{n}-i} - \frac{2}{\tilde{n}-i+1} = \\
    & \frac{k-1}{k(k+1)} - \frac{\tilde{n}-i-1}{(\tilde{n}-i)(\tilde{n}-i+1)} \stackrel{k\leq \tilde{n}-i}{\geq} \\
    &  \frac{k-1}{k(k+1)} - \frac{k-1}{k(k+1)} = 0,
\end{split}
\]
where the inequality holds since the expression $f(x) = \frac{x-1}{x(x+1)}$ is weakly monotonically decreasing in $x$ for $x\geq 3$ (and so correspondingly in the integer series), and $f(2) = f(3) = \frac{1}{6}$. 

\end{proof}

\begin{proof} \textbf{(Claim.~\ref{clm:uniform_expected_positive_edge_case})}

We multiply the expression of the statement by the non-negative factor $(x+y)(\tilde{n}+1)2^{\tilde{n}}$ for a nicer representation of the direct calculation result. We then have 

\begin{equation}
\label{eq:expectation_polynomial_bound}
\begin{split}
    & (x+y)({\tilde{n}}+1)2^{\tilde{n}} \sum_{k\in \{0,1,\tilde{n}\}} \binom{\tilde{n}}{k} (E^{k,n}_{\hat{\theta}=(1,1),truth} - E^{k,n}_{\hat{\theta}=(1,1),attack(x,y)}) = \\
    &  (x+y) \overbrace{\left(-2 x^{{\tilde{n}}+1}-2 {\tilde{n}} (x+1)^{\tilde{n}}+2 {\tilde{n}} x (x+1)^{\tilde{n}}-4 (x+1)^{\tilde{n}}-2 y^{{\tilde{n}}+1}+2 (y+1)^{{\tilde{n}}+1}+3 \cdot  2^{{\tilde{n}}+1}-3\right)}^{exp1} \\
    & -2 (x+y)^{\tilde{n}} \overbrace{\left(({\tilde{n}}-1) x^2-x \left(({\tilde{n}}-1)^2 y+{\tilde{n}}+1\right)+y ({\tilde{n}} ({\tilde{n}} (-y)+{\tilde{n}}+y)-1)\right)}^{exp2}
    \end{split}
    \end{equation}

First we note that for $exp1$, we can make the substitution $\frac{2d}{{\tilde{n}}} = 1 - x$ (which also yields $(2 - \frac{2d}{{\tilde{n}}}) = 1+x$), and have

\begin{equation}
\label{eq:bound_exp1}
\begin{split}
    & -2 x^{{\tilde{n}}+1}-2 {\tilde{n}} (x+1)^{\tilde{n}}+2 {\tilde{n}} x (x+1)^{\tilde{n}}-4 (x+1)^{\tilde{n}}-2 y^{{\tilde{n}}+1}+2 (y+1)^{{\tilde{n}}+1}+3 \cdot 2^{{\tilde{n}}+1}-3 \stackrel{\left(-2x^{{\tilde{n}}+1} - 2y^{{\tilde{n}}+1} \geq -4\right)}{\geq} \\
    & -2 {\tilde{n}} (x+1)^{\tilde{n}}+2 {\tilde{n}} x (x+1)^{\tilde{n}}-4 (x+1)^{\tilde{n}}+3 \cdot 2^{{\tilde{n}}+1}-7 \geq \qquad \qquad \qquad \qquad \left( -4(x+1)^{\tilde{n}} + 3\cdot  2^{{\tilde{n}}+1} \leq 2^{{\tilde{n}}+1} \right) \\
    & 2^{{\tilde{n}}+1}-7 -2{\tilde{n}}(1-x)(x+1)^{\tilde{n}} = \\
    & 2^{{\tilde{n}}+1}-7 -4d(2 - \frac{2d}{{\tilde{n}}})^{\tilde{n}} = \\
    & 2^{{\tilde{n}}+1}-7 -2^{\tilde{n}} \cdot 4d (1 - \frac{d}{{\tilde{n}}})^{\tilde{n}} \geq \qquad \qquad \qquad \qquad \qquad \qquad \qquad \qquad \qquad \qquad \left( (1 - \frac{d}{{\tilde{n}}})^{\tilde{n}} \leq e^{-d} \right) \\
    & 2^{{\tilde{n}}+1}-7 -2^{\tilde{n}} \cdot 4d e^{-d} \geq \qquad \qquad \qquad \qquad \qquad \qquad \qquad \qquad \qquad \qquad \qquad \left( \max_{d\geq 0} \{ de^{-d} \} = \frac{1}{e} \right)  \\
    &  2^{{\tilde{n}}+1}-7 -2^{\tilde{n}}\frac{4}{e} = (4 - \frac{8}{e})2^{{\tilde{n}}-1} - 7 \geq \\
    & 2^{{\tilde{n}}-1} - 7
\end{split}
\end{equation}

For $exp2$, notice it is preceded with a minus sign, so we aim to show it is a negative expression. If $x+y \geq  (\frac{{\tilde{n}}}{{\tilde{n}}-1})^2$, we have

\begin{equation}
\label{eq:bound_exp2}
\begin{split}
& ({\tilde{n}}-1) x^2-x \left(({\tilde{n}}-1)^2 y+{\tilde{n}}+1\right)+y ({\tilde{n}} ({\tilde{n}} (-y)+{\tilde{n}}+y)-1) \leq \qquad \qquad \left( ({\tilde{n}}-1)x^2 -({\tilde{n}}+1)x\leq 0 \right) \\
& -x ({\tilde{n}}-1)^2 y +y ({\tilde{n}} ({\tilde{n}} (-y)+{\tilde{n}}+y)-1) \leq \qquad \qquad \qquad \qquad \qquad \left( -{\tilde{n}}^2y^2 + {\tilde{n}}y^2 -y{\tilde{n}} \leq -({\tilde{n}}-1)^2y^2 \right) \\
& {\tilde{n}}^2 y-({\tilde{n}}-1)^2 y (x+y) \leq \qquad \qquad \qquad \qquad \qquad \qquad \qquad \qquad \qquad \qquad \left( x+y \geq  (\frac{{\tilde{n}}}{{\tilde{n}}-1})^2 \right) \\
& {\tilde{n}}^2y - {\tilde{n}}^2y = 0.
\end{split}
\end{equation}

In this case, both expressions (adjusting for the minus sign) are non-negative for ${\tilde{n}}\geq 4$, and the result follows.

Regardless of the condition on $x+y$, the development of Equation~\ref{eq:bound_exp2} (where we continue without applying the step $x+y \geq (\frac{{\tilde{n}}}{{\tilde{n}}-1})^2$) yields 

\begin{equation}
\label{eq:general_bound_exp2}
    ({\tilde{n}}-1) x^2-x \left(({\tilde{n}}-1)^2 y+{\tilde{n}}+1\right)+y ({\tilde{n}} ({\tilde{n}} (-y)+{\tilde{n}}+y)-1) \leq {\tilde{n}}^2y
\end{equation}

If $x+y \leq (\frac{{\tilde{n}}}{{\tilde{n}}-1})^2$ (and excluding the trivial case of $x=y=0$), we can divide the R.H.S. of Equation~\ref{eq:expectation_polynomial_bound} by a factor of $x+y$ and have
\[
\begin{split}
    &  \left(-2 x^{{\tilde{n}}+1}-2 {\tilde{n}} (x+1)^{\tilde{n}}+2 {\tilde{n}} x (x+1)^{\tilde{n}}-4 (x+1)^{\tilde{n}}-2 y^{{\tilde{n}}+1}+2 (y+1)^{{\tilde{n}}+1}+3 \cdot 2^{{\tilde{n}}+1}-3\right) \\
    & -2 (x+y)^{{\tilde{n}}-1} \left(({\tilde{n}}-1) x^2-x \left(({\tilde{n}}-1)^2 y+{\tilde{n}}+1\right)+y ({\tilde{n}} ({\tilde{n}} (-y)+{\tilde{n}}+y)-1)\right) \stackrel{Eq.~\ref{eq:bound_exp1}}{\geq} \\
    & 2^{{\tilde{n}}-1} - 7 - 2 (x+y)^{{\tilde{n}}-1} \left(({\tilde{n}}-1) x^2-x \left(({\tilde{n}}-1)^2 y+{\tilde{n}}+1\right)+y ({\tilde{n}} ({\tilde{n}} (-y)+{\tilde{n}}+y)-1)\right) \stackrel{Eq.~\ref{eq:general_bound_exp2}}{\geq} \\
    & 2^{{\tilde{n}}-1} - 7 - 2 (x+y)^{{\tilde{n}}-1} {\tilde{n}}^2 y \geq \qquad \qquad \qquad \qquad \qquad \qquad \qquad \qquad \qquad \qquad \left(x+y \leq (\frac{{\tilde{n}}}{{\tilde{n}}-1})^2 \right) \\
    & 2^{{\tilde{n}}-1} - 7 - 2(1 + \frac{1}{{\tilde{n}}-1})^{2({\tilde{n}}-1)} {\tilde{n}}^2y \geq \qquad \qquad \qquad \qquad \qquad \qquad \qquad \qquad \left(1 + \frac{1}{{\tilde{n}}-1})^{2({\tilde{n}}-1)} \leq e^2 \right) \\
    & 2^{{\tilde{n}}-1} - 7 - 2e^2{\tilde{n}}^2,
    \end{split}
    \]
    
which is non-negative for $\tilde{n} \geq 13$. For $\tilde{n} < 13$ we verify directly (using Partial CAD) that the expression is always non-negative. 
\end{proof}
    
\section{Proof that for split attacks with uniform distribution, $q^* \leq \frac{1}{2}$} 
\label{app:UniformDistProof}

This appendix constitutes the proof for the second part of Theorem~\ref{thm:uniform_half}.
The proofs in this section focus on showing that for the split attack where $\theta = x = y = 1$, $q^* = \frac{1}{2}$. This bound generalizes to any form of split attack with $\theta = x = y$, by the lemma below. That is since for split attacks $\theta = x=y$, and it is w.l.o.g. to assume at least one of these parameters is 1, it follows that it suffices to analyze the $q^*$ value where $\theta = x = y = 1$. Note that the uniform distribution falls into the settings of the lemma with $\alpha = \beta = 1$.
\begin{lemma}
\label{lem:beta=1_assumptions}
For any Beta distribution $F_{\alpha,\beta}$ with $\beta=1$, it is w.l.o.g. to assume either $x=1$ or $\theta=1$, for the analysis of 2-types $q^*$. 
\end{lemma}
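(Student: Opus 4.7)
The plan is to combine the two scaling results from Appendix~A --- Lemma~\ref{lem:distribution_assumption} (we can always truncate the support of $F$ to $[0,\max(x,\theta)]$ without losing a beneficial attack) and Lemma~\ref{lem:vcg_homogenous} (VCG outcomes scale homogeneously with bids and values) --- with the special algebraic fact that the Beta family with $\beta=1$ is closed under rescaling of its support. Concretely, $F_{\alpha,1}(\theta)=\theta^{\alpha}$, so truncating $F_{\alpha,1}$ to $[0,M]$ and renormalizing produces the CDF $(\theta/M)^{\alpha}$ on $[0,M]$, which after dividing the argument by $M$ is again $F_{\alpha,1}$ on $[0,1]$. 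This self-similarity is what makes the reduction work specifically for $\beta=1$.

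The formal steps would be as follows. Suppose that for a 2-type bidder with per-item value $\theta$ under $F_{\alpha,1}$ there is a beneficial attack $(1,x),(1,y)$. Set $M=\max(x,\theta)$. First I would apply Lemma~\ref{lem:distribution_assumption} to replace $F_{\alpha,1}$ by its truncation $\bar F$ to $[0,M]$; the attack remains beneficial against adversaries drawn from $\bar F$. The density of $\bar F$ is $\bar f(\psi)=\alpha\psi^{\alpha-1}/M^{\alpha}$ on $[0,M]$, which is exactly the pushforward of $F_{\alpha,1}$ on $[0,1]$ under the map $\psi\mapsto M\psi$.

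Next I would apply Lemma~\ref{lem:vcg_homogenous} with scaling factor $1/M$ to every bid, every true value, and every adversary valuation. Since VCG is scale-homogeneous, the winning set is preserved and utilities/payments merely scale by $1/M>0$, so the inequality certifying ``beneficial'' is preserved in sign. In the rescaled setting the adversary per-item values are now drawn from $F_{\alpha,1}$ on $[0,1]$, the true type is $(2,\theta/M)$, and the attack is $(1,x/M),(1,y/M)$. Because $M=\max(x,\theta)$, at least one of $x/M$ and $\theta/M$ equals $1$, which is exactly the claim.

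I don't expect any serious obstacle: once the correct two-step reduction (truncate, then rescale) is identified, each step is a direct invocation of an already-proved lemma and the only arithmetic check is the closure property $(F_{\alpha,1}$ truncated and rescaled is again $F_{\alpha,1})$. The one thing worth emphasizing in the writeup is why this argument \emph{does not} extend to $\beta\neq 1$: for general $\beta$ the factor $(1-\psi)^{\beta-1}$ does not transform covariantly under $\psi\mapsto\psi/M$, so the truncated-and-rescaled distribution leaves the Beta family, and one cannot conclude that the reduced attack is against the same distribution we started with.
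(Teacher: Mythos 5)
Your proposal is correct and follows essentially the same route as the paper's proof: invoke Lemma~\ref{lem:distribution_assumption} to truncate the support to $[0,\max(x,\theta)]$, invoke Lemma~\ref{lem:vcg_homogenous} to rescale, and observe that for $F_{\alpha,1}(\psi)=\psi^{\alpha}$ the truncated-and-rescaled distribution is again $F_{\alpha,1}$ so the reduction stays inside the family. Your closing remark about why the $(1-\psi)^{\beta-1}$ factor breaks covariance for $\beta\neq 1$ is a welcome clarification of the paper's briefer comment that the two transformations do not in general preserve the distribution.
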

\begin{proof}
Beta distributions $F_{\alpha,\beta}$ with $\beta=1$ have the general form of $F(x) = x^{\alpha}$, or $f(x) = \alpha x^{\alpha - 1}$. We know by Lemma~\ref{lem:distribution_assumption} that there is some distribution $\tilde{F}_{\alpha, \beta}$ with $supp \tilde{F} \subseteq [0, \max \{x,\theta\}]$ such that $q^*$ is at least as in the original distribution, and by Lemma~\ref{lem:vcg_homogenous} we know we can rescale $\tilde{F}$ to have support in $[0,1]$ and either $x=1$ or $\theta = 1$. In general, we are not guaranteed that the distribution after these two transformations is the same as the original distribution, and so we generally use these two lemmas to study $q^*$ values that allow arbitrary distributions. We now show that in the case of distributions with $F(x) = x^{\alpha}$, we do end up with the same distribution. Let $m = \max \{x, \theta\}$. Then 

\[
\begin{split}
& \tilde{f}(\psi) = \begin{cases}
   \frac{\alpha \psi^{\alpha - 1}}{\int\limits_{0}^m \alpha \psi^{\alpha - 1}d\psi} & \psi \leq m \\
   0 & \psi > m
\end{cases} = \begin{cases}
   \frac{\alpha \psi^{\alpha - 1}}{m^{\alpha}} & \psi \leq m \\
   0 & \psi > m
\end{cases} \implies \tilde{F}(\psi) = \begin{cases}
\frac{\psi^{\alpha}}{m^{\alpha}} & \psi \leq m \\
1 & \psi > m \\
\end{cases}
\end{split}
\]

After rescaling, we have
\[
\begin{split}
   \tilde{F}_{rescale}(\psi) = \tilde{F}(m\psi) = \frac{(m\psi)^{\alpha}}{m^{\alpha}} = \psi^{\alpha} = F(\psi),
\end{split}
\]

and so we conclude that any split attack for $F_{\alpha, \beta}$ with $\beta = 1$ has at most the $q^*$ as the split attack $\theta = x = y = 1$ for that same distribution. 
\end{proof}

The structure of the rest of the section is as follows: In Lemma~\ref{lem:split_is_monotone_k} we show that for the split attack, $$E^{k,F,n}_{(2,1),truth}[u] - E^{k,F,n}_{(2,1),attack(1,1)}[u]$$ is monotone in $k$ (the more 1-type adversaries a bidder faces, the more beneficial it is to be truthful). In Lemma~\ref{lem:monotone_condition} we show that this type of monotonicity yields a ``clear cut'' $q^*$ value: Any $q < q^*$ does not have truthful bidding as a BNE, while every $q > q^*$ does, and this $q^*$ value has $E^{q,F,n}_{(2,1),truth}[u] = E^{q,F,n}_{(2,1),attack(1,1)}[u]$. We conclude by  showing in Lemma~\ref{lem:uniform_split_half} that $q^* = \frac{1}{2}$ satisfies this equality for the split attack with uniform distribution.

\begin{lemma}
\label{lem:split_is_monotone_k}
For any distribution $F$, number of bidders $n$ and number of 1-type adversary bidders $0\leq k \leq \tilde{n}-1$ \begin{equation}
\label{eq:split_is_monotone_k}
    E^{k+1,F,n}_{(2,1),truth}[u] - E^{k+1,F,n}_{(2,1),attack(1,1)}[u] \geq E^{k,F,n}_{(2,1),truth}[u] - E^{k,F,n}_{(2,1),attack(1,1)}[u]
    \end{equation}
    
\end{lemma}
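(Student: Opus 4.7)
The plan is to reduce the statement to a purely combinatorial monotonicity claim conditional on the sorted per-item values, and then prove that claim using the explicit structure of $D := \tilde{u}^{truth} - \tilde{u}^{attack(1,1)}$.

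First I write down the explicit form of $D$. From Appendix~D, $\tilde{u}^{truth}_{(2,1)}(w_1,v_1,v_2) = 2 - \max\{v_1+v_2, 2w_1\}$ and $\tilde{u}^{attack(1,1)}_{(2,1)}(w_1,v_1,v_2) = \min\{2-2v_1, 4-4w_1\}$; notice that the attack utility does not depend on $v_2$. Subtracting gives the piecewise-linear form
\[
D(w_1,v_1,v_2) = \begin{cases} v_1 - v_2 & 2w_1 \le v_1+v_2 \\ 2v_1 - 2w_1 & v_1+v_2 \le 2w_1 \le 1+v_1 \\ 2w_1 - 2 & 2w_1 \ge 1+v_1. \end{cases}
\]

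Next I couple the two scenarios on a single sample: draw $X_1,\ldots,X_{\tilde{n}}$ i.i.d.~from $F$, use positions $1,\ldots,k$ as 1-types in scenario $k$ and positions $1,\ldots,k+1$ as 1-types in scenario $k+1$, keeping the rest as 2-types in each case. Exchangeability of the i.i.d.~samples ensures the induced joint distribution on $(w_1,v_1,v_2)$ matches Eq.~(5). Further conditioning on the unordered multiset $\{X_1,\ldots,X_{\tilde{n}}\}$ (equivalently, the sorted vector $X_{(1)}\ge\cdots\ge X_{(\tilde{n})}$), the 1-type subset becomes uniform on $\binom{\tilde{n}}{k}$ choices, so it suffices to prove the combinatorial claim that for every sorted vector the conditional average $\mathrm{avg}_k(X) := \binom{\tilde{n}}{k}^{-1}\sum_{|S|=k} D\bigl(w_1(S),v_1(S),v_2(S)\bigr)$ is non-decreasing in $k$.

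To prove that claim, I use the double-counting identity
\[
\binom{\tilde{n}}{k+1}\bigl(\mathrm{avg}_{k+1}-\mathrm{avg}_k\bigr) = \frac{1}{k+1}\sum_{|T|=k,\,i\notin T}\bigl[D(T\cup\{i\})-D(T)\bigr],
\]
so it suffices to show the right-hand side is non-negative. Each summand measures the effect on $D$ of converting one position from 2-type to 1-type while holding the other labels fixed. Using the three-case form of $D$ and tracking how the flip shifts the indices $(j_1,j_2,j_w)$ of the top two 1-types and the top 2-type, each increment can be expressed as a linear combination of $X_{(j_1)}, X_{(j_2)}, X_{(j_w)}$. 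Grouping the $(T,i)$ pairs by the sorted rank of $i$ and by the pattern of labels at higher ranks, the double sum collapses to a telescoping expression whose non-negativity follows from the ordering $X_{(1)} \ge \cdots \ge X_{(\tilde{n})}$.

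The main obstacle is that individual flips need not increase $D$: converting a low-rank 2-type into a 1-type may strictly raise $v_2$ without changing $w_1$ or $v_1$, which strictly decreases $D$ in its first regime. In fact, even after conditioning on the background adversaries and integrating over the flipped bidder's value, one can exhibit realizations for which the conditional expectation of the increment is negative. Consequently the argument cannot be made pointwise or even at the level of a single flipped bidder; it genuinely requires the uniform average over all $\binom{\tilde{n}}{k}$ subsets of 1-types. The delicate bookkeeping that matches detrimental flips against beneficial ones of the same rank is where I expect the bulk of the technical work to lie, and is handled using the fact that the combinatorial multiplicities of ``raising $v_2$'' flips are exactly balanced, in the uniform-subset average, by ``lowering $w_1$'' flips that move the configuration into the higher $D$-regimes.
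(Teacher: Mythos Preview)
Your coupling and double-counting setup is essentially the same as the paper's, but the argument stalls precisely where the paper's key simplification kicks in. You sum over pairs $(T,i)$ with $|T|=k$ and $i\notin T$; the paper instead groups by the $(k{+}1)$-set $I=T\cup\{i\}$ and observes that for $k{-}1$ of the $k{+}1$ choices of $i\in I$ (namely those whose value is one of $v_3,\ldots,v_{k+1}$ in $I$) the flip leaves $(w_1,v_1,v_2)$ unchanged, because the new top 2-type after the flip satisfies $2w_1'\le v_1+v_2$ and so does not affect either utility. Hence for each fixed $I$ and each fixed realization,
\[
\sum_{i\in I}\bigl[D(I)-D(I\setminus\{i\})\bigr]=2D(I)-D(I)_{v_1\to 2}-D(I)_{v_2\to 2},
\]
and the paper verifies this two-term expression is $\ge 0$ by a short case analysis on the position of $w_1$ relative to $\tfrac{v_1+v_2}{2}$, $v_1$, and $\tfrac{1+v_1}{2}$.

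This means your claim that ``the argument cannot be made pointwise or even at the level of a single flipped bidder; it genuinely requires the uniform average over all $\binom{\tilde n}{k}$ subsets'' is incorrect: once you regroup by $I$ rather than by $T$, the inequality \emph{is} pointwise in the realization and in $I$. Your final paragraph, which defers everything to ``delicate bookkeeping'' and an unspecified telescoping, is therefore not a proof; the missing idea is exactly the observation that only the top two 1-type positions matter when flipping, collapsing the average over $k{+}1$ flips to the clean inequality $2D\ge D_{v_1\to 2}+D_{v_2\to 2}$.
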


\begin{proof}
We rewrite the expectation of the L.H.S. as an expectation over two independent choices: The first is where we sample adversary per-item values $\theta_1',...,\theta_{\tilde{n}}'$, and the second where we uniformly choose $k+1$ of the adversaries to be 1-type. For this purpose, let $[\tilde{n}] = \{1,...,\tilde{n}\}$, and let $\mathcal{I}_k = \bigcup \{K\subseteq [\tilde{n}] \text{ s.t. } |K|=k \}$ be the set of all index sets of size $k$. Given some choice $I\in \mathcal{I}_k$, let $t_I$ be the function that maps per-item values $\theta_1',...,\theta_{\tilde{n}}'$ into the ordered list $(1,v_1),...,(1,v_k), (2,w_1), ..., (2,w_{\tilde{n}-k-1}))$, where a per-item value $\theta_i'$ is assigned to be a 1-type if and only if $i\in I$. Let $\delta u(v_1,...,v_k, w_1, ..., w_{\tilde{n}-k-1}) = u^{(2,1)}_{truth}(v_1,...,v_k, w_1, ..., w_{\tilde{n}-k-1}) - u^{(2,1)}_{attack(1,1)}(v_1,...,v_k, w_1, ..., w_{\tilde{n}-k-1})$. Then 
\[
\begin{split}
    & E^{k+1,F,n}_{(2,1),truth}[u] - E^{k+1,F,n}_{(2,1),attack(1,1)}[u] = E_{\theta_1',...,\theta_{\tilde{n}}' \sim F}[E_{I \sim UNI(\mathcal{I}_{k+1})}[\delta u(t_I(\theta_1',...,\theta_{\tilde{n}})]]
\end{split}
\]
For the R.H.S. side we use a similar rewrite, but observe that choosing $k$ adversaries to be 1-type uniformly ($I \sim UNI(\mathcal{I}_k)$) is the same as choosing $k+1$ adversaries to be 1-type uniformly and then uniformly choosing one of the 1-types to ``flip'' to a 2-type (Choose $I\setminus \{i\}$ where $I \sim UNI(\mathcal{I}_{k+1}), i \sim UNI(I)$). Thus
\[
\begin{split}
    & E^{k,F,n}_{(2,1),truth}[u] - E^{k,F,n}_{(2,1),attack(1,1)}[u] = \\
    & E_{\theta_1',...,\theta_{\tilde{n}}' \sim F}[E_{I \sim UNI(\mathcal{I}_k)}[\delta u(t_I(\theta_1',...,\theta_{\tilde{n}}))]] = \\
    & E_{\theta_1',...,\theta_{\tilde{n}}' \sim F}[E_{I\sim UNI(\mathcal{I}_{k+1})}[E_{i\sim UNI(I)}[\delta u(t_{I\setminus \{i\}}(\theta_1',...,\theta_{\tilde{n}}))]]]
\end{split}
\]
So, to prove \eqref{eq:split_is_monotone_k}, it suffices to show that for any $I \in \mathcal{I}_{k+1}$ and $\theta_1',...,\theta'_{\tilde{n}}$,
$$\delta u(t_I(\theta_1',...,\theta'_{\tilde{n}}) \geq E_{i\sim UNI(I)}[\delta u(t_{I\setminus \{i\}}(\theta_1',...,\theta'_{\tilde{n}}))].$$

If the sampled $i\in I$ is the index corresponding to any of $v_3,\ldots,v_{k+1}$, then
$\delta u(t_{I}(\theta_1',...,\theta'_{\tilde{n}})) = \delta u(t_{I\setminus \{i\}}(\theta_1',...,\theta'_{\tilde{n}}))$, since ``flipping'' any of $v_3,...,v_{k+1}$ to be 2-type does not effect the utilities (it does not change the top two 1-types and either does not change the top 2-type, or if it does, then $v_1 + v_2 \geq 2w_1$ and the top 2-type does not effect either the truthful or attack utility). We introduce the notation
$$ \delta u(v_1,...,v_k,v_{k+1},w_1,...,w_{\tilde{n}-k-1})_{v_j \rightarrow 2-type} = \delta u(t_{I\setminus i_j}(\theta_1',...,\theta'_{\tilde{n}})) ,$$
where $i_j$ is the index of $v_j$. We are thus interested in what happens when we apply $v_1 \rightarrow 2-type, v_2 \rightarrow 2-type$, i.e., we need to show
\[
\begin{split}
& \delta u(v_1,...,v_k,v_{k+1},w_1,...,w_{\tilde{n}-k-1}) \\
& \geq E_{i\sim UNI(I)}[\delta u(v_1,...,v_k,w_1,...,w_{\tilde{n}-k})] = \\
& \frac{k-1}{k+1}\delta u(v_1,...,v_k,v_{k+1},w_1,...,w_{\tilde{n}-k-1}) + \\
& \frac{1}{k+1}\delta u(v_1,...,v_k,v_{k+1},w_1,...,w_{\tilde{n}-k-1})_{v_2 \rightarrow 2-type} + \frac{1}{k+1}\delta u(v_1,...,v_k,v_{k+1},w_1,...,w_{\tilde{n}-k-1})_{v_1 \rightarrow 2-type} 
\end{split}
\]
which is equivalent to 
\begin{equation}
\label{eq:flippingInequality_u}
\begin{split}
    & 2\delta u(v_1,...,v_k,v_{k+1},w_1,...,w_{\tilde{n}-k-1}) \geq \\
    & \delta u(v_1,...,v_k,v_{k+1},w_1,...,w_{\tilde{n}-k-1})_{v_2 \rightarrow 2-type} +  \delta u(v_1,...,v_k,v_{k+1},w_1,...,w_{\tilde{n}-k-1})_{v_1 \rightarrow 2-type}
\end{split}
\end{equation}
For the split attack, the utility expressions of Appendix D (with $\theta = x = y = 1$) translate into 
         \[
    \begin{split}
   & \tilde{u}^{(2,1)}_{truth}(w_1, v_1, v_2) =  \begin{cases}
   2 - v_1 - v_2 & w_1 \leq \frac{v_1+v_2}{2} \\
    2 - 2w_1 &  \frac{v_1 + v_2}{2} < w_1   
    \end{cases} \\
    & \tilde{u}^{(2,1)}_{attack(1,1)}(w_1, v_1, v_2) = \begin{cases}
    2 - 2v_1 & w_1 \leq \frac{1 + v_1}{2} \\
    4 - 4w_1 &  \frac{1 + v_1}{2} < w_1 \\
    \end{cases}
     \end{split}
     \]
     
     which yields (letting $w_1' = \max \{v_2, w_1\}$)
         \[
    \begin{split}
   & 2\delta u(v_1,...,v_k,v_{k+1},w_1,...,w_{\tilde{n}-k-1}) \\
   & - \delta u(v_1,...,v_k,v_{k+1},w_1,...,w_{\tilde{n}-k-1})_{v_2 \rightarrow 2-type} - \delta u(v_1,...,v_k,v_{k+1},w_1,...,w_{\tilde{n}-k-1})_{v_1 \rightarrow 2-type} = \\
   & \begin{cases}
       0 &  \max\{v_1, \frac{1+v_1}{2}\} \leq w_1 \\
    2 + 2v_1 - 4w_1 &  v_1 \leq w_1 \leq \frac{1+v_1}{2} \\
    2v_1 - 2w_1 - \delta u(v_1,...,v_{k+1}, w_1, ..., w_{\tilde{n}-k-1})_{v_1 \rightarrow 2-type} & \frac{v_1+v_2}{2} \leq w_1 \leq v_1 \\
    2v_1 - 2w_1' - \delta u(v_1,...,v_{k+1}, w_1, ..., w_{\tilde{n}-k-1})_{v_1 \rightarrow 2-type} & w_1 \leq \frac{v_1 + v_2}{2}, \frac{v_1+v_3}{2} \leq w_1' \\
    v_1 + v_3 - 2v_2 - \delta u(v_1,...,v_{k+1}, w_1, ..., w_{\tilde{n}-k-1})_{v_1 \rightarrow 2-type} & w_1 \leq \frac{v_1 + v_2}{2}, w_1' < \frac{v_1+v_3}{2}
    \end{cases}
     \end{split}
     \]
     
We now prove by case-analysis that the condition of \eqref{eq:flippingInequality_u} holds and the above expression is non-negative. 

For the first case, the difference is exactly 0, as ``flipping'' either $v_1$ or $v_2$ does not effect the utilities (we have $\delta u(v_1,...,v_{k+1},w_1,...,w_{\tilde{n}-k-1}) = 2w_1 - 2$).

For the second case, by the condition on $w_1$, $2 + 2v_1 - 4w_1 = 4(\frac{1 + v_1}{2} - w_1) \geq 0$. 

For the remaining cases we first note that whenever $w_1 \leq v_1$, we have
$$
    \delta u(v_1,...,v_k,v_{k+1},w_1,...,w_{\tilde{n}-k-1})_{v_1 \rightarrow 2-type} = 2 - 2v_1 - u_{attack(x,y)} \leq 2 - 2v_1 - (2 - 2v_2) = 2v_2 - 2v_1 \leq 0,$$
    Thus, the third case is at least $2v_1 - 2w_1 \geq 0$. The fourth case is at least $2v_1 - 2w_1' \geq 2v_1 - 2\max \{v_2, w_1\} \geq 0$. The fifth case is at least $v_1 + v_3 - 2v_2 = 2(\frac{v_1 + v_3}{2} - v_2) \geq 2(\frac{v_1 + v_3}{2} - \max \{ v_2, w_1\}) = 2(\frac{v_1 + v_3}{2} - w_1') \geq 0$ by the condition on $w_1'$. 
\end{proof}

\begin{lemma}
\label{lem:monotone_condition}
Recall that $$Q_{\alpha, \beta}^{k,n}(1,1,1) = E^{k,F,n}_{(2,1),truth}[u] - E^{k,F,n}_{(2,1),attack(1,1)}[u],$$
and regard it as a series of real numbers parameterized by k. 
If the series is monotone increasing in k, then if there exists $q$ such that $Q_{\alpha, \beta}^{n}(1,1,1,q) = 0$, it follows that $q^*_{n,g=2,\theta=1,x=1,y=1,F=Uni([0,1])} = q$. 
\end{lemma}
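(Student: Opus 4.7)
The plan is to write $a_{k}:=Q^{k,n}_{\alpha,\beta}(1,1,1)$ and view
\[
Q^{n}_{\alpha,\beta}(1,1,1,q)=\sum_{k=0}^{\tilde{n}}\binom{\tilde{n}}{k}q^{k}(1-q)^{\tilde{n}-k}\,a_{k}
\]
as the binomial expectation $\mathbb{E}_{k\sim\mathrm{Bin}(\tilde{n},q)}[a_{k}]$ of the sequence $(a_k)$. Writing $q^{*}$ for $q^*_{n,g=2,\theta=1,x=1,y=1,F=\mathrm{Uni}([0,1])}$, showing $q^{*}=q$ reduces to proving that this polynomial in $q$ is non-negative on $[q,1]$ and takes a strictly negative value in every sub-interval $(\bar{q},q)$ with $\bar{q}<q$.

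The first step is monotonicity of $Q^{n}_{\alpha,\beta}(1,1,1,\cdot)$ in $q$. A direct differentiation, using $\binom{\tilde{n}}{k}k=\tilde{n}\binom{\tilde{n}-1}{k-1}$ and $\binom{\tilde{n}}{k}(\tilde{n}-k)=\tilde{n}\binom{\tilde{n}-1}{k}$ together with a re-indexing, yields
\[
\frac{d}{dq}Q^{n}_{\alpha,\beta}(1,1,1,q)=\tilde{n}\sum_{k=0}^{\tilde{n}-1}\binom{\tilde{n}-1}{k}q^{k}(1-q)^{\tilde{n}-1-k}(a_{k+1}-a_{k}),
\]
which is non-negative on $[0,1]$ by the hypothesis $a_{k+1}\geq a_{k}$ and by the non-negativity of the Bernstein basis. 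Equivalently, $\mathrm{Bin}(\tilde{n},q')$ first-order stochastically dominates $\mathrm{Bin}(\tilde{n},q)$ for $q'\geq q$, so the expectation of the non-decreasing map $k\mapsto a_{k}$ is non-decreasing in $q$.

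Combined with $Q^{n}_{\alpha,\beta}(1,1,1,q)=0$, monotonicity gives $Q^{n}_{\alpha,\beta}(1,1,1,q')\geq 0$ for every $q'\in[q,1]$, so by the definition of $q^{*}$ the value $q$ is a valid threshold and $q^{*}\leq q$. For the reverse inequality $q^{*}\geq q$, I would show that every $\bar{q}<q$ fails to be a valid threshold, i.e.\ that $(\bar{q},q)\subseteq(\bar{q},1]$ contains a point where $Q^{n}_{\alpha,\beta}(1,1,1,\cdot)$ is strictly negative. Setting aside the degenerate case $Q^{n}_{\alpha,\beta}(1,1,1,\cdot)\equiv 0$ (i.e.\ all $a_{k}=0$, in which case $q^{*}=0$ trivially), the polynomial is not identically zero and hence has only finitely many roots; it therefore cannot vanish on the entire non-degenerate open interval $(\bar{q},q)$. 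By the monotonicity of Step~1, any non-root $q'\in(\bar{q},q)$ must satisfy $Q^{n}_{\alpha,\beta}(1,1,1,q')\leq Q^{n}_{\alpha,\beta}(1,1,1,q)=0$ with strict inequality, completing the argument.

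The heart of the proof is the stochastic-monotonicity step, which is essentially immediate from the assumed monotonicity of $(a_{k})$. The only mild subtlety to handle carefully is ruling out the degenerate case $Q^{n}_{\alpha,\beta}(1,1,1,\cdot)\equiv 0$; in the intended application to the uniform distribution this is dealt with by exhibiting a scenario (for instance, all-2-type adversaries) in which the split attack strictly helps, thereby forcing $a_{0}<0$ and making the polynomial genuinely non-constant.
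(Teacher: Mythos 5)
Your proof is correct and uses the same central idea as the paper's (one-line) proof: the monotonicity of $q\mapsto Q^{n}_{\alpha,\beta}(1,1,1,q)$ via first-order stochastic dominance of the binomial family in $q$ applied to the non-decreasing sequence $(a_k)$. Your Bernstein-derivative computation is an explicit, elementary justification of that same fact. Where you genuinely go beyond the paper's one-sentence argument is in carefully converting weak monotonicity into the two-sided strict statement needed by the definition of $q^*$: the paper simply asserts $Q(q')<0$ for $q'<q$ and $Q(q')>0$ for $q'>q$, which does not follow from stochastic dominance alone and silently presumes $Q\not\equiv 0$. Your finiteness-of-roots step and explicit handling of the degenerate case $Q\equiv 0$ (with the observation that in the uniform application $a_0<0$) close this small gap cleanly, so the proposal is a tighter version of the intended argument rather than a different one.
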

\begin{proof}
It suffices to prove that $\forall q'<q, Q_{\alpha, \beta}^{n}(1,1,1,q') < 0$ and $\forall q' > q, Q_{\alpha, \beta}^{n}(1,1,1,q') > 0$. This is due to first order stochastic dominance of a binomial distribution with a higher $q$ parameter over another binomial distribution with a lower $q$ parameter \cite{Wolfstetter:99}.
\end{proof}

\begin{lemma}
\label{lem:uniform_split_half}
$q^*_{n,g=2,\theta=1,x=1,y=1,F=UNI([0,1])} = \frac{1}{2}$ ($q^*$ is $\frac{1}{2}$ for the split attack under the uniform distribution)
\end{lemma}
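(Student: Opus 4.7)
The plan is to combine Lemma~\ref{lem:split_is_monotone_k} with Lemma~\ref{lem:monotone_condition}: since the sequence of differences $E^{k,F,n}_{(2,1),\mathrm{truth}}[u] - E^{k,F,n}_{(2,1),\mathrm{attack}(1,1)}[u]$ is non-decreasing in $k$, the combined expected-utility difference is monotone in $q$, so it suffices to exhibit one value of $q$ at which the truthful and split-attack expected utilities coincide. I will verify that $q = \tfrac{1}{2}$ is such a value.

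First I would specialize the utility expressions of Appendix~\ref{app:utilities_full_form} to $\theta = x = y = 1$. In this regime the attacker wins both items under both strategies, so the utility difference reduces entirely to a payment difference: the truthful payment equals $\max(v_1 + v_2, 2w_1)$, while the split-attack payment equals $2\max(1 + v_1, 2w_1) - 2 = \max(2v_1, 4w_1 - 2)$. Next I would exploit the very clean form of the joint distribution of the adversary order statistics at $q = \tfrac{1}{2}$ under $F = \mathrm{UNI}([0,1])$: for each adversary, independently, the event ``1-type with value $\leq t$ or 2-type with value $\leq s$'' has probability $(t+s)/2$, which gives the compact two-variable CDF $\Pr[v_1 \leq t,\, w_1 \leq s] = \bigl(\tfrac{t+s}{2}\bigr)^m$ with $m := \tilde{n}$. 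A parallel argument, refining each adversary's event into three sub-ranges (``1-type with value $\leq t_2$'' / ``1-type with $t_2 < \cdot \leq t_1$'' / ``2-type with value $\leq s$''), yields the three-variable joint CDF involving $v_2$.

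From the two-variable CDF I would derive the CDF of $\max(2v_1, 4w_1 - 2)$ on $[0,2]$ by setting $v_1 \leq u/2$ and $w_1 \leq (u+2)/4$, and then compute the expected attack payment by a single elementary integration. For the expected truth payment I would compute $G(u) := \Pr[v_1 + v_2 \leq u,\; w_1 \leq u/2]$ by conditioning on the number $k$ of 1-type adversaries and using the conditional CDF of the sum of the top two order statistics of $k$ i.i.d.~$U[0,1]$ variables. The computation splits naturally into the regions $u \leq 1$ and $1 < u \leq 2$; the second region can be handled cleanly using the binomial identity $\sum_{k=0}^m \binom{m}{k}(u-1)^k (u/2)^{m-k} = (3u/2 - 1)^m$, which is what collapses the sum into a closed form.

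Finally, I would integrate $\int_0^2 (1 - G(u))\, du$ over the two pieces to obtain a closed form for the expected truth payment, and verify algebraically that it matches the expected attack payment computed above. The hard part will be the case-split integration for the truth payment and the clean matching of the two closed forms as rational functions of $m$; the boundary values $k \in \{0,1\}$, where $v_2 = 0$ or $v_1 = v_2 = 0$, also require careful handling. Once the two expressions are shown to coincide, Lemma~\ref{lem:monotone_condition} immediately yields $q^*_{n,g=2,\theta=1,x=1,y=1,F=\mathrm{UNI}([0,1])} = \tfrac{1}{2}$.
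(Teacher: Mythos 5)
Your high-level structure is the same as the paper's: invoke Lemma~\ref{lem:split_is_monotone_k} together with Lemma~\ref{lem:monotone_condition} to reduce the claim to verifying that the expected utility difference vanishes at $q=\tfrac12$. But the verification itself is carried out by a genuinely different route. The paper works through the $k$-indexed quantities $Q_{1,1}^{k,n}(1,1,1)$ explicitly, bundles the middle terms into $S[\tilde n]$, derives a first-order linear recurrence for $S[\tilde n]$ using a holonomic-functions package, and then checks initial values. You instead observe that at $\theta=x=y=1$ both strategies win both items, so the utility difference is a pure payment difference, with truthful payment $\max(v_1+v_2,\,2w_1)$ and split-attack payment $\max(2v_1,\,4w_1-2)$; you then exploit the fact that at $q=\tfrac12$ each adversary independently satisfies ``1-type with value $\le t$ or 2-type with value $\le s$'' with probability $(t+s)/2$, so $\Pr[v_1\le t,\,w_1\le s]=\bigl(\tfrac{t+s}{2}\bigr)^{\tilde n}$, and analogously for the three-variable law involving $v_2$. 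Integrating the complementary CDFs of the two payments over $[0,2]$ then yields closed forms that coincide (indeed, carrying out your plan gives $\int_0^2 G_{\text{truth}}(u)\,du = \tfrac{8}{3(\tilde n+1)} - \tfrac{2}{3(\tilde n+1)}4^{-\tilde n} = \int_0^2 G_{\text{attack}}(u)\,du$). Your route is more elementary and fully self-contained --- no recurrence solving or computer-algebra package --- and it makes the role of $q=\tfrac12$ more transparent via the additive form of the joint CDF; the paper's route is the natural continuation of its earlier $F(\tilde n,k)$ computation but outsources the hard step to symbolic machinery. One small caveat to watch: the conditional law $\Pr[v_1+v_2\le u\mid k]=2(u/2)^k$ (for $u\le1$) does fail at $k=0$, where the correct value is $1$, not $2$; you flag this, and it must be handled separately in the $u\le1$ region (for $1<u\le2$ the formula $2(u/2)^k-(u-1)^k$ happens to give $1$ at both $k=0$ and $k=1$, so no correction is needed there). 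With that adjustment the computation closes, and the proposal is sound.
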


   \begin{proof}
   By Lemma~\ref{lem:split_is_monotone_k} and Lemma~\ref{lem:monotone_condition}, it is enough to find such $q$ that $Q_{\alpha, \beta}^{n}(1,1,1,q) = E^{k,F,n}_{(2,1),truth}[u] - E^{k,F,n}_{(2,1),attack(1,1)}[u] = 0$. We show that $q = \frac{1}{2}$ satisfies this equation.
   
   By direct calculation we have
\[
\begin{split}
& F(\tilde{n},k) \stackrel{def}{=} Q_{1,1}^{k,n}(1,1,1) = \begin{cases}
 \frac{1}{(\tilde{n}+1)2^{\tilde{n}-1}} - \frac{2}{\tilde{n}+1} & k = 0 \\
 \frac{8}{\tilde{n}(\tilde{n}+1)} - \frac{2}{\tilde{n}} -  \frac{3}{\tilde{n}(\tilde{n}+1)2^{\tilde{n}-1}} & k = 1 \\
 \frac{1}{\tilde{n}+1} & k = \tilde{n} \\
 \frac{2k}{2^{\tilde{n}-k}}\sum_{i=0}^{\tilde{n}-k}{\binom{\tilde{n}-k}{i}} (\frac{1}{i+k} + \frac{1}{i+k+1}) + \\
 \frac{2k(k-1)(\tilde{n}-k)}{(\tilde{n}+1)(\tilde{n}-k+1)2^{\tilde{n}-k+1}}\sum_{i=0}^{\tilde{n}-k+1}{\binom{\tilde{n}-k+1}{i}} \frac{1}{\tilde{n}-i} - \\
 \frac{2}{\tilde{n}} - \frac{4k(\tilde{n}-k)}{(\tilde{n}-k+1)2^{\tilde{n}-k+1}} \sum_{i=0}^{\tilde{n}-k+1}{\binom{\tilde{n}-k+1}{i}} \frac{1}{i+k} - \\
 \frac{k(k-1)}{(\tilde{n}+1)2^{\tilde{n}-k}}\sum_{i=0}^{\tilde{n}-k}{\binom{\tilde{n}-k}{i}} (\frac{1}{\tilde{n}-i} + \frac{1}{\tilde{n}-i-1}) & o/w. \\
\end{cases}
\end{split}
\]
   
With $q=\frac{1}{2}$, the sum 
$$S[\tilde{n}] \stackrel{def}{=} \sum_{k=2}^{\tilde{n}-1}{\binom{\tilde{n}}{k}}q^k(1-q)^{\tilde{n}-k}F(\tilde{n},k)$$ 
satisfies the recurrence 
\begin{equation}
\label{sum_recurrence}
    -2 (1 + \tilde{n}) S[\tilde{n}] + (2 + \tilde{n}) S[1 + \tilde{n}] = -7 + 3 \cdot 2^(1 - \tilde{n}) + 2 \tilde{n},
    \end{equation}. 
    
We solve the recurrence and check initial
$\tilde{n}$ values. It follows that it holds that 
$$S[\tilde{n}] = -\frac{F(\tilde{n},0) + \tilde{n}F(\tilde{n},1) + F(\tilde{n},\tilde{n})}{2^{\tilde{n}}},$$ 
which means $\forall n\geq 3$,
\[
\begin{split}
& Q_{\alpha, \beta}^{n}(1,1,1,\frac{1}{2}) = \frac{1}{2^{\tilde{n}}} \sum_{k=0}^{\tilde{n}}{\binom{\tilde{n}}{k}}F(\tilde{n},k) = \\
& S[\tilde{n}] + \frac{F(\tilde{n},0) + \tilde{n}F(\tilde{n},1) + F(\tilde{n},\tilde{n})}{2^{\tilde{n}}} = 0.
\end{split}
\]
\end{proof}
We used RISCErgoSum's HolonomicFunctions Mathematica package to derive and solve the recurrence in Equation~\ref{sum_recurrence} \cite{Koutschan09}.

\bibliographystyle{plain}
\bibliography{ref.bib}

\end{document}